\documentclass[11pt]{amsart}
    \setlength{\parskip}{0pt}

\let\oldphi\varphi \let\varphi\phi \let\phi\oldphi
\let\oldepsilon\varepsilon \let\varepsilon\epsilon \let\epsilon\oldepsilon

\usepackage{geometry}
\usepackage[toc,page]{appendix}
\usepackage{amsmath,mathtools, bm}
\usepackage{tensor}
\usepackage[bbgreekl]{mathbbol}
\usepackage{amsfonts}
\usepackage[utf8]{inputenc}
\usepackage{graphicx}
\usepackage[T1]{fontenc}
\usepackage[english]{babel}
\usepackage{csquotes}
\usepackage{graphicx}
\usepackage[dvipsnames]{xcolor} 
\usepackage{fancyhdr}
\setlength{\headheight}{15pt}
\usepackage{float}
\usepackage{afterpage}
\usepackage{placeins}
\usepackage{esvect}
\usepackage{multicol}
\usepackage{setspace}
\usepackage[font=small,labelfont=bf]{caption}
\usepackage{url}
\usepackage{musicography}
\usepackage{comment}
\usepackage[all,cmtip]{xy} 
\usepackage{tikz}
\usepackage{tikz-cd}
\usetikzlibrary{tqft}
\usetikzlibrary{positioning}
\usetikzlibrary{arrows.meta}
\usetikzlibrary{arrows}
\usepackage{xspace}
\usepackage{dsfont}
\usepackage{amssymb}
\usepackage{enumitem}
\usepackage{mdframed}
\usepackage{stmaryrd}

\usepackage{amsthm}
\DeclareMathAlphabet{\mymathbb}{U}{BOONDOX-ds}{m}{n}

\usepackage{xstring}
\usepackage{subfigure}
\usepackage{multirow}
\usepackage{hyperref}
\mathtoolsset{showonlyrefs = true}

\makeatletter
\newcommand{\phantompageref}{\def\@currentHref{page.\@the@H@page}}
\makeatother

\setcounter{secnumdepth}{4}
\setcounter{tocdepth}{3}

\graphicspath{
	{../pics/}
}

\makeindex

\maxdeadcycles=200



\makeatletter
\DeclareFontFamily{OMX}{MnSymbolE}{}
\DeclareSymbolFont{MnLargeSymbols}{OMX}{MnSymbolE}{m}{n}
\SetSymbolFont{MnLargeSymbols}{bold}{OMX}{MnSymbolE}{b}{n}
\DeclareFontShape{OMX}{MnSymbolE}{m}{n}{
    <-6>  MnSymbolE5
   <6-7>  MnSymbolE6
   <7-8>  MnSymbolE7
   <8-9>  MnSymbolE8
   <9-10> MnSymbolE9
  <10-12> MnSymbolE10
  <12->   MnSymbolE12
}{}
\DeclareFontShape{OMX}{MnSymbolE}{b}{n}{
    <-6>  MnSymbolE-Bold5
   <6-7>  MnSymbolE-Bold6
   <7-8>  MnSymbolE-Bold7
   <8-9>  MnSymbolE-Bold8
   <9-10> MnSymbolE-Bold9
  <10-12> MnSymbolE-Bold10
  <12->   MnSymbolE-Bold12
}{}

\let\llangle\@undefined
\let\rrangle\@undefined
\DeclareMathDelimiter{\llangle}{\mathopen}%
                     {MnLargeSymbols}{'164}{MnLargeSymbols}{'164}
\DeclareMathDelimiter{\rrangle}{\mathclose}%
                     {MnLargeSymbols}{'171}{MnLargeSymbols}{'171}
\makeatother


\definecolor{light-gray}{gray}{0.95}
\usepackage[color=light-gray]{todonotes}

\newcommand{\RR}{\ensuremath{\mathbb{R}}}

\newcommand{\AB}{\ensuremath{\mathbb{A}}}

\newcommand{\BB}{\ensuremath{\mathbb{B}}}
\newcommand{\QQ}{\ensuremath{\mathbb{Q}}}

\newcommand{\XX}{\ensuremath{\mathbb{X}}}

\newcommand{\HB}{\ensuremath{\mathbb{H}}}
\newcommand{\NN}{\ensuremath{\mathbb{N}}}
\newcommand{\ab}{\ensuremath{\mathbb{a}}}
\newcommand{\fb}{\ensuremath{\mathbb{f}}}

\newcommand{\EL}{\ensuremath{\mathcal{EL}}}
\newcommand{\ZC}{\ensuremath{\mathcal{Z}}}

\newcommand{\VV}{\ensuremath{\mathcal{V}}}

\newcommand{\BC}{\ensuremath{\mathcal{B}}}
\newcommand{\LL}{\ensuremath{\mathcal{L}}}
\newcommand{\FF}{\ensuremath{\mathcal{F}}}
\newcommand{\FC}{\ensuremath{\mathfrak{F}}}

\newcommand{\HH}{\ensuremath{\mathcal{H}}}
\newcommand{\SC}{\ensuremath{\mathcal{S}}}

\newcommand{\AC}{\ensuremath{\mathcal{A}}}

\newcommand{\VF}{\ensuremath{\mathfrak{X}}}
\newcommand{\gf}{\ensuremath{\mathfrak{g}}}
\newcommand{\ii}{\ensuremath{\mathfrak{i}}}

\newcommand{\so}{\ensuremath{\mathfrak{so}}}
\newcommand{\su}{\ensuremath{\mathfrak{su}}}

\newcommand{\df}{\ensuremath{\mathfrak{d}}}
\newcommand{\mm}{\ensuremath{\mathfrak{m}}}
\newcommand{\kk}{\ensuremath{\mathds{k}}}

\newcommand{\BV}{\ensuremath{\mathfrak{BV}}}

\newcommand{\qBF}{dq\emph{BF}\xspace}
\newcommand{\qGR}{qeGR\xspace}

\newcommand{\Ci}{\ensuremath{\mathcal{C}^{\infty}}}

\newcommand{\ddelta}[2]{\ensuremath{\frac{\delta #1}{\delta #2}}}

\renewcommand{\graph}{\mathrm{graph}}

\DeclareMathOperator\Kill{\mathcal{K}}
\newcommand{\killing}[2]{\ensuremath{\Kill\left( #1, #2 \right)}}
\newcommand{\pair}[2]{\ensuremath{\left\langle #1, #2 \right\rangle}}

\newcommand{\paird}[2]{\ensuremath{\left\llangle #1, #2 \right\rrangle_\df}}
\newcommand{\pairg}[2]{\ensuremath{\left\llangle #1, #2 \right\rrangle_\gf}}
\newcommand{\pairT}[2]{\ensuremath{\left\llangle #1, #2 \right\rrangle_T}}

\newcommand{\lie}[1]{\ensuremath{\LL_{#1}}}
\newcommand{\inp}{\ensuremath{\pair{\cdot}{\cdot}}}
\newcommand{\lieprod}[2]{\ensuremath{\left\langle #1 \wedge #2 \right\rangle}}

\newcommand{\must}{\overset{!}{=}}


\newcommand{\lra}{\longrightarrow}
\newcommand{\lla}{\longleftarrow}
\newcommand{\Lra}{\Longrightarrow}

\newcommand{\lmap}{\longmapsto}
\newcommand{\atob}[1]{\overset{#1}{\lra}}



\DeclareMathOperator\Loc{Loc}

\DeclareMathOperator\Aut{Aut}

\DeclareMathOperator\locf{loc}
\DeclareMathOperator\topf{top}

\DeclareMathOperator\sgn{sgn}
\DeclareMathOperator\id{id}
\DeclareMathOperator\tr{Tr}

\DeclareMathOperator{\im}{im}

\raggedbottom
\newtheoremstyle{note} 
{1.2em}                
{1.2em}                
{}                     
{}                     
{\scshape\bfseries}    
{ }                    
{.5em}                 
{}                     
\theoremstyle{note}
\newtheorem{theorem}{Theorem}[section]
\newtheorem{definition}[theorem]{Definition}
\newtheorem{notation}[theorem]{{Notation}}
\newtheorem{lemma}[theorem]{Lemma}
\newtheorem{corollary}[theorem]{Corollary}

\newtheorem{proposition}[theorem]{Proposition}
\newtheorem{calculation}[theorem]{Calculation}

\newtheorem{example}[theorem]{Example}
\newtheorem{assumption}[theorem]{Assumption}
\newtheorem{construction}[theorem]{Construction}

\newtheorem{remark}[theorem]{Remark}





\usepackage[backend=biber, giveninits=true, style=alphabetic, sorting=nyt,isbn=false, doi=false, maxalphanames=5,url=false, maxbibnames=99]{biblatex}
\addbibresource{references.bib}


\title[deformed $BF$ gravity]{Gravity with torsion as deformed $BF$ theory}

\author{Alberto S. Cattaneo}
\address{Institut f\"ur Mathematik, Universit\"at Z\"urich, Winterthurerstrasse 190, 8057 Z\"urich, Switzerland}
\email{cattaneo@math.uzh.ch}

\author{Leon Menger}
\address{Department of Mathematics, University of Notre Dame, 255 Hurley Bldg, Notre Dame, IN 46556, USA}
\email{leonmenger@nd.edu}

\author{Michele Schiavina}
\address{Department of Mathematics, University of Pavia, Via Ferrata 5, 27100 Pavia, Italy}
\address{INFN Sezione di Pavia, via Bassi 6, 27100 Pavia, Italy}
\email{michele.schiavina@unipv.it}
\date{}

\thanks{
ASC acknowledge partial support of the SNF Grant No. 200020 192080 and of the Simons Collaboration on Global Categorical Symmetries. This research was (partly) supported by the NCCR SwissMAP, funded by the Swiss National Science Foundation. This article is based upon work from COST Action 21109 CaLISTA, supported by COST (European Cooperation in Science and Technology) (www.cost.eu)}

\begin{document}

\maketitle
\thispagestyle{empty}

\begin{abstract}
    We study a family of (possibly non topological) deformations of $BF$ theory for the Lie algebra obtained by quadratic extension of $\mathfrak{so}(3,1)$ by an orthogonal module. The resulting theory, called quadratically extended General Relativity (\qGR), is shown to be classically equivalent to certain models of gravity with dynamical torsion. The classical equivalence is shown to promote to a stronger notion of equivalence within the Batalin--Vilkovisky formalism. In particular, both Palatini--Cartan gravity and a deformation thereof by a dynamical torsion term, called (quadratic) generalised Holst theory, are recovered from the standard Batalin--Vilkovisky formulation of \qGR
    by elimination of generalised auxiliary fields.
\end{abstract}

\tableofcontents

\section*{Introduction}

Topological models like $BF$ theory and deformations thereof,\footnote{These often go under the name of $BF + \Lambda BB$ theories, see \cite{Cattaneo1998}.} while known to have many applications in pure mathematics related to the study of topological invariants (see e.g.\ \cite{Schwarz,Pavelthesis,Cattaneo_2020,Hadfield_2020}), have also proven to be relevant in theoretical physics, especially in the formulation of gauge theories of gravity and their subsequent quantisation. For instance, in three spacetime dimensions, general relativity in the coframe formalism can be cast as a pure $BF$ theory, and is therefore topological. This fact was originally observed by \cite{carlip_1998,Witten3dgrav}, who noticed that the symmetries of $BF$ theory can be used to generate diffeomorphisms on-shell. A strong-equivalence result of the Hamiltonian dg-manifolds underlying the two theories was later proven in the Batalin--Vilkovisky (BV) formalism \cite{CattaneoSchiavinaSelliah2018,CanepaSchiavina19}, a result that also hints at the equivalence of the quantisation of the two theories. Indeed the quantisation of $BF$ theory in the BV formalism, pioneered by Cattaneo, Mnev and
Reshetikhin \cite{CMRCorfu, CattaneoMnevReshetikhin2014, CattaneoMnevReshetikhin2017}, can be used to inform our understanding of the quantisation of triadic gravity. (Note, however, that the equivalence only holds in the \emph{nondegenerate sector} of $BF$ theory.)

In four dimensions the situation is different: Gravity is no longer topological\footnote{By this we mean that there are two propagating, local, degrees of freedom in the phase space.} and can thus not be fully described by a $BF + \Lambda BB$ theory. There have been some attempts at retaining a formulation of gravity as a (deformed) $BF$ theory, via certain conditions on the $B$ field, usually called \emph{simplicity constraints} and enforced dynamically, which break the symmetries of the theory to allow for local degrees of freedom. The two main approaches in this vein are known as Plebanski theory \cite{Plebanski77} and MacDowell--Mansouri theory \cite{MacDowellMansouri1977} (for a review see \cite{Freidel2005, Freidel_2012}). However, it has been suggested that these constraints lead to difficulties when attempting to quantise the system \cite{Mikovi2006}. Consequently, it would be beneficial, both conceptually and for practical purposes, to find a formulation of tetradic gravity based on $BF$ theory together with a symmetry breaking mechanism with better properties.

Another interesting aspect which has received increasing attention lately is the role of torsion in theories of gravity. Together with curvature and non-metricity, it is one of the three geometric quantities that can be used to formulate a fully general theory of tetradic gravity. While both Einstein--Hilbert \cite{Hilbert1915} and Einstein--Palatini \cite{HehlReview76,FFR} models require the torsion to vanish, and thus feature the unique metric-compatible Levi-Civita connection, there exists an equivalent formulation of gravity where curvature is required to vanish, but torsion is allowed to take on non-zero values. This is known as \emph{Teleparallel Gravity} \cite{ArcosPereira2004, Hammond2002, Hohmann2018}. Meanwhile, \emph{Palatini--Cartan gravity}, another classically equivalent theory, imposes neither vanishing torsion nor vanishing curvature from the get-go but rather obtains vanishing torsion ``dynamically'', that is to say: as an algebraic field equation. 

In recent years, the role of torsion terms in tetradic gravity theories has been thoroughly investigated, and several proposals have been put forward, linking torsion effects to the evolution of the early universe and its late accelerated expansion \cite{Poplawski2013Cosmo}, as well as coupling gravity to non-integer spin matter \cite{Poplawski2013Spin, Diether2020} or even investigating dark matter in the presence of quadratic torsion terms \cite{Megier2020}. Thus, while on one hand it is desirable for any model of gravity not to hinge on restrictions of either curvature or torsion, on the other hand it seems like torsion terms, linear or quadratic\footnote{The terms ``linear'' and ``quadratic'' {refer to whether the dynamical torsion field appears linearly or quadratically in the action functional.}}, should receive more attention from a field theoretic perspective. From this point of view, our search for a $BF$-type model for general relativity will naturally output a theory that is sensitive to non-trivial torsion.

In this paper we present a novel formulation of gravity as a symmetry-broken deformation of $BF$ theory. To this end we use the notion of (trivial\footnote{We will present only the most trivial extension possible, subsequent obvious generalisations being straightforward. See Appendix \ref{App:QuadLie}.}) standard quadratic Lie algebra extensions introduced in \cite{KathOlbrich2006, KathOlbrich2007} and consider a $BF + \Lambda BB$ theory for a quadratically extended Lie algebra to be our basic (topological) model. We call the resulting model \emph{quadratically extended $BF$ theory} (qBF). We then introduce a fairly general (quadratic) deformation term, leading us to \emph{deformed, quadratically extended, $BF$ theory} (\qBF), which overall depends on two real parameters and a Lie algebra automorphism $T$, called deformation automorphism. In order to obtain a theory that can reasonably model general relativity, we specialise \qBF to the trivial quadratic extension of the Lie algebra $\so(3,1)$ and call the resulting theory \emph{quadratically extended general relativity} (\qGR).

Our analysis shows that this model is general enough to describe both ``topological sectors'' as well as ``gravitational sectors'', meaning that by varying the deformation (and the parameters), the properties of the model---i.e.\ degrees of freedom, symmetries and equations of motion---change to describe other models of interests, some of which are models of gravity.

Indeed, within certain ranges of the deformation parameters, one is able to break just enough (shift) symmetries to obtain non-topological theories with precisely $2$ local degrees of freedom (in configuration space), corresponding to the propagating degrees of freedom of the graviton. In particular, we show that one can recover both Palatini--Cartan--Holst gravity (up to a boundary term) and a deformation thereof, modified by the presence of torsion as a dynamical field. This justifies the name ``gravitational sector'' of \qGR. We note, in passing, that \qGR implements (a modified version of) the simplicity constraints on the $B$ field.\footnote{This is already true for quadratically extended $BF$ theory.}

In order to substantiate what we mean by ``recovering'' a theory from another one, this work uses the Batalin--Vilkovisky (BV) formalism, which encodes---via cohomological resolutions---the space of inequivalent physical configurations of the theory (the moduli space). To a classical field theory with symmetries, the BV framework associates a Hamiltonian dg manifold $\FC=(\FF, \varpi,S,Q)$, comprised of a $(-1)$-symplectic graded manifold $(\FF, \varpi)$ together with an action functional $S$ over $\FF$ whose Hamiltonian vector field $Q$ has the cohomological property (i.e.\ $[Q,Q]=0$). Using $Q$ one can then define the BV complex of the theory, which resolves the original moduli space.\footnote{By this we mean that the cohomology in degree zero of the BV complex is (a replacement of) the space of functions functions over the moduli space, which is often singular.}

It is natural to phrase equivalence of field theories in this language. One can however formulate several inequivalent notions of equivalence, of increasing strength, three of which are directly important for this work. \emph{Classical BV equivalence} demands that the BV-cohomologies of two  theories agree in degree $0$, i.e.\ for classical fields. This weakest form of equivalence can be investigated using only the classical Lagrangian field theory data and usually provides a starting point for investigating stronger equivalences. A finer notion is \emph{BV equivalence}, which requires the BV complexes of two BV theories to be quasi-isomorphic, i.e.\ that there be a map of Hamiltonian dg manifolds, inducing an isomorphism between their associated cohomologies in all degrees (for the precise notion see Definition \ref{def:weakEQ}). Finally, we mention \emph{strong BV equivalence}, which is an isomorphism of Hamiltonian dg manifolds. A fitting non-trivial example of the latter is the strong BV equivalence between $BF$ theory and triadic gravity, proven in \cite{CattaneoSchiavinaSelliah2018}, which clarifies in what sense general relativity is essentially $BF$ theory in three dimensions. See also \cite{CanepaSchiavina19}.

In order to establish that two theories are BV equivalent, a handy notion is that of auxiliary fields \cite{HENNEAUX1990,Barnich1995}, which essentially encodes the idea that two theories may differ by certain field configurations that are only used to present certain physical content in a particularly useful way, but which are otherwise ``physically irrelevant''. In the language of resolutions, this is translated in the statement that said fields have an associated trivial cohomology and are ``contractible'', so they can be eliminated without changing the underlying physics. {(For further details on this procedure and its applications, see \cite{Barnich2005,Barnich_2011,Grigoriev_2011}.)}

In what constitutes the main part of this work, we identify and eliminate as many auxiliary fields as possible in the BV formulation of \qGR (for certain appropriate choices of parameters and deformations), and find BV-equivalent theories so that the classical equivalences are promoted to BV equivalences. In particular, for a certain choice of parameters and deformations, we find a BV equivalence between \qGR and Palatini--Cartan--Holst theory, modulo a boundary term that corresponds to the Nie-Yan class (cf.\ Proposition \ref{prop:qgH_PC} and Remark \ref{rem:gHlikePCH}). 

Our main result is Theorem \ref{theo:WeakBVdiagram}, which describes general relativity as an instance of \qGR theory, in the sense that the latter is BV-equivalent to both quadratic generalised Holst (qgH) and generalised Holst (gH) theory (Definition \ref{def:ClassicalgH}), both of which are slightly generalised versions of very well-known models for coframe gravity. 

The action functional of gH theory features a quadratic torsion term $\eta(d_A e, d_A e)$, where $e$ is a tetrad and $A$ an $\so(3,1)$ connection. As such, it is a good starting point for further investigations into the role of such terms in relation to known theories of gravity. We will show in Proposition \ref{prop:gH_PCH} that the quadratic torsion term $\eta(d_A e, d_A e)$ corresponds to the topological Nieh--Yan term and can be written as a boundary term plus a topological contribution to the action. Due to the general form of the family of theories defined in this work, this gives a rather general result on the role of quadratic torsion terms of this form.

{We also analyse the classical boundary structure of \qBF (in the sense of Section \ref{subsubsec:KTconstruction}) and compute the associated constraint algebra. We find that the constraint set of the theory is first class only in the topological sector, simultaneously hinting at the fact that: 1.\ including dynamical torsion fields term makes the Cauchy problem more involved, and 2.\ that the simplicity constraint---which reduces the field $B$ to the quadratic tensor $e\wedge e$ for $e$ a tetrad---is second class and as such not easily quantisable. For further details see Section \ref{subsec:qBFboundary} and Remark \ref{rmk:firstvssecondclass}.}

The paper is organised as follows. In Section \ref{subsec:QuadLie} and Section \ref{subsec:BVformalism} we briefly introduce the constructions of standard Lie algebra extensions and the basic notions of the BV formalism. In Section \ref{sec:ClassicalAnalysis} we formulate quadratically extended GR (\qGR), the main subject of this work, and conduct a thorough classical analysis including relations to theories and modifications of gravity. Finally in Section \ref{sec:BVtheories} we formulate the BV theory of \qGR and find BV equivalences providing direct relations to the BV data of PCH gravity.

\subsection*{Torsion in Theories of Gravity}

In a field theory that depends on a principal connection $A$ and a coframe field $e$ (a tetrad or vierbein) there are two derived, gauge covariant, quantities of immediate interest: The curvature $F_A$ and the torsion form $d_A e$.

Depending on the properties of curvature and torsion that are assumed from the get-go (and not as dynamical field equations), we classify theories of gravity in the following way\footnote{For a more in-depth discussion see the introduction of \cite{ArcosPereira2004}.}:
\begin{itemize}
    \item {Impose vanishing torsion $d_A e = 0$:} In this situation one can make use of the fundamental theorem of Riemannian geometry \cite{lee2019introduction} to conclude that the connection must be the unique compatible \textit{Levi-Civita connection}. This is the basis for Einstein--Hilbert gravity. This is sometimes, and perhaps more appropriately, called Einstein--Cartan theory (see e.g. \cite{krasnov_2020}) to highlight the use of coframes, as opposed to metrics.

    \item {Impose vanishing curvature $F_A = 0$:} Such connections are called \textit{Weizenböck connections}. They allow for a force-based description of gravity called \textit{Teleparallel gravity} (for contemporary reviews see \cite{ArcosPereira2004, Hammond2002}).

    \item {No restrictions:} This allows for both torsion and curvature effects. The simplest form of such a theory is Palatini--Cartan gravity. (Note that, when the $SO(3,1)$ connection is left free to vary but within a metric formulation, one often speaks of \emph{Einstein--Palatini} gravity.\footnote{For a historical review of appropriate nomenclature and attribution, see \cite{HehlReview76,FFR}.})
\end{itemize}

In physical literature, it is usually tacitly assumed that the underlying connection in a theory of gravity is the Levi-Civita connection, and theories not assuming this are considered \textit{modifications of gravity}. Over the past years modifications using torsion terms have seen an increase in popularity. From the evolution of the early and late universe \cite{Poplawski2012, Poplawski2013, Ivanov2016, Cubero2019}, the amalgamation of fermionic particles with gravity \cite{Poplawski2013, Tecchiolli19, Diether2020} to attempts at explaining Dark Matter \cite{Megier2020}. There are a plethora of applications where torsion modifications beyond the usual Teleparallel and Palatini--Cartan gravity are considered. Note that other schools like entropic gravity \cite{Carroll2016} or modified Newton dynamics \cite{Scarpa2006} are also rising in popularity. This work however will only yield results about torsion modifications.

\section*{Acknowledgements}
This work stems from one of the authors' master thesis at ETH Zurich.

\section{Background}

This section introduces the concepts underlying the core results of the paper. In Section \ref{subsec:QuadLie} the basic concepts of quadratic Lie algebra extensions are introduced. For this we follow \cite{KathOlbrich2006, KathOlbrich2007} where the (trivial) standard extensions used in this work have first been constructed and discussed.

Section \ref{subsec:ClassicalFormalism} is dedicated to the introduction of classical Lagrangian field theories, their classical boundary structure and reduced phase space. For this part we refer to \cite{AndersonBicomplex, Deligne++} and to \cite{KijTul1979} for the investigation of the boundary.

In Section \ref{subsec:BVformalism} we discuss the fundamentals of the BV formalism, with a special focus on equivalences. More detailed introductions together with motivations for the construction of the BV complex can be found in \cite{BATALIN1977, BATALIN1983, CattaneoMoshayedi2020, CattaneoMnevReshetikhin2017, CattaneoMnevReshetikhin2014, MnevLectures, Stasheff97}.

\subsection{Trivial Standard Extensions}
\label{subsec:QuadLie}

This section presents the construction of a metric Lie algebra from any finite-dimensional Lie algebra $\gf$ and an orthogonal $\gf$-module. We briefly review \cite[Sec. 2]{KathOlbrich2006} whose definitions and results we use throughout.

\begin{definition}[Metric Lie Algebras]
  A metric Lie algebra\index{Metric Lie algebra} over a field $\kk$ is a pair $(\gf, \pair{\cdot}{\cdot})$ where $\gf$ is a Lie algebra and
  \begin{equation}
    \pair{\cdot}{\cdot} \colon \gf \otimes \gf \lra \kk
  \end{equation}
  is a bilinear form that is symmetric, invariant under the adjoint action of $\gf$ on itself and non-degenerate in that the induced morphism $\gf \lra \gf^*$ is a linear isomorphism of Lie algebras.
\end{definition}

\begin{definition}[Orthogonal Lie Algebra Modules]
\label{def:OrthLieModule}
  Let $\gf$ be a finite-dimensional Lie algebra, $V$ a finite-dimensional real vector space equipped with an invariant symmetric bilinear form $\pair{\cdot}{\cdot}_V$ and $\rho \colon \gf \lra \so(V)$ a representation of $\gf$ on $V$. The invariance of $\pair{\cdot}{\cdot}_V$ is expressed algebraically for arbitrary $g \in \gf$ and $v,w \in V$ as
  \begin{equation}
    \pair{\rho(g)v}{w}_V + \pair{v}{\rho(g)w}_V = 0.
  \end{equation}
  If additionally $\pair{\cdot}{\cdot}_V$ is non-degenerate call the triple $(\rho, V, \pair{\cdot}{\cdot}_V)$ an orthogonal $\gf$-module\index{Orthogonal lie algebra module}.
\end{definition}

\begin{remark}
  From now on let $\gf$ be a finite-dimensional Lie algebra and $(\rho, V, \inp_V)$ an orthogonal $\gf$-module, thought of as an abelian metric Lie algebra.
\end{remark}

Combining the wedge product $\wedge$ on Chevalley--Eilenberg cochains $C^\bullet_{CE}(\gf; V)$ of $\gf$ and the non-degenerate symmetric pairing $\pair{\cdot}{\cdot}_V$ on $V$ one obtains a bilinear multiplication map 

\begin{align*}
\langle \cdot \wedge \cdot \rangle \colon C_{CE}^i(\gf; V) \times C_{CE}^j(\gf; V) \atob{\wedge} C_{CE}^{i+j}(\gf; V \otimes V) &\atob{\pair{\cdot}{\cdot}_V} C_{CE}^{i+j}(\gf),\\
  (\alpha, \beta) &\lmap \langle \alpha \wedge \beta \rangle.
\end{align*}

\begin{definition}[Quadratic Extensions]
  A {quadratic extension}\index{quadratic extension} of $\gf$ by the orthogonal $\gf$-module $(\rho, V, \inp_V)$ is given by a tuple $(\mm, \ii, i,p)$ such that
  \begin{itemize}
    \item $(\mm, \inp)$ is a metric Lie algebra,
    \item $\ii \subset \mm$ is an isotropic ideal, i.e. $\inp|_\ii = 0$ and $[\ii, \mm] \subset \ii$
    \item $i$ and $p$ are Lie algebra homomorphisms constituting a short exact sequence of Lie algebras
    \begin{equation}
    \label{eq:abcdefg}
      0 \lra V \atob{i} \mm/\ii \atob{p} \gf \lra 0,
    \end{equation}
    consistent with $\rho$ in that for all $\widetilde{L} \in \mm/\ii$ with $p(\widetilde{L}) = L \in \gf$ and $v \in V$
    \begin{equation}
      i(\rho(L) v) = [\widetilde{L}, i(v)] \in i(v).
    \end{equation}
    Additionally we require $\im(i) = \ii^\perp/\ii$ and $i \colon V \lra \ii^\perp/\ii$ is an isometry.
  \end{itemize}
\end{definition}

\begin{remark}
  Note that the above characterisation of a quadratic Lie algebra extension is ultimately given by a short exact sequence. If we assume $V$ to be an abelian Lie group that additionally is isotropic in $\mm/\ii$ the quadratic extension is a central extension.
\end{remark}

The following construction of quadratic Lie algebra extensions is a special case of the one presented in \cite{KathOlbrich2006} and requires only a Lie algebra and an orthogonal module. For the full procedure, see Section \ref{App:QuadLie}.

\begin{definition}[Trivial Standard Extensions]
\label{def:StandardExtension_triv}
  Define the vector space $\df \coloneqq \gf^* \oplus V \oplus \gf$ and an inner product $\inp_\df \colon \df \times \df \lra \kk$ by
  \begin{equation}
    (z_1 + v_1 + g_1) \times (z_2 + v_2 + g_2) \lmap \pair{v_1}{v_2}_V + z_1(g_2) + z_2(g_1).
  \end{equation}
  Further denote by $[\cdot, \cdot]_\df \colon \df \times \df \lra \df$ the unique antisymmetric bilinear assignment such that
  \begin{enumerate}
    \item $\pair{[X,Y]_\df}{Z}_\df = \pair{X}{[Z,Y]_\df}_\df$ for all $X,Y,Z \in \df$,
    \item $[\df, \gf^*]_\df \subset \gf^*$, $[\gf^*, \gf^*]_\df = 0$, $[V, V]_\df \subset \gf^*$,
    \item $[g_1, g_2]_\df = [g_1, g_2]_\gf$ for all $g_1, g_2 \in \gf$,
    \item $\pair{[g, v_1]_\df}{v_2}_\df = \pair{\rho(g) v_1}{v_2}_\df$ for all $g \in \gf$ and $v_1, v_2 \in V$.
    \end{enumerate}
    We call this the \emph{trivial standard extension} of $\gf$ by $V$ and denote it by $(\df(\gf, V, \rho), \gf^*, i, p)$ or simply as the underlying metric Lie algebra $\df(\gf, V, \rho) \coloneqq (\df, [\cdot, \cdot]_\df, \inp_\df)$, for short.

    Note that $i \colon V \lra V \oplus \gf$ and $p \colon V \oplus \gf \lra \gf$ are given by the canonical injection and projection. (In the above equations the canonical inclusions of $\gf^*, \gf$ and $V$ into $\df$ were omitted to clean up the equations.)
\end{definition}

The definition of trivial standard extensions is justified by the following result:

\begin{lemma}[of {\cite[Proposition 3.1, 3.2]{KathOlbrich2006}}]
  The antisymmetric bilinear assignment $[\cdot, \cdot]_\df$ constructed in Definition \ref{def:StandardExtension_triv} is unique. The tuple $\df(\gf, V, \rho) \coloneqq (\df, [\cdot, \cdot]_\df, \inp_\df)$ defines a metric Lie algebra and a quadratic extension of $\gf$ by $(\rho, V, \inp_V)$.
\end{lemma}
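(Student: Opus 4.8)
The plan is to verify the three assertions of the lemma in order, treating uniqueness first, then the metric Lie algebra axioms (antisymmetry is built in, so the work is the Jacobi identity together with invariance of $\inp_\df$), and finally the identification of the data $(\df,\gf^*,i,p)$ with a quadratic extension in the sense of the preceding definition. Throughout I would decompose every element of $\df=\gf^*\oplus V\oplus\gf$ into its three homogeneous components and exploit the fact that the $\gf$-part of $\df$ is a Lie subalgebra isomorphic to $\gf$ by (3), that $\gf^*$ is an abelian ideal by (2), and that $V$ is acted on by $\gf$ through $\rho$ by (4).

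First, for \emph{uniqueness} I would argue that conditions (1)--(4) pin down every bracket $[X,Y]_\df$ by bilinearity once we know it on pairs of homogeneous elements. The brackets $[\gf,\gf]_\df$, $[\gf,V]_\df$, $[\gf^*,\gf^*]_\df$ are fixed outright by (3), (4) (using non-degeneracy of $\inp_V$ to recover $[g,v]_\df\in V$ from its pairings with all $v'\in V$), and (2). For the remaining brackets $[\gf,\gf^*]_\df$, $[V,V]_\df$, $[V,\gf^*]_\df$ one uses the invariance identity (1): pairing $[\gf,\gf^*]_\df$ (which lies in $\gf^*$ by (2)) against an arbitrary $g'\in\gf$ gives $\pair{[g,z]_\df}{g'}_\df=\pair{z}{[g',g]_\df}_\df=-z([g,g']_\gf)$, which determines $[g,z]_\df\in\gf^*$ since $\inp_\df$ restricted to $\gf^*\times\gf$ is the evaluation pairing and is non-degenerate. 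Similarly $[v_1,v_2]_\df\in\gf^*$ is determined by pairing with $g\in\gf$ via (1) and (4), and $[V,\gf^*]_\df\subset[V,\mm]_\df$; pairing against $V$ and against $\gf$ and using (1), (2) forces it to vanish. This also shows the assignment is well defined (the pairings used are all non-degenerate on the relevant components).

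Next, for the \emph{metric Lie algebra} claim, antisymmetry holds by construction and $\inp_\df$ is manifestly symmetric and non-degenerate (it is a hyperbolic pairing on $\gf^*\oplus\gf$ plus the non-degenerate $\inp_V$), with $\df\to\df^*$ the obvious isomorphism. Invariance $\pair{[X,Y]_\df}{Z}_\df=\pair{X}{[Z,Y]_\df}_\df$ is exactly (1). The only genuine computation is the Jacobi identity; here I would use the standard fact that, given antisymmetry, the Jacobiator $J(X,Y,Z)=[[X,Y]_\df,Z]_\df+\text{cyc.}$ is totally antisymmetric, and that invariance of $\inp_\df$ implies $\pair{J(X,Y,Z)}{W}_\df$ is totally antisymmetric in all four arguments. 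By non-degeneracy it then suffices to check $\pair{J(X,Y,Z)}{W}_\df=0$ on a spanning set of quadruples of homogeneous elements, and by the $4$-fold antisymmetry one only needs the cases sorted by how many arguments lie in $\gf$ versus $V$ versus $\gf^*$. The case of three or four $\gf$-arguments reduces to the Jacobi identity in $\gf$ via (3); the cases involving $\gf^*$ collapse because $[\gf^*,\gf^*]_\df=0$ and the pairing $\inp_\df$ sends $\gf^*\times\gf^*$ to $0$; and the mixed $\gf$-$V$ cases reduce to the representation property of $\rho$ and the invariance of $\inp_V$ from Definition \ref{def:OrthLieModule}. This bookkeeping is the main obstacle: it is not deep, but one must organise the finitely many component types carefully to be sure nothing is missed — this is precisely why I would route everything through the $4$-fold antisymmetric scalar $\pair{J}{W}_\df$ rather than computing $J$ directly.

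Finally, for the \emph{quadratic extension} claim, I would check the bullet points of the definition against Definition \ref{def:StandardExtension_triv}. The ideal $\ii=\gf^*$ is isotropic since $\inp_\df|_{\gf^*}=0$, and $[\gf^*,\df]_\df\subset\gf^*$ by (2) makes it an ideal. Its orthogonal complement is $\ii^\perp=\gf^*\oplus V$ (an element $z+v+g$ pairs to zero with all of $\gf^*$ iff $g=0$), so $\ii^\perp/\ii\cong V$ and $i\colon V\to\mm/\ii$ has image exactly $\ii^\perp/\ii$; moreover $i$ is an isometry because $\inp_\df$ restricted to the $V$-component is $\inp_V$. The sequence $0\to V\xrightarrow{i}\mm/\ii\xrightarrow{p}\gf\to 0$ is exact because $\mm/\ii=V\oplus\gf$ with $i,p$ the canonical inclusion and projection, and $i,p$ are Lie algebra maps by (3) and the fact that $V$ maps to an abelian subquotient. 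Consistency with $\rho$, namely $i(\rho(L)v)=[\widetilde L,i(v)]$ in $\mm/\ii$, follows from (4) together with non-degeneracy of $\inp_V$: (4) says $\pair{[g,v_1]_\df}{v_2}_\df=\pair{\rho(g)v_1}{v_2}_V$ for all $v_2$, which identifies the $V$-component of $[g,v_1]_\df$ with $\rho(g)v_1$, and modulo $\ii=\gf^*$ the bracket $[g,v_1]_\df$ is precisely that $V$-component. This completes the verification, and since the construction is a special case of the one in \cite{KathOlbrich2006}, I would remark that any point not made explicit here is covered by \cite[Proposition 3.1, 3.2]{KathOlbrich2006}.
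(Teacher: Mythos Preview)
Your proof sketch is correct and well organised. Note, however, that the paper does not actually supply a proof of this lemma: it is stated as a consequence of \cite[Proposition 3.1, 3.2]{KathOlbrich2006} and left at that, both here and in the more general form in Appendix~\ref{App:QuadLie}. So there is nothing in the paper to compare against beyond the bare citation.

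Your direct verification is the standard one and would serve as a self-contained replacement for the citation. One small point worth making explicit in your uniqueness step: when you invoke (4) to determine $[g,v]_\df$, condition (4) alone only fixes the $V$-component; you should also observe (via (1) and the vanishing of $\inp_\df$ on $V\times\gf$ and $V\times\gf^*$) that the $\gf$- and $\gf^*$-components of $[g,v]_\df$ are forced to be zero. You implicitly rely on this but do not state it. Otherwise the argument is complete.
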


\begin{remark}
  This work makes use of the trivial standard extension of a Lie algebra. There does however exist a more general extension procedure that depends on the choice of elements of the quadratic cohomology of the underlying Lie algebra. This procedure is described in Section \ref{App:QuadLie} and evidently provides an immediate path for a subsequent generalisation of the (trivially-)quadratically extended $BF$ theory we define in this paper.
\end{remark}

\subsection{Classical Lagrangian Field Theory}
\label{subsec:ClassicalFormalism}

In this section we will introduce the basic notions of classical Lagrangian field theory with local symmetries, and the study of their induced boundary data. This will serve as the core for the investigations in Section \ref{sec:ClassicalAnalysis}. As indicated, we follow \cite{AndersonBicomplex, Deligne++, BlohmannLFT} for the classical Lagrangian field theory, and refer to \cite{KijTul1979,CMRCorfu,CattaneoMnevReshetikhin2014,Giachetta1995, Moshayedi2022} for the construction of boundary data, as described in Section \ref{subsubsec:KTconstruction}.

Throughout, $M$ will be a smooth manifold, possibly with boundary. A local Lagrangian field theory is the assignment, to $M$, of the following data: A space of fields $\FF$ and a \emph{local} functional $S\colon \FF\to\RR$, the action functional. Locality is encoded by the fact that $\FF=\Gamma(M,F)$ is the space of sections of a (possibly graded) fibre bundle\footnote{Often one can simply work with vector bundles, but in our applications we need to consider open submanifolds of vector bundles due to the nondegeneracy conditions proper of general relativity. The grading will be useful to encode BV data, see Section \ref{sec:BVtheories}.} $F\to M$, and local forms are intuitively dependent on the fields and their derivatives up to some finite order $k$, as detailed here:

\begin{definition}[Local Forms and Local Functionals \cite{Zuckermann1987,AndersonBicomplex,BlohmannLFT}]
\label{def:local_forms_functionals}
    Let $F \atob{\pi} M$ be a (graded) fibre bundle over a smooth, compact, manifold $M$. Define $\FF \coloneqq \Gamma(M, F)$, $J^k(F)$ as the $k$-th Jet bundle. Further define the jet evaluation maps 
    \[
    j^k \colon \FF \times M \lra J^k(F), \qquad (\phi,x) \mapsto j^k\phi(x)
    \]
    where $j^k\phi$ is the kth jet equivalence class, and denote by $j^\infty$ their projective limit. Similarly construct the {infinite jet bundle} as the projective limit of the following sequence:
    \begin{align}
        F = J^0(F) \lla J^1(F) \lla \cdots \lla J^k(F) \lla \cdots
    \end{align}
    Now finally define the {bicomplex of local forms} on $\FF \times M$ as
    \begin{align}
        (\Omega_{\locf}^{\bullet, \bullet}(\FF \times M), \delta, d) \coloneqq (j^\infty)^* (\Omega^{\bullet, \bullet}(J^\infty(F)), d_V, d_H).
    \end{align}
    Here $d_V$ and $d_H$ denote, respectively the \emph{vertical} and the \emph{horizontal} differentials on the variational bicomplex. Furthermore the differentials $\delta$ and $d$ are defined for $\alpha \in \Omega^{\bullet, \bullet}(J^\infty(F))$ via
    \begin{align}
        \delta (j^\infty)^* \alpha = (j^\infty)^* d_V \alpha,
        d (j^\infty)^* \alpha = (j^\infty)^* d_H \alpha.
    \end{align}
    The complex of integrated local $k$-forms is the image of the map
    \[
    \int_M\colon \Omega_{\locf}^{k, \topf}(\FF\times M) \to \Omega_{\int}^{k}(\FF)
    \]
    with vertical differential $\delta$. Elements of $\Omega_{\locf}^{0, \bullet}(\FF \times M)$ (and $\Omega_{\int}^{0}(\FF)$) will be called (integrated) local functionals.
\end{definition}

\begin{definition}[Classical Field Theories]
    A {classical field theory} on a smooth manifold $M$ is a pair $(\FF, S)$ where the \emph{space of fields} $\FF$ is as above and the {action} is an integrated local functional $S \in \Omega_{\locf}^{0}(\FF)$, i.e.\ there exists a \emph{Lagrangian density} $L \in \Omega_{\locf}^{0, \topf}(\FF \times M)$ such that $S = \int_M L$.
    To any element $L\in\Omega_{\locf}^{0, \topf}(\FF \times M)$ one can associate\footnote{This is a Theorem of Zuckermann \cite[Theorem 3]{Zuckermann1987}. See also \emph{ibid.} for further details.} a local \emph{Euler--Lagrange} form $\mathsf{el}\in \Omega_{\locf}^{1,\topf}(\FF\times M)$ and a \emph{boundary} form $\theta\in\Omega_{\locf}^{1,\topf -1}(\FF\times M)$ such that
    \begin{align}
     \delta L = \mathsf{el} + d \theta.
    \end{align}
    An element $\phi\in\FF$ is called a field configuration, and it is said to be extremal if and only if $\mathsf{el}(\phi)=0$.
    Given the  classical field theory  $(\FF, S)$, its {critical locus} is the zero locus $\EL \coloneqq \Loc_0(\mathsf{el}) \subset \FF$ of the Euler Lagrange form $\mathsf{el}$ associated to the Lagrangian of the theory. We call the elements of $\EL$ {classical solutions}.
\end{definition}

An important concept in field theory that will be relevant for us is that of \emph{local/gauge symmetry}. In accordance with the notion of locality at the core of our framework we define

\begin{definition}[Infinitesimal Local Symmetries]
    Let $(\FF, S)$ be a classical field theory. We call a section $V \in \Gamma(\FF, T\FF)$ an {infinitesimal local symmetry} of the theory if
    \begin{align}
        \lie{V} L = dY_V,
    \end{align}
    for some local form $Y_V\in \Omega_{\locf}^{0,\topf}(\FF\times M)$.
\end{definition}

\begin{remark}\label{rmk:non-trivialsymmetries}
    The set of all symmetries forms a subalgebra of the algebra of vector fields $\widetilde{\mathcal{D}} \subseteq \Gamma(\FF,T\FF)$. Hovever, among such symmetries there are \emph{trivial} ones, denoted $\mathcal{D}_0$, which identically vanish on $\EL$ and form a subalgebra of all symmetries. A choice of non-trivial symmetries for the field theory $(\mathcal{F},S)$ is a choice of embedding of all symmetries modulo trivial ones $\mathcal{D}=\widetilde{\mathcal{D}}/\mathcal{D}_0 \hookrightarrow \widetilde{\mathcal{D}}$. One can show that, in general, this embedding forms an involutive distribution \emph{only on $\EL$} \cite{HenneauxTeitelboim1992}. From its very definition it is easy to see that an (infinitesimal) local symmetry leaves the critical locus $\EL$ invariant. Hence, classical solutions that are related by a symmetry describe the same physics, and the space of classical observables is some suitable model for $C^\infty(\EL)^{\mathcal{D}} \simeq C^\infty(\EL/\mathcal{D})$. Generally, however, both $\EL$ and the quotient $\EL/\mathcal{D}$ can develop singularities, and one does not have easy access to geometry over these spaces.
    This is one of the reasons why we treat Lagrangian field theory with symmetries in the BV formalism, as it instead builds a replacement for $C^\infty(\EL/\mathcal{D})$ in terms of the cohomology of an appropriately constructed complex. (See Section \ref{sec:BVtheories}.)
\end{remark}

Another pivotal point for field theory and this paper is the equivalence of two theories, for now on a classical level:

\begin{definition}[Classical Equivalences]
\label{def:ClassicalEQclass}
    Let $(\FF_i, S_i)$ for $i = 1,2$ be two classical field theories with spaces of classical solutions $\EL_i$ and symmetry spaces $D_i$. We say that the two theories are {classically equivalent} if\footnote{{When a smooth description of the quotient spaces is not available, we resort to the algebraic version of this equivalence, $\Ci(\EL_1 / D_1 |_{\EL_1}) \simeq \Ci(\EL_2 / D_2 |_{\EL_2})$, given an appropriate model for $\Ci(\EL_i / D_i |_{\EL_i})$.}}
    \begin{align}
        \EL_1 / D_1 |_{\EL_1} \simeq \EL_2 / D_2 |_{\EL_2}
    \end{align}
\end{definition}

As we said, before one can make sense of classical equivalence, an appropriate model for $C^\infty(\EL_i/\mathcal{D}_i)$ needs to be specified. In Section \ref{subsec:BVformalism} we will discuss the (standard) BV extensions of classical field theories, which will allow us to answer this question and give a more precise meaning to Definition \ref{def:ClassicalEQclass}. Throughout this paper we will however encounter several situations where one can establish classical equivalence without resorting to the BV formalism:

\begin{remark}
\label{rmk:ClassicalEQsimple}
    In practical scenarios one might prove classical equivalence by checking that $\EL_1 \simeq \EL_2$ and $D_1 |_{\EL_1} \simeq D_2 |_{\EL_2}$. Sometimes, it is possible to find that an even stronger relation holds when comparing field theories. For example, let $(\FF_i, S_i)$ for $i = 1,2$ be two classical field theories with spaces of classical solutions $\EL_i$ and symmetry spaces $D_i$. Let further $C \subset \FF_1$ be a set of fields where a subset of the equations defining the zero locus $\EL_1$ is fulfilled. If one can find an isomorphism $\varphi_{cl} \colon C \lra \FF_2$ such that
    \begin{align}
        S_1 |_C = \varphi_{cl}^* S_2, \qquad \text{and} \qquad D_1 |_C \overset{\varphi_{cl}}{\simeq} D_2 |_{\EL_2},
    \end{align}
    then the two theories are classically equivalent.
\end{remark}

\subsubsection{Kijowksi--Tulczijew boundary data for classical field theory}
\label{subsubsec:KTconstruction}

In this section we will briefly review the construction of boundary data\footnote{Truly, one simply needs an embedded codimension $1$ submanifold, but the boundary scenario allows for further interesting interpretations of the data thus obtained.} for a classical field theory $(\FF,S)$ on a manifold with boundary $(M,\partial M)$, due to Kijowksi and Tulczijew \cite{KijTul1979}. The construction can be summed up as follows:

\begin{enumerate}
\item First we define a na\"ive version of ``fields near the boundary'' by restricting fields and their derivatives along a direction transverse to the boundary. This space can be equipped with a closed two-form in a natural way. 
\item When this form is also pre-symplectic (see below for the precise statement), one can then perform pre-symplectic reduction to obtain a symplectic space of ``boundary fields'', which is often also called  the \emph{geometric phase space} of the theory. 
\item Inside the geometric phase space there is a subset of field configurations that can be extended to solutions of the equations of motion in a small cylinder. This is the Cauchy set of the theory, which in good cases is a coisotropic sumbanifold. Its coisotropic reduction yields the ``reduced phase space'', which is interpreted as the space of admissible boundary configurations modulo symmetries.
\end{enumerate}

More precisely:
\begin{definition}[Preboundary Fields]
\label{def:PreboundaryFields}
    Let $(\FF, S)$ be a classical field theory defined on a smooth manifold with boundary $(M,\partial M)$ and $\epsilon > 0$. Define the {space of pre-boundary fields} $\widetilde{\FF}^\partial$ as the space of germs\footnote{Remember that a germ at $x \in M$ is the equivalence class of functions $f \in \Ci(U), g \in \Ci(V)$ where $U,V$ are open neighbourhoods of $x$, such that $f \sim g$ if $f$ and $g$ agree on any subset of $U \cap V$.} of fields at $\partial M \times \{0\}$ on the manifold $\partial M \times [0,\epsilon]$.
\end{definition}

Let $(\FF, S)$ be a classical field theory and $\widetilde{\FF}^\partial$ its space of pre-boundary fields with $\widetilde{\pi}^\partial \colon \FF \lra \widetilde{\FF}^\partial$ the canonical projection. Then one can always look at $\widetilde{\alpha}^\partial \in \Omega^{1}_{\int}(\widetilde{\FF}^\partial)$ such that\footnote{We denote $\mathsf{EL} = \int_M \mathsf{el}$.}
\begin{align}
    \delta S = \mathsf{EL} + (\widetilde{\pi}^\partial)^* \widetilde{\alpha}^\partial.
\end{align}

The relevant assumption of regularity here is on the null distribution of the associated two form $\widetilde{\omega}^\partial=\delta \widetilde{\alpha}^\partial$

\begin{assumption}
\label{as:GeometricSymplectic}
    The tuple $(\widetilde{\FF}^\partial, \widetilde{\varpi}^\partial)$ where $\widetilde{\varpi}^\partial \coloneqq \delta \widetilde{\alpha}^\partial$ is a pre-symplectic manifold, i.e.\ the kernel $\mathrm{ker}(\widetilde{\varpi}^\flat)$ is a subbundle of $T\widetilde{\FF}^\partial$. Moreover, pre-symplectic reduction by said kernel is the space of sections of a fibre bundle over $\partial M$.
\end{assumption}

\begin{definition}[Geometric Phase Space]
\label{def:GeometricPhase}
    Let $(\FF, S)$ be a classical field theory and $(\widetilde{\FF}^\partial, \widetilde{\varpi}^\partial \coloneqq \delta \widetilde{\alpha}^\partial)$ the associated pre-symplectic space of pre-boundary fields. Define the {geometric phase space} $\FF^\partial$ of $(\FF, S)$ by
    \begin{align}
        \left(\FF^\partial \coloneqq \widetilde{\FF}^\partial / \ker(\widetilde{\varpi}^\partial), \varpi^\partial \right), 
    \end{align}
    where $\varpi^\partial$ is the reduction of $\widetilde{\varpi}^\partial$. We denote the resulting quotient map by $\pi^\partial \colon \FF \lra \FF^\partial$.
\end{definition}

In good cases, one can also find a form $\alpha^\partial \in \Omega^{1}_{\int}(\FF^\partial)$ such that
\begin{align}
    \delta S = \mathsf{EL} + (\pi^\partial)^* \alpha^\partial, \qquad \varpi^\partial = \delta \alpha^\partial,
\end{align}

Let $(\FF, S)$ be a classical field theory, $\EL$ its space of classical solutions and $\FF^\partial$ its geometric phase space together with the surjective submersion $\pi^\partial \colon \FF \lra \FF^\partial$. Inside $\FF^\partial$ we select those configurations that can be extended to a classical solution in a small enough cylinder $\partial M \times [0,\epsilon]$. This is the space of Cauchy data of the theory, and we denote it by $\mathcal{C}\subset \FF^\partial$. In good cases $\mathcal{C}$ is coisotropic. Indeed, it is expected that the projection $\mathcal{EL}^\partial \doteq \pi^\partial(\mathcal{EL})$ be a Lagrangian submanifold in $\FF^\partial$, due to existence and uniqueness of solutions of the field equations.\footnote{This is phrased as intersection of transverse Lagrangians, in this language, where the \emph{other} Lagrangian is a boundary condition.} Since obviously $\mathcal{EL}^\partial \subset \mathcal{C}$, the set of Cauchy data must be coisotropic (because it contains a Lagrangian sumbanifold). See \cite{CMRCorfu} for further details.

\begin{definition}[Reduced Phase Space]
\label{def:ReducedPhase}
Let $(\FF, S)$ be a classical field theory, and let $(\FF^\partial,\varpi^\partial)$ be its geometric phase space. If $\mathcal{C}$ is a (smooth) coisotropic submanifold, the {reduced} (or {physical}) {phase space} is the symplectic reduction $\underline{\mathcal{C}}\doteq \mathcal{C}/\mathcal{C}^{\varpi^\partial}$, where $\mathcal{C}^{\varpi^\partial}$ is the characteristic foliation of $\mathcal{C}\subset \FF^\partial$. 
\end{definition}

The reduced phase space of a classical field theory is the space of equivalence classes of boundary configurations that extend to a solution of the equations of motion. This can be naturally interpreted as the space of possible initial conditions modulo symmetry. In concrete examples, and in particular thoughout this paper, the manifold $\mathcal{C}$ will be presented as the vanishing locus of a set of functions over field configurations, often called \emph{constraints} of the theory. The number of such (independent) constraints, compared to the total (local) degrees of freedom of the theory, informs us on the expected dimension of the reduction. In particular, topological theories are such that, after reduction, there are no residual local degrees of freedom. In gravitational theories, instead, we expect two local degrees of freedom to survive after reduction.

\subsection{Batalin--Vilkovisky Formalism}
\label{subsec:BVformalism}

In this section we introduce a cohomological approach to Lagrangian field theory with symmetries that goes under the name of Batalin--Vilkovisky (BV) formalism \cite{BATALIN1977, BATALIN1981}. Our brief introduction draws from \cite{CattaneoMnevReshetikhin2014, CattaneoMnevReshetikhin2017, MnevLectures, CattaneoMoshayedi2020} and the earlier works of Stasheff \cite{Stasheff97}. The main motivation for the use of this technique is that we want to explore the moduli space of classical solutions, $\mathcal{EL}/D$, which is often singular, and a smooth description is not available. Thus, it is convenient to replace the geometric study of the moduli space with a cohomological description of the quotient.
The interested reader is referred to these sources for a more self-contained introduction and further details.

\begin{definition}[BV Theory]
\label{def:BVtheory}
    A BV theory is a tuple $\FC=(\FF, \varpi, Q, \SC)$ where $\FF$ is a graded manifold (the space of BV fields), $\varpi$ is a $(-1)$-symplectic form (the {BV two-form}) and $\SC \colon \FF \lra \RR$ is a Hamiltonian function of degree $0$ (the {BV action}) with Hamiltonian vector field $Q \in \VF[1](\FF)$ and satisfying the {Classical Master Equation}\footnote{Using that $Q$ is the Hamiltonian vector field of the BV action $\SC$ we can require, equivalently to $\{\SC,\SC\} = 0$, that $[Q,Q] = 0$, i.e.\ that $Q=\{S,\cdot\}$ be \emph{cohomological}.} ({CME}):
  \begin{equation}
    \{\SC,\SC\} = 0.
  \end{equation}
  Here $\{\cdot, \cdot\}$ denotes the graded Poisson bracket induced by $\varpi$. Thus $(\FF, Q)$ is a dg-manifold.\footnote{In this paper, since we do not consider physical fermions, we always assume that the parity of a variable is the mod 2 reduction of its degree.}
\end{definition}

\begin{remark}
    When thinking of BV \emph{theories} one typically additionally wants $\FF$ to be the space of sections of a graded bundle over some (closed\footnote{Extensions to manifolds with boundary are considered in \cite{CattaneoMnevReshetikhin2014}, while for an approach towards noncompact manifolds we refer to \cite{RejznerThesis,RejznerSchiavina21}.}) $d$-dimensional manifold $M$, and assumes $Q,\varpi,S$ to be \emph{local}.
\end{remark}

\begin{example}[Standard BV Extension of a Classical Field Theory (compare \cite{BATALIN1981, HENNEAUX1990})]
    The classical field theories we will study in this work have symmetries that stem from the action of a Lie algebra. With reference to Remark \ref{rmk:non-trivialsymmetries}, this means that $\mathcal{D}$ is the image of a Lie algebra action on $\FF_{cl}$.
    In this case there exists a standard ``BV extension'', which assigns a BV theory to the classical field theory with symmetry $(\FF_{cl}, S_{cl},\mathcal{D})$ with symmetry Lie algebra $(\gf, [\cdot, \cdot])$:\\
    Define the space of BV fields
    \begin{align}
        \FF_{BV} \coloneqq T^*[-1](\FF_{cl} \times \Omega^0[1](M; \gf)),
    \end{align}
    where we denote by $(\phi^i, c^a)$ the fields in the base and by $(\phi_i^\dagger, c_a^\dagger)$ those in the cotangent fibres. Note that the shift induces an internal grading (ghost number); this is in addition to the form degree. Given the non-degenerate bilinear map
    \begin{align}
        \inp \colon \Omega^{i,k}_{\locf}(M) \times \Omega^{j,\topf-k}_{\locf}(M) \lra \Omega^{i+j,\topf}_{\locf}(M),
    \end{align}
    for any $i,j,k$ we can equip $\FF_{BV}$ with a symplectic structure, the BV two-form
    \begin{align}
        \varpi \coloneqq \int_M \pair{\delta \phi^i}{\delta \phi_i^\dagger} + \pair{\delta c^a}{\delta c_a^\dagger}.
    \end{align}
    The cohomological vector field of our BV extension nicely splits into the Koszul--Tate differential $\delta_{KT}$ and the Chevalley--Eilenberg differential $\gamma_{CE}$ as
    \begin{align}
        Q = \delta_{KT} + \check{\gamma}_{CE},
    \end{align}
    where $\check{\gamma}_{CE}$ is the cotangent lift of the Chevalley--Eilenberg cohomological vector field:
    \begin{alignat}{3}
    \gamma_{CE} \phi^i &= c^a v_a^i,\qquad  & \qquad  \delta_{KT} \phi^i &= 0, \qquad    & \qquad\delta_{KT} c^a &= 0, \\
    \gamma_{CE} c^a &= \frac{1}{2} [c, c]^a,  \qquad &\qquad  \delta_{KT} \phi_i^\dagger &= \ddelta{S}{\phi^i},  \qquad & \qquad \delta_{KT} c^\dagger_a &= v^i_a \phi^\dagger_i,
    \end{alignat}
    with $v^i_a$ denoting the fundamental vector fields of $\gf$ on $\FF_{cl}$. We define the BV action as
    \begin{align}
        S_{BV} \coloneqq S_{cl} + \int_M \left\{\pair{\phi^\dagger_i}{\gamma_{CE} \phi^i} + \pair{c^\dagger_a}{\gamma_{CE} c^a}\right\},
    \end{align}
    and the tuple $(\FF_{BV}, \varpi, Q, S_{BV})$ defines a BV theory in the sense of Definition \ref{def:BVtheory}.
\end{example}

\begin{definition}[BV Complexes (compare {\cite[Section 5]{Stasheff97}})]
Let $\FC$ be a BV theory with cohomological vector field $Q$. We define the {BV complex}\index{BV theory!BV complex} of $\FC$ to be the cochain complex $(\BV^\bullet=\Omega_{\int}^{0}(\FF),Q)$, where the grading is the internal ghost number. We denote the {BV cohomology}\index{BV theory!BV cohomology} by $\HH^\bullet(\BV^\bullet)$.
\end{definition}

\begin{remark}
As mentioned, the BV complex provides a cohomological replacement of the moduli space of solutions to the Euler-Lagrange equations of a field theory (modulo their symmetries). This is done by means of a combination of the Koszul--Tate and Chevalley--Eilenberg complexes, so that the BV cohomology in degree zero is isomorphic to the space of functions on the moduli space\footnote{Or a replacement thereof, when the moduli space is not smooth.}. For an interpretation of the higher cohomology groups see \cite[Sections 2 - 5]{Stasheff97}. Note that the BV complex, as defined above, can be straightforwardly generalised to higher integrated local forms. See \cite[2.2.4]{SimaoCattaneoSchiavina} and \cite{Mnev_2019}.
\end{remark}

We can now state a precise definition of classical equivalence between field theories \cite{Canepa_2021,SimaoCattaneoSchiavina}:

\begin{definition}[Classical BV Equivalences]\label{def:classEQ}
  Let $\FC_1$ and $\FC_2$ be two BV theories and $\BV^\bullet_1$ and $\BV^\bullet_2$ their respective BV complexes. The theories are said to be {classically equivalent}\index{BV theory!classical equivalence} if
  \begin{equation}
    H^0(\BV_1^\bullet) = H^0(\BV_2^\bullet).
  \end{equation}
\end{definition}

\begin{remark}
    Note that the above definition, while elegant, requires us to invoke the entire BV package. In practice one can often establish classical equivalence with only the classical field theories at hand using \ref{def:ClassicalEQclass}. In particular we will often encounter the even simpler situation presented in Remark \ref{rmk:ClassicalEQsimple}. This is precisely the approach we will take in Section \ref{sec:ClassicalAnalysis}.
\end{remark}

The BV cohomology beyond degree zero has more accurate information about a given field theory.
Hence, it is natural to consider the (\emph{a priori} finer) notion of BV equivalence:

\begin{definition}[BV Equivalence {\cite{SimaoCattaneoSchiavina}}]\label{def:weakEQ}
  Let $\FC_1$ and $\FC_2$ be two BV theories and $\BV^\bullet_1$ and $\BV^\bullet_2$ their respective BV complexes. The two theories are said to be {BV equivalent} if there exists a quasi-isomorphism between the two BV complexes.
  Namely if there exist two maps $f_1 \colon \FF_1 \lra \FF_2$ and $f_2 \colon \FF_2 \lra \FF_1$ such that their pullbacks define cochain maps, i.e.
  \begin{equation}
    f_1^* Q_2 = Q_1 f_1^*, \qquad f_2^* Q_1 = Q_2 f_2^*,
  \end{equation}
  which satisfy
  \begin{align}
  \label{eq:weakEQ}
    f_1^* [\varpi_2] &= [\varpi_1], \qquad f_2^* [\varpi_1] = [\varpi_2], \qquad f_1^* [\SC_2] = [\SC_1], \qquad f_2^* [\SC_1] = [\SC_2],
  \end{align}
  i.e. preserve the cohomology class of the respective actions and BV two-forms. Lastly they should be such that $f_1^* \circ f_2^*$ and $f_2^* \circ f_1^*$ are identities in cohomology, i.e. the two maps are quasi-inverse of one another.
\end{definition}

\begin{remark}
    It is obvious that we have the implication
      \begin{equation}
    \text{BV equivalence} \ \ \Lra \ \ \text{Classical BV equivalence}.
    \end{equation}
    However, the notion of BV equivalence given in Definition \ref{def:weakEQ} is sometimes referred to as ``weak'' BV equivalence, since it is modelled over that of a weak-equivalence between cochain complexes: quasi-isomorphism. One can thus speak of a ``strong'' BV equivalence if this quasi-isomorphism is an isomorphism. An interesting non-trivial example is the strong BV equivalence between $3$-dimensional $BF$ theory and $3$-dimensional triadic General Relativity.
    Classical equivalence is mostly an indicator that it might be interesting to investigate stronger notions of equivalence. Meanwhile, BV equivalence is a ``strong enough'' notion to infer parallels about the quantisation of two theories. Since their BV complexes are weakly isomorphic, they contain the same information but can operate on different field content. Practical examples are the first and second order formulations of theories (e.g. Yang-Mills) or the reduction of theories with fields that induce acyclic subcomplexes.
\end{remark}

An important tool when establishing BV equivalence between theories is the elimination of \textit{contractible pairs} discussed originally in \cite{HENNEAUX1990, Barnich1995, Brandt1997, Barnich2005} and more recently in \cite{Barnich_2011, Grigoriev_2011, RejznerThesis, SimaoCattaneoSchiavina}.

\begin{definition}[General contractible pairs]\label{def:generalContractible}
Let $\SC_1$ be a BV action of the form\footnote{Here, for simplicity $S$ can be thought of as a functional on some symplectic vector space.}
  \begin{equation}
    \SC_1 (a,b,c^\dagger, d^\dagger) = \SC_2(a,c^\dagger) + \frac{1}{2} \langle b, b \rangle,
  \end{equation}
  where $\langle\cdot, \cdot\rangle$ is a constant non-degenerate bilinear form on the space $V$ in which the fields $v$ take values, $\SC_2$ solves the master equation w.r.t. $a, c^\dagger$ and we can split the fields into two families $(a,c^\dagger)$ and $(b,d^\dagger)$. Let $Q_1$ denote the cohomological vector field of the BV-theory associated to $\SC_1$. If
\begin{equation}
    Q_1 d^\dagger = b, \qquad Q_1 b = 0,
\end{equation}
  we call the collection of fields $(b, d^\dagger)$ a {contractible pair}\index{Contractible pair}. More generally, let $\SC$ be a BV action that depends on the collections of fields $a, b, c^\dagger, d^\dagger$. If $Q$ denotes the Hamiltonian vector field of $\SC$, and
\begin{equation}
    (Q d^\dagger)|_{d^\dagger = 0} = 0,
\end{equation}
  can be uniquely solved for $b$ in terms of the $(a, c^\dagger)$ one calls $(b, d^\dagger)$ a {general contractible pair}.
\end{definition}

\begin{remark}
As was shown in \cite{Barnich2005} contractible pairs are related to the notion of ``auxiliary fields'',  \cite{HENNEAUX1990, Barnich1995}: it is always possible to eliminate such fields via suitable gauge fixing conditions without changing the BV cohomology. As mentioned in \cite{Barnich2005, Barnich_2011, SimaoCattaneoSchiavina} there exist a more general variant of the procedure of reduction of contractible pairs that allows for a construction similar to the one using usual contractible pairs. This can be interpreted as a ``semiclassical'' approximation of the BV-pushforward \cite{MnevLectures, CattaneoMnevReshetikhin2017}.
\end{remark}

\begin{construction}
\label{constr:GeneralContractible}
The appearance of a general contractible pair $(b,d^\dagger)$ reflects the fact that $d^\dagger = 0$ is a good (partial) gauge fixing, which allows us to ``integrate out'' the field $b$. 
Define
\begin{equation}
    B \coloneqq Q d^\dagger, \qquad D^\dagger \coloneqq d^\dagger,
\end{equation}
and note that the change of variables $(b, d^\dagger) \lmap (B, D^\dagger)$ is invertible at least near $d^\dagger = 0$ by definition of general contractible pairs. One can then define two maps $\psi, \varphi$ establishing a quasi-isomorphism between the theories, which are then BV equivalent (cf. \cite[Theorem 3.2.2.]{SimaoCattaneoSchiavina}).  
\end{construction}

The following result is an immediate consequence of the above construction:

\begin{proposition}
\label{prop:GeneralContractibleWeakBV}
  Let $\SC_1$ be a BV action depending on the collections of fields $a, b, c^\dagger, d^\dagger$ where $(b, d^\dagger)$ defines a general contractible pair. Denote by $\Phi$ the coordinate transformation $(b, d^\dagger) \lmap (B \coloneqq Q d^\dagger, D^\dagger \coloneqq d^\dagger)$ and define a map
  \begin{align}
    \varphi^* a = \widetilde{a}, \qquad \varphi^* c^\dagger &= \widetilde{c}^\dagger, \qquad \varphi^* B = 0, \qquad \varphi^* D^\dagger = 0.
  \end{align}
  The BV theory defined by $\SC_1$ is BV equivalent to the one defined by $\varphi^*(\Phi(\SC_1))$.
\end{proposition}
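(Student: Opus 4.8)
The plan is to deduce the statement from Construction~\ref{constr:GeneralContractible}: perform the change of coordinates $\Phi$, check that in the new coordinates $(B,D^\dagger)$ is a \emph{strict} contractible pair, and then exhibit the two quasi-inverse maps demanded by Definition~\ref{def:weakEQ}. Write $Q_1\equiv Q$ for the cohomological vector field of $\SC_1$. The defining property of a general contractible pair (Definition~\ref{def:generalContractible}), namely that $(Q_1 d^\dagger)|_{d^\dagger=0}=0$ can be solved uniquely for $b$, guarantees by the implicit function theorem that $\Phi\colon(b,d^\dagger)\lmap(B\coloneqq Q_1 d^\dagger,\ D^\dagger\coloneqq d^\dagger)$ is a local diffeomorphism on a neighbourhood of $\{d^\dagger=0\}$, and it is local since $Q_1$ is. Pushing the BV data forward along $\Phi$, in the coordinates $(a,B,c^\dagger,D^\dagger)$ the cohomological vector field --- which by abuse we still call $Q_1$ --- satisfies $Q_1 D^\dagger=B$ by construction of $B$, and $Q_1 B=Q_1^2 D^\dagger=0$ by the classical master equation; so $(B,D^\dagger)$ is a strict contractible pair and $\Phi(\SC_1)$ splits as $\SC_2+R$, where $\SC_2\coloneqq\varphi^*(\Phi(\SC_1))$ (the restriction of $\Phi(\SC_1)$ to $\{B=D^\dagger=0\}$) solves the master equation in $(a,c^\dagger)$ and $R$ is $Q_1$-exact and built from $B,D^\dagger$. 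That $\Phi$ is, up to a $\delta$-exact correction, a local symplectomorphism --- so that the reduced data $\FC_2\coloneqq(\FF_2,\varpi_2,Q_2,\SC_2)$ on $\{B=D^\dagger=0\}$ is again a BV theory --- is the content of \cite[Theorem~3.2.2]{SimaoCattaneoSchiavina}.

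Next I would write down the quasi-isomorphism. The map $\varphi$ of the statement is the inclusion $\varphi\colon\FF_2\hookrightarrow\FF_1$ of $\{B=D^\dagger=0\}$ in the coordinates $(a,B,c^\dagger,D^\dagger)$; introduce also the projection $\psi\colon\FF_1\to\FF_2$ setting $B=D^\dagger=0$, so $\psi^*\widetilde a=a$, $\psi^*\widetilde c^\dagger=c^\dagger$. I would then verify the conditions of Definition~\ref{def:weakEQ} for the pair $(\varphi,\psi)$. First, $\varphi^*$ and $\psi^*$ are cochain maps: $Q_1$ is tangent to $\{B=D^\dagger=0\}$ (immediate from $Q_1 B=0$ and $Q_1 D^\dagger=B$) and the Hamiltonian vector field of $R$ does not touch $a,c^\dagger$, so $Q_1$ restricts to $Q_2$ and $\varphi^*Q_1=Q_2\varphi^*$, $\psi^*Q_2=Q_1\psi^*$. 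Second, the cohomology-class identities required by Definition~\ref{def:weakEQ} hold: in the coordinates $(a,B,c^\dagger,D^\dagger)$ the forms $\varpi_1$ and $\psi^*\varpi_2$ differ by $\delta$ of a primitive linear in $D^\dagger$, while $\Phi(\SC_1)$ and $\psi^*\SC_2$ differ by the $Q_1$-exact term $R$, whence $\psi^*[\varpi_2]=[\varpi_1]$, $\psi^*[\SC_2]=[\SC_1]$, $\varphi^*[\varpi_1]=[\varpi_2]$, $\varphi^*[\SC_1]=[\SC_2]$. Third, $\psi\circ\varphi=\id_{\FF_2}$, so $\varphi^*\psi^*=\id$ on $\BV_2^\bullet$, whereas $\id-\psi^*\varphi^*=Q_1 h+hQ_1$ on $\BV_1^\bullet$ for the standard contracting homotopy $h$ of the contractible pair $(B,D^\dagger)$ (schematically a shift operator along $B$ weighted by $D^\dagger$, extended by zero on $a,c^\dagger$); this is the usual fact that a contractible pair contributes trivially to the cohomology, and it makes $\varphi^*,\psi^*$ quasi-inverse. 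Together these are exactly the data certifying the BV equivalence of $\FC_1$ and $\FC_2$.

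The step I expect to be the main obstacle is the first one: justifying that $\Phi$ is a legitimate change of variables with all the claimed properties. Invertibility is only guaranteed near $\{d^\dagger=0\}$, so one works over formal or local neighbourhoods; more delicately, $\Phi$ need not preserve $\varpi_1$ on the nose, so one must track the $\delta$-exact discrepancy and confirm that $\Phi_*Q_1$ remains Hamiltonian for $\Phi(\SC_1)$ up to terms invisible in cohomology --- precisely the bookkeeping carried out in \cite[Theorem~3.2.2]{SimaoCattaneoSchiavina}. By contrast, once one is in the strict-contractible-pair situation everything else is routine: the acyclicity used in the third step is the standard Koszul-type computation for a single contractible pair, applied fibrewise and promoted to the complex of local functionals by a filtration argument, and the maps $\varphi,\psi,h$ are manifestly local because $\Phi$ and $Q_1$ are.
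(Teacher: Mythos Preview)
Your proposal is correct and follows essentially the same route as the paper: the paper does not give a self-contained proof but declares the proposition an immediate consequence of Construction~\ref{constr:GeneralContractible} and defers the details to \cite[Theorem~3.2.2]{SimaoCattaneoSchiavina}, which is exactly the reduction-to-strict-contractible-pair argument you outline (change variables via $\Phi$, observe $(B,D^\dagger)$ is strict, then exhibit $\varphi,\psi$ and the homotopy $h$). Your write-up in fact supplies more detail than the paper itself does at this point; the only caveat worth flagging is that the splitting $\Phi(\SC_1)=\SC_2+R$ with $R$ purely in $(B,D^\dagger)$ need not be a clean direct-sum decomposition in general (there can be mixed terms), but this does not affect the cohomological statement since the homotopy for the strict pair still does the job --- which is precisely the bookkeeping you correctly attribute to \cite{SimaoCattaneoSchiavina}.
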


\begin{remark}
    In the case that $(b, d^\dagger)$ is a contractible pair on the nose, one can define two maps $\psi, \varphi$ via:
  \begin{align}
    \varphi^* a = \widetilde{a}, \qquad \varphi^* c^\dagger &= \widetilde{c}^\dagger, \qquad \varphi^* B = 0, \qquad \varphi^* D^\dagger = 0, \\
    \psi^* \widetilde{a} &= a, \qquad \psi^* \widetilde{c}^\dagger = c^\dagger.
  \end{align}
  Then $\alpha^* \coloneqq \varphi^* \circ \psi^*$ is the identity and $\chi^* \coloneqq \psi^* \circ \varphi^*$ is homotopic to the identity acting as
  \begin{align}
    \chi^* a = a, \qquad \chi^* c^\dagger = c^\dagger, \qquad \chi^* B = 0, \qquad \chi^* D^\dagger = 0.
  \end{align}
  This special case of Proposition \ref{prop:GeneralContractibleWeakBV} is discussed in \cite[Lemma 3.2.1 and Theorem 3.2.2.]{SimaoCattaneoSchiavina}.
\end{remark}

\section{Classical \qGR, Classical Equivalences and Boundary analysis}
\label{sec:ClassicalAnalysis}

This section presents a variety of theories and discusses the classical equivalences that can be established among them (Definition \ref{def:classEQ}).

In Section \ref{subsec:qBFclassical} we introduce deformed, quadratically extended $BF$ (\qBF) theory, restrict it to $\so(3,1)$ to formulate what we call quadratically extended General Relativity (\qGR), and we show that the latter is classically equivalent to (resp. quadratic) generalised Holst theory (gH) (resp.\ (qgH)), two classical field theories that we define in Section \ref{subsec:qBFclassicalEQ}. To this end, we employ the strategy outlined in Remark \ref{rmk:ClassicalEQsimple}.

Finally in Section \ref{subsec:qBFboundary} we study the (reduced) phase space of \qGR theory, by means of an analysis of the structure of boundary configurations, following Section \ref{subsubsec:KTconstruction}.

\subsection{Deformed, quadratically extended, \emph{BF} theory}
\label{subsec:qBFclassical}

The following is the standard definition of a class of topological field theories that go by the name of $BF$ theory.

\begin{definition}[$BF$ Theory]
\label{def:ClassicalBFBB}
    Let $\gf$ be any finite-dimensional Lie algebra and $G$ a Lie group integrating $\gf$. Let there further be an invariant inner product $\pairg{\cdot}{\cdot}$ on $\gf$ and a principal $G$-bundle $P \atob{\pi} M$ be defined on over a $4$-dimensional manifold $M$. A connection of this principal bundle is locally described by a local $1$-form $A$ with values in $\gf$. Four-dimensional {$BF$ theory} (abbr.\ {$BF$}) for $\Lambda \in \RR$ is given by the data $(\FF_{BF}, S_{BF}(\Lambda))$, where
    \begin{equation}
        \FF_{BF} \coloneqq \AC_P\times \Omega^2(M; \gf^*){\ni (A,B)},
    \end{equation}
    and
    \begin{align}
        S_{BF}(\Lambda) \coloneqq \int_M \pairg{B}{F_A} + \frac{\Lambda}{2} \pairg{B}{B}.
    \end{align}
    If $\Lambda = 0$ we call the theory {pure $BF$ theory}.
\end{definition}

\begin{remark}
    One of the many interesting features of $BF$ theories, and a main reason for our investigation, stems from the fact that one can deform the theory to obtain theories of gravity. Two recent reviews on this application of $BF$ theories can be found in \cite{Mikovi2006} as well as \cite{Freidel_2012}. We will also review some of these formulations ourselves in Section \ref{subsec:BFgravity} to contextualize the findings of this work. Note that such gravitational theories are commonly used in quantisation schemes like Loop Quantum Gravity (see for example \cite{Rovelli2006, Dupuis2011}).
\end{remark}

\begin{definition}
\label{def:TwistedPairing}
  Let $\gf$ be a finite-dimensional Lie group and $(\df, [\cdot, \cdot]_\df, \paird{\cdot}{\cdot})$ the trivial standard extension thereof by $(\rho, V, \inp_V)$. Let further an automorphism $T \in \Aut(\gf)$ be given. Then one can define a {twisted inner product} on $\df$ via
  \begin{align}
      \pairT{\cdot}{\cdot} \colon (z_1, v_1, \xi_1) \times (z_2, v_2, \xi_2) & \lmap \pair{v_1}{v_2}_V + \xi_1(T(z_2)) + \xi_2(T(z_1)).
  \end{align}
\end{definition}

Now we have all the necessary ingredients to formulate the main subject of this work.

Consider the trivial standard quadratic extension $(\df \coloneqq \gf \oplus V \oplus \gf^*, [\cdot, \cdot]_\df, \paird{\cdot}{\cdot})$ of a finite-dimensional Lie algebra $\gf$ by an orthogonal $\gf$-module $(\rho, V, \inp_V)$ and a Lie group $D$ integrating $\df$. Let further a principal $D$-bundle $P \atob{\pi} M$ be defined on over a $4$-dimensional manifold $M$, and denote by $\mathcal{A}_P$ the associated space of principal connections. 

\begin{definition}[Deformed, quadratically extended $BF$ theory]
\label{def:ClassicalTqBF}
    Given $\beta\in \mathbb{R}$ and $T\in\mathrm{Aut}(\gf)$, we define {deformed, quadratically extended, $BF$ theory} (abbr.\ {\qBF}) to be given by the data $(\FF_{\qBF}, S_{\qBF}(T,\beta,\alpha))$, where
    \begin{equation}
        \FF_{\qBF} \coloneqq \AC_P\times \Omega^2(M; \df^*){\ni (\AB,\BB)},
    \end{equation}
    and, 
    \begin{align}
        S_{\qBF}{(T,\beta,\alpha)} :&= \int_M \paird{\BB}{F_\AB} + \frac{\alpha}{2} \big( \paird{\BB}{\BB} + \beta \pairT{\BB}{\BB} \big) = S_{BF}(\alpha) + \beta \int_M\pairT{\BB}{\BB}.
    \end{align}
    Here $\pairT{\cdot}{\cdot}$ denotes the twisted inner pairing defined in Definition \ref{def:TwistedPairing}. We call $T$ the {deformation automorphism}.
\end{definition}

One immediately finds in Definition \ref{def:ClassicalTqBF} that $\alpha$ controls the deviation from a pure $BF$ theory while $\beta$ controls the deformation term that sets \qBF apart from $BF$ type theories. The following is an immediate consequence:

\begin{corollary}
\label{cor:qBF_topological_cases}
    For $\alpha = 0$, or $\beta = -1$, and $T = \id$, deformed, quadratically extended $BF$ theory is a pure $BF$ theory for the trivial standard extension $\df$. Similarly, for $\beta = 0$, or $T = \id$, it is a $BF$-type theory for the Lie algebra $\df$.
\end{corollary}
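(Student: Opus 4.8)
The plan is to verify the claim directly from the definitions, by substituting the special parameter values into the action functional $S_{\qBF}(T,\beta,\alpha)$ of Definition \ref{def:ClassicalTqBF} and matching it termwise with the action $S_{BF}(\Lambda)$ of Definition \ref{def:ClassicalBFBB}. The key observation is the decomposition already recorded in the definition, namely
\begin{align}
    S_{\qBF}(T,\beta,\alpha) = S_{BF}(\alpha) + \beta \int_M \pairT{\BB}{\BB},
\end{align}
where $S_{BF}(\alpha)$ is understood as $BF$ theory for the metric Lie algebra $\df$ with the pairing $\paird{\cdot}{\cdot}$. All three assertions of the corollary follow by inspecting how each summand collapses.

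First I would treat the second statement. If $\beta = 0$, the deformation term vanishes identically and $S_{\qBF}(T,0,\alpha) = S_{BF}(\alpha)$, which is by definition a $BF$-type theory (i.e.\ a $BF + \Lambda BB$ theory) for $\df$; note the field content $\FF_{\qBF} = \AC_P \times \Omega^2(M;\df^*)$ already matches $\FF_{BF}$ for the Lie algebra $\df$. If instead $T = \id$, then by Definition \ref{def:TwistedPairing} the twisted pairing $\pairT{\cdot}{\cdot}$ reduces to $\paird{\cdot}{\cdot}$, so the deformation term becomes $\beta \int_M \paird{\BB}{\BB}$, which combines with the $\frac{\alpha}{2}\paird{\BB}{\BB}$ term to give $\int_M \paird{\BB}{F_\AB} + \frac{\alpha(1+\beta)}{2}\paird{\BB}{\BB} = S_{BF}(\alpha(1+\beta))$, again a $BF$-type theory for $\df$. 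This also immediately yields the first statement: setting in addition $\alpha = 0$, or taking $\beta = -1$ so that $\alpha(1+\beta) = 0$, kills the $\paird{\BB}{\BB}$ term entirely and leaves $\int_M \paird{\BB}{F_\AB} = S_{BF}(0)$, which is pure $BF$ theory for $\df$. The case $\alpha = 0$ with arbitrary $T$ works for the first statement by the same token, since then both the undeformed quadratic term and the deformation term (which carries a factor $\alpha$) drop out.

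I do not anticipate a genuine obstacle here: the result is, as the text says, ``an immediate consequence,'' and the proof is essentially a two-line substitution once one unwinds Definition \ref{def:TwistedPairing} to see that $\pairT{\cdot}{\cdot} = \paird{\cdot}{\cdot}$ when $T = \id$. The only point requiring a word of care is the mild abuse of notation in the phrase ``$BF$-type theory for the Lie algebra $\df$'': one should confirm that $\paird{\cdot}{\cdot}$ is indeed an invariant non-degenerate inner product on $\df$ so that Definition \ref{def:ClassicalBFBB} genuinely applies — but this is exactly the content of the Lemma following Definition \ref{def:StandardExtension_triv}, which states that $\df(\gf,V,\rho)$ is a metric Lie algebra. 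With that in hand, the identification of field spaces and actions is tautological, and the corollary follows.
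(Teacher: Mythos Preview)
Your proposal is correct and matches the paper's treatment: the paper states the corollary as ``an immediate consequence'' of Definition \ref{def:ClassicalTqBF} and gives no proof, and your argument is precisely the direct substitution the statement invites. One small expository slip: you quote the decomposition $S_{\qBF} = S_{BF}(\alpha) + \beta\int_M \pairT{\BB}{\BB}$ from the paper (which itself drops the factor $\tfrac{\alpha}{2}$ in the last term), but your subsequent computation $\tfrac{\alpha(1+\beta)}{2}\paird{\BB}{\BB}$ is taken from the correct middle expression $\tfrac{\alpha}{2}(\paird{\BB}{\BB} + \beta\pairT{\BB}{\BB})$ --- so the conclusion is right, but make the coefficients consistent when you write it up.
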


\begin{remark}[Specialisation]
    In the following we will set $\alpha = 1$. All results in this paper carry over to any $\alpha \in \RR$, however we found that varying the parameter did not give rise to any sectors that could not be investigated using combinations of $T$ and $\beta$. One can indeed set $\alpha$ to $1$, unless it is equal to zero, by rescaling the field B. Furthermore, from here on we fix $\gf = \so(3,1)$ together with the above identifications. It will be indicated whenever a result does not depend on these choices.
\end{remark}

\subsection{Quadratically extended General Relativity}
Consider 
an orthogonal $\gf \simeq \so(3,1)$-module $(\rho, V, \eta)$ where $\rho \colon \gf \lra O(V)$ and $\eta$ is a Minkowski metric on $V$.
Using the orthogonal module $(\rho, V, \eta)$ we construct the trivial standard extension $(\df \coloneqq \gf \oplus V \oplus \gf^*, [\cdot, \cdot]_\df, \paird{\cdot}{\cdot})$ (Definition \ref{def:StandardExtension_triv}) where 
\[
[V, V]_\df \subset \gf^*, \qquad [\gf^*, \gf^*]_\df = [\gf^*, V]_\df \equiv 0.
\]
Using the metric $\eta$ we can identify $V \simeq V^*$ and $\gf \simeq \bigwedge^2 V$. Moreover, $\gf^*$ is identified (as a vector space) with $\bigwedge^2 \VV$ via the Killing form $\killing{\cdot}{\cdot}$:
\begin{align}
  \gf \simeq \textstyle{\bigwedge}^2 \VV \lra \gf^*, \qquad \chi \lmap \killing{\chi}{\cdot},
\end{align}
(This choice is natural since $\gf=\mathfrak{so}(3,1)$ is semisimple.) Thus, there is a canonical pairing between two elements $(z_1,v_1,\xi_1), (z_2,v_2,\xi_2) \in \df$ (cf.\ Definition \ref{def:StandardExtension_triv}), given by
\begin{align}
  \paird{(z_1,v_1,\xi_1)}{(z_2,v_2,\xi_2)} &= \eta(v_1, v_2) + \killing{\xi_1}{z_2} + \killing{\xi_2}{z_1} \\
  &= \eta(v_1, v_2) + \killing{\xi_1}{z_2} + \killing{\xi_2}{z_1},
\end{align}
where, with a slight abuse of notation, $\killing{\cdot}{\cdot}$ also denotes the natural pairing of $\gf$ and its dual, after the above identification of $\gf\simeq\gf^*$. We can further naturally identify $[V, V]_\df \subset \gf^*$ for $v, w \in V$ as
\begin{align}
  [v,w] = \killing{\cdot}{v \wedge w}.
\end{align}
This yields the following relation between the Minkowski metric $\eta$ and the Killing form for $v, w \in V$ and $\chi \in \gf$:
\begin{align}
  \eta([\chi, v], w) = - \killing{\chi}{[v, w]} = - \killing{\chi}{v \wedge w}.
\end{align}

Denote again by $D$ the Lie group integrating $\df$, and $\mathcal{A}_P$ the space of connections on a principal $D$-bundle.
A field in $\FF_{\qBF}\coloneqq \AC_P \times \Omega^2(M; \df^*)$, 
projected onto the factors of $\df$, can be expressed as $\AB = (A, e, S)$ and $\BB = (B, \tau, c)$ where
\begin{equation}
  A \in \Omega^1(M; \gf), \qquad e \in \Omega^1(M; V), \qquad S \in \Omega^1(M; \gf^*),
\end{equation}
\begin{equation}
  B \in \Omega^2(M; \gf^*), \qquad \tau \in \Omega^2(M; V^*), \qquad c \in \Omega^2(M; \gf).
\end{equation}
Note that using the identification $V \simeq V^*$ one can naturally read $\BB$ as $(c, \tau, B) \in \Omega^2(M; \df)$. In particular we get:
\begin{align}
\label{eq:qBFcurvature}
\begin{split}
  F_\AB &= d \AB + \frac{1}{2} [\AB, \AB]_\df = (dA, de, dS) + \frac{1}{2} [(A,e,S), (A,e,S)]_\df\\
  &= (dA, de, dS) + \frac{1}{2} [(A,e,S), (A,e,S)]\\
  &= (dA, de, dS) + \left(\frac{1}{2} [A, A], [A, e], [A,S] + \frac{1}{2} [e, e]\right) \\
  &= \left( F_A, d_A e, d_A S + \frac{1}{2} [e, e] \right).
\end{split}
\end{align}

\begin{definition}[Quadratically extended GR]\label{def:qeGR}
    We define \emph{quadratically extended General Relativity} (\qGR), to be deformed, quadratically extended $BF$ theory (Definition \ref{def:ClassicalTqBF}) specialised to the Lie algebra $\df=\mathfrak{so}(3,1)^*\oplus V\oplus \mathfrak{so}(3,1)$.
\end{definition}

\begin{calculation}\label{cal:QExtendedEOM}
    Using the above identifications for $\gf = \so(3,1)$, the action of \qGR reads:
\begin{align}
\begin{split}
    S_{\qGR}(T, \beta) &:= \int_M \paird{\BB}{F_\AB} + \frac{1}{2} \big( \paird{\BB}{\BB} + \beta \pairT{\BB}{\BB} \big), \\
    &= \int_M \killing{B}{F_A} + \eta(\tau, d_A e) + \killing{c}{(\id + \beta T) B + \frac{1}{2} e \wedge e + d_A S} + \frac{(1 + \beta)}{2} \eta(\tau, \tau),
    \label{eq:qAction}
\end{split}
\end{align}
    and the field equations of quadratically extended General Relativity take the following form:
  
    \mathtoolsset{showonlyrefs = false}
    \begin{subequations}\label{eq:EOMsub}
    \begin{align}
        F_A + (\id + \beta T) c &= 0, \qquad \text{from} \quad \delta B \label{eq:EOMsub_a}\\
        d_A e + (1+\beta) \tau &= 0, \qquad \text{from} \quad \delta \tau \label{eq:EOMsub_b}\\
        \eta(d_A \tau, \cdot) + \killing{c}{e \wedge \cdot} &= 0, \qquad \text{from} \quad \delta e \label{eq:EOMsub_c}\\
        (\id + \beta T) B + d_A S + \tfrac{1}{2} e \wedge e &= 0, \qquad \text{from} \quad \delta c \label{eq:EOMsub_d}\\
        \killing{d_A B}{\cdot} - \killing{e \wedge \tau}{\cdot} - \killing{[c, S]}{\cdot} &= 0, \qquad \text{from} \quad \delta A \label{eq:EOMsub_e}\\
        d_A c &= 0, \qquad \text{from} \quad \delta S. \label{eq:EOMsub_f}
    \end{align}
    \end{subequations}
    \mathtoolsset{showonlyrefs = true}
    ~\\
\begin{proof}
  The stated form of the action functional is simply found by unfolding the definitions. To obtain the field equations, we calculate the variation of $S_{\qGR}$ (see \eqref{eq:qAction}) and set to zero:
  \begingroup
  \allowdisplaybreaks
  \begin{subequations}
  \begin{align}
    0 \must& \ \delta \int_M \killing{B}{F_A} + \eta(\tau, d_A e) + \killing{c}{(\id + \beta T) B + \frac{1}{2} e \wedge e + d_A S} + \frac{(1 + \beta)}{2} \eta(\tau, \tau) \\
    =& \ \int_M \killing{\delta B}{F_A + (\id + \beta T) c} - \killing{B}{d_A \delta A} + \killing{c}{e \wedge \delta e + [\delta A, S] - d_A \delta S} \\
    &+ \eta(\delta \tau, d_A e + (1 + \beta) \tau) + \eta(\tau, [\delta A, e] - d_A \delta e) + \killing{\delta c}{(\id + \beta T)B + \frac{1}{2} e \wedge e + d_A S}\\
    \simeq& \ \int_M \killing{\delta B}{F_A + (\id + \beta T) c} + \eta(\delta \tau, d_A e + (1 + \beta) \tau) + \eta(d_A \tau, \delta e) + \killing{c}{e \wedge \delta e} \\
    &+ \killing{\delta c}{(\id + \beta T)B + \frac{1}{2} e \wedge e + d_A S} + \killing{d_A B}{\delta A} \\
    &+ \eta(\tau, [\delta A, e]) + \killing{c}{[\delta A, S]} + \killing{d_A c}{\delta S}
  \end{align}
  \end{subequations}
  \endgroup
  Here $\simeq$ denotes equivalence up to boundary terms; in the last step the boundary term $-d \killing{A}{\delta A} - d \killing{c}{\delta S} - d \eta(\tau, d_A e)$ was dropped. From the above form of $\delta S_{\qGR}$ one can immediately infer the claimed field equations.
\end{proof}
\end{calculation}

\begin{remark}
    Since $d_A e$ is the torsion from, Equation \eqref{eq:EOMsub_b} can dynamically identify $\tau$ as the torsion form scaled by $- (1+\beta)$. In particular, one obtains vanishing torsion and no algebraic constraint for $\tau$ if and only if $\beta = -1$. From Equation \eqref{eq:EOMsub_a} one can further infer that having $\id + \beta T$ vanish, which holds for $T = - \frac{1}{\beta} \id$ when $\beta \neq 0$, yields vanishing curvature but not necessarily vanishing torsion, i.e., except for $\beta \neq -1$.
\end{remark}

\begin{table}[H]
\renewcommand{\arraystretch}{2.0}
\begin{center}
\begin{tabular}{|c|c|c|c|}
  \hline
  $\beta$ & $T$ & Type of Theory & $d_A e$\\
  \hline\hline
  $0$ & $\mathrm{any}$ & $BF$ with $\Lambda=1$ & $- \tau$\\
  \hline
  $\mathrm{any}$ & $\id$ & $BF$ with $\Lambda=\beta+1$ & $- (1+\beta) \tau$\\
  \hline
  $-1$ & $\neq \id$ & Torsionless \qGR & $0$\\
  \hline
  $\notin \{-1, 0\}$ & $\neq \id$ & Generic \qGR & $- (1+\beta) \tau$ \\
  \hline
\end{tabular}
\end{center}
\caption[Summary of special parameter choices for classical \qGR]
{
    Table displaying different choices for $\beta$ and $T$, their effect on the type of theory that \qGR produces and the presence of torsion therein. This work will mainly investigate the lower two parameter ranges where one could expect non-topological theories that might be linked to theories of gravity. The last two rows represent the ``gravitational sector'', while the first {two} represent the ``topological sector''. {Note that for $T=\id$ and $\beta= -1$ the theory is \emph{pure} $BF$ and the torsion vanishes.}
} \label{Tab:sectors}
\end{table}

\begin{remark}
\label{rem:untwisted_simpler}
    In the first {two} rows of Table \ref{Tab:sectors} we have that $\beta T = \beta \id$ always holds true.\footnote{Note that $\beta$ cannot be simplified as it may vanish.} 
    As a consequence, in these cases one can completely drop $T$ and write the action of \qGR in the following concise way:
    \begin{align}
        S_{\qGR}(\beta) = S_{BF}(1+\beta) = \int_M \paird{\BB}{F_\AB} + \frac{(1+\beta)}{2} \paird{\BB}{\BB}.
    \end{align}
    The field equations can be further simplified to
    \begin{align}
        F_\AB + (1+\beta) \BB = 0, \qquad d_\AB \BB = 0.
    \end{align}
\end{remark}

{In order to investigate the various sectors systematically, we observe the existence of a  (generic) simplification of the field equations for certain values of the parameters. Indeed,  requiring $\beta T \neq - \id$ implies that $\id + \beta T$ is invertible, since $T$ is an automorphism. We summarise the various sectors-defining conditions as:}

\begin{definition}
\label{def:twisted_untwisted}
    {We distinguish the following conditions on the parameters $\beta\in\mathbb{R}$ and $T\in\Aut(\mathfrak{so}(3,1))$ of \qGR:\footnote{Notice that an analogous decomposition holds for \qBF for any Lie algebra.}
    \begin{enumerate}
    \item ${\beta T = \beta \id}$ is the \emph{topological sector condition}.
    \item ${\beta T \not = \beta \id}$ is the \emph{gravitational sector condition}.
    \item $\beta T\not = - \id$ is the \emph{invertibility sector condition}.
    \end{enumerate}
    When the $\bullet$ sector condition holds, we will say that ``the parameters $(\beta,T)$ are in the $\bullet$ sector'', or simply ``in the $\bullet$ sector''.}
\end{definition}

{
\begin{lemma}\label{lem:sectorrelations}
    We have the following relations between sectors:
    \begin{enumerate}
        \item In the topological sector, the invertibility condition is equivalent to $\beta\not=-1$
        \item In the invertibility sector,
            \begin{itemize}
                \item $\beta=0$ implies the topological sector
                \item $\beta =-1$ implies the gravitational sector
            \end{itemize}
        \item In the complement of the invertibility sector, $\beta T = -\id$,
        \begin{itemize}
            \item $\beta T = \beta \id \iff \beta = -1$ (\emph{topological sector})
            \item $\beta T \not = \beta \id \iff \beta \not = -1$ (\emph{gravitational sector})
        \end{itemize}
    \end{enumerate}
\end{lemma}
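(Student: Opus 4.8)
The plan is to reduce the whole statement to one elementary fact about $\Aut(\so(3,1))$: the only scalar endomorphism $\lambda\,\id$ of $\so(3,1)$ that is a Lie algebra automorphism is $\id$ itself. I would prove this in one line: $\lambda\,\id\in\Aut(\so(3,1))$ forces $\lambda[x,y]=[\lambda x,\lambda y]=\lambda^2[x,y]$ for all $x,y$, and picking $x,y$ with $[x,y]\neq 0$ (possible since $\so(3,1)$ is non-abelian, being semisimple) gives $\lambda=\lambda^2$, whence $\lambda=1$ using $\lambda\neq 0$. From this I would extract two consequences used throughout: (a) $\beta T=\beta\,\id$ if and only if $\beta=0$ or $T=\id$; and (b) $\beta T=-\id$ if and only if $\beta=-1$ and $T=\id$ — for the forward direction of (b), note $\beta\neq 0$ since $-\id\neq 0$, so $T=-\beta^{-1}\id$ is a scalar automorphism, hence $T=\id$, and then $-\beta^{-1}=1$.

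Granting this, claim (1) is immediate: in the topological sector $\beta T=\beta\,\id$, so the invertibility condition $\beta T\neq-\id$ becomes $\beta\,\id\neq-\id$, i.e.\ $(\beta+1)\id\neq0$, i.e.\ $\beta\neq-1$. For claim (2), working in the invertibility sector: if $\beta=0$ then $\beta T=0=\beta\,\id$, placing $(\beta,T)$ in the topological sector (invertibility is not even needed here); if $\beta=-1$ then the invertibility condition $-T\neq-\id$ reads $T\neq\id$, and since $\beta T=-T$ equals $\beta\,\id=-\id$ only when $T=\id$, we conclude $\beta T\neq\beta\,\id$, i.e.\ the gravitational sector.

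For claim (3) I would first invoke consequence (b): the complement of the invertibility sector is exactly the locus $\beta T=-\id$, which forces $\beta=-1$ and $T=\id$; in particular this single point automatically lies in the topological sector since $\beta\,\id=-\id=\beta T$. Hence in this region the assertion $\beta=-1$ and the assertion $\beta T=\beta\,\id$ are both always true, while $\beta\neq-1$ and $\beta T\neq\beta\,\id$ are both always false, so the two displayed biconditionals hold (true $\iff$ true, resp.\ false $\iff$ false) and the indicated sector labels match Definition \ref{def:twisted_untwisted}.

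There is essentially no obstacle here: the only content is the scalar-automorphism observation, and the only mild curiosity is that the complement of the invertibility sector degenerates to the single point $(\beta,T)=(-1,\id)$, so that the equivalences in claim (3) hold somewhat vacuously — worth a sentence of commentary, but no more.
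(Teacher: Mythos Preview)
Your arguments for parts (1) and (2) are correct and essentially match the paper's (the paper phrases the $\beta=-1$ case of (2) as multiplying the invertibility condition through by $\beta$, but this is exactly your observation that there $\beta T=\beta\id\iff T=\id$).

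For part (3) your approach diverges from the paper's, and the divergence is not innocuous. The paper argues by direct substitution: given $\beta T=-\id$, the topological condition $\beta T=\beta\id$ reads $-\id=\beta\id$, i.e.\ $\beta=-1$; this uses nothing about $T$ beyond linearity. You instead invoke your scalar-automorphism fact~(b) to force $(\beta,T)=(-1,\id)$, collapsing the non-invertible locus to a single point and rendering both biconditionals vacuous. That is valid \emph{if} $T$ is a genuine Lie-algebra automorphism, and Definition~\ref{def:TwistedPairing} does say $T\in\Aut(\gf)$. But the paper's actual usage shows that ``automorphism'' is meant in the linear sense: Proposition~\ref{prop:T_beta_id} explicitly treats $T=-\tfrac{1}{\beta}\id$ for arbitrary $\beta\neq 0$ and distinguishes $\beta=-1$ from $\beta\neq-1$ inside the non-invertible sector, and Proposition~\ref{prop:gH_PCH} takes $T=\tfrac{1}{\beta}(\star^{-1}-\id)$, which does not preserve the bracket on $\so(3,1)$. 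Under that reading your consequence~(b) is false, the non-invertible sector is a one-parameter family rather than a point, and your proof of (3) leaves the case $\beta\neq-1$ unaddressed. The paper's substitution argument is both shorter and robust against this ambiguity; you should use it instead.
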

\begin{proof}~
    \begin{enumerate}
        \item $\beta T = \beta \id$ holds in the topological sector. Hence, 
        \[
        \beta \id = \beta T \not= -\id \iff \beta \not = -1.
        \]
        \item Assume that $\beta T\not = -\id$. It is evident that $\beta = 0$ is compatible with this condition, but $\beta=0$ also implies $\beta T = \beta \id = 0$. On the other hand, if $\beta = -1$ we have 
        \[
        \beta^2 T \not = -\beta \id \iff -\beta T \not = -\beta \id \iff \beta T \not = \beta \id.
        \]
        \item Straightforward. Note that, in particular $\beta\not=0$.
    \end{enumerate}
\end{proof}
}

\begin{calculation}
\label{cal:TwoAlgebraic}
  {Whenever the invertibility condition ($\beta T \neq - \id$) holds,} we can rewrite the field equations \eqref{eq:EOMsub} as
  
  \mathtoolsset{showonlyrefs = false}
  \begin{subequations}\label{eq:EOMsub1}
  \begin{align}
    c &= - (\id + \beta T)^{-1} F_A,  \label{eq:EOMsub1_a}\\
    B &= - (\id + \beta T)^{-1} (d_A S + \tfrac{1}{2} e \wedge e) , \label{eq:EOMsub1_b}\\
    d_A e &= - (1 + \beta) \tau, \label{eq:EOMsub1_c}\\
    0 &= \killing{d_A B}{\cdot} - \killing{e \wedge \tau}{\cdot} - \killing{[c, S]}{\cdot}, \label{eq:EOMsub1_d}\\
    0 &= \eta(d_A \tau, \cdot) + \killing{c}{e \wedge \cdot} \label{eq:EOMsub1_e}.
  \end{align}
  \end{subequations}
  \mathtoolsset{showonlyrefs = true}
If, additionally, $\beta \neq -1$ we can rewrite Equations \eqref{eq:EOMsub1} as
\mathtoolsset{showonlyrefs = false}
\begin{subequations}\label{eq:EOMsub2}
    \begin{align}
        c &= - (\id + \beta T)^{-1} F_A,  \label{eq:EOMsub2_a}\\
        B &= - (\id + \beta T)^{-1} (d_A S + \tfrac{1}{2} e \wedge e) , \label{eq:EOMsub2_b}\\
        \tau &= - \frac{1}{(1 + \beta)} d_A e, \label{eq:EOMsub2_c}\\
        0 &= \killing{\frac{1}{(1+\beta)} \id - (\id + \beta T)^{-1} F_A}{e \wedge \cdot}, \label{eq:EOMsub2_d}\\
        0 &= \killing{ \left[ \frac{1}{(1+\beta)} \id - (\id + \beta T)^{-1} \right] (d_A e \wedge e)}{\cdot} \label{eq:EOMsub2_e}.
    \end{align}
\end{subequations}
\mathtoolsset{showonlyrefs = true}
\end{calculation}

\begin{remark}
    {When the invertibility condition holds}, the set of Equations \eqref{eq:EOMsub1} yields two purely algebraic constraints on the fields $B, c$ which do not influence the equations of $A$ and $e$.
    In particular the constraint for $B$ \eqref{eq:EOMsub1_b} looks like a deformation of the simplicity constraints that appear naturally in the Hamiltonian analysis of Palatini--Cartan--Holst theory \cite{Peldan1994, Barros01}. This constraint can be used to describe PCH theory as a constrained $BF$ theory (cf., e.g., \cite{Dupuis2011}). We will draw similar parallels in the following.
    Note that there are no constraints on the dynamics of the field $S$. This apparent decoupling will be made precise later by eliminating the field using general contractible pairs. 
\end{remark}

\begin{remark}
\label{rem:BdoesNothing}
    {Following Lemma \ref{lem:sectorrelations}, the invertibility condition is equivalent to $\beta\not=-1$ in the topological sector ($\beta T = \beta \id$). However, the converse is not true: $\beta \not =-1$ together with the invertibility condition ($\beta T  \neq -\id$) does not imply the topological sector condition.
    In that scenario, we can only conclude that} the set of Equations \eqref{eq:EOMsub2} now yields \emph{three} purely algebraic constraints on the fields $B, c$ and $\tau$ which comprise $\BB$.
\end{remark}

\subsection{Classical Equivalences}
\label{subsec:qBFclassicalEQ}

Before investigating the classical equivalences arising from the field equations of \qGR we will provide a definition of Palatini--Cartan--Holst theory which is a theory of gravity that can be linked to \qGR. (Cf. \cite[Definition 11]{Cattaneo2019}.)

\begin{definition}[Palatini--Cartan--Holst Theory]
\label{def:PCH}
    Let $\VV \lra M$ be a Minkowski bundle, a vector bundle with fibre the Minkowski vector space\footnote{With a slight abuse of notation we denote by $\eta$ also the (constant) ``Minkowski metric'' defined fibrewise by $(V,\eta)$.} $(V, \eta)$ over a $4$-dimensional manifold $M$, and let $P \lra M$ be the associated principal $SO(3,1)$ bundle. Let further $\gamma \in \RR$. The \emph{Palatini--Cartan--Holst theory} (abbr.\ PCH) is specified by the pair $(\FF_{PCH}, S_{PCH})$, where
    \begingroup
    \allowdisplaybreaks
    \begin{equation}
        \FF_{PCH} \coloneqq \underbrace{\Omega^1(M; \VV)}_{e} \times \underbrace{\AC_P}_{A},
    \end{equation}
    and
    \begin{align}
        S_{PCH}{(\gamma)} :&= \int_M \frac{1}{2} \tr\left[T_\gamma(F_A) \wedge e \wedge e\right],
    \end{align}
    \endgroup
    where we denoted by $T_\gamma$ the \emph{Holst automorphism} $\alpha \lmap \alpha + \frac{1}{\gamma} \star \alpha$. 
    
    \noindent In the limiting case $\gamma \lra \infty$ one has $T_\gamma \lra \id$ and the resulting theory is called Palatini--Cartan theory.\footnote{The nomenclature is somewhat inconsistent across the literature. See \cite[Nomenclature]{CattaneoSchiavina2018} and references therein for a review.}
\end{definition}

\begin{remark}[Nondegeneracy of coframes]
    Note that, in order to describe gravity, Palatini--Cartan theory (and all other coframe-based models considered here) needs to be supplemented with the requirement that $e\in\Omega^1(M,\mathcal{V})$ be nondegenerate, meaning that $e:TM \to \mathcal{V}$ is an isomorphism, so that $g=e^*\eta$ is a (Lorentzian) metric on $M$. All of our calculations and results are valid for the theories defined without the nondegeneracy requirement, and restrict in particular to their physically relevant nondegenerate versions. {For this reason, Definition \ref{def:PCH} is more general than what is usually considered.}
\end{remark}

The field equations found in Calculation \ref{cal:TwoAlgebraic}, given certain restrictions on the allowed values of $\beta$ and $T$, provide algebraic solutions for $c, B$ and $\tau$ and thus a starting point to investigate tentative classical equivalences between \qGR and classical theories with fewer fields.

\begin{definition}[(Quadratic) Generalised Holst Theory]
\label{def:ClassicalgH}
    Let $\beta \in \RR$, and let $T \in \Aut(\gf)$ be a deformation automorphism such that {the invertibility condition $\beta T \neq -\id$ holds}. 
    \begin{enumerate}
        \item The \emph{quadratic generalised Holst theory} (qgH) is specified by the pair $(\FF_{qgH}, S_{qgH})$, where
    \begingroup
    \allowdisplaybreaks
    \begin{equation}
        \FF_{qgH} \coloneqq \underbrace{\Omega^1(M; \VV)}_{e} \times \underbrace{\AC_P}_{A} \times \underbrace{\Omega^2(M; \VV)}_{\tau},
    \end{equation}
    and
    \begin{align}
        S_{qgH}(T, \beta) :&= \int_M \frac{1}{2} \killing{(\id + \beta T)^{-1} F_A}{e \wedge e} - \eta(\tau, d_A e) - \frac{(1 + \beta)}{2} \eta(\tau, \tau).
    \end{align}
    \endgroup 
    \item Let additionally $\beta\not=-1$. The \emph{generalised Holst theory} (gH) is specified by the pair $(\FF_{gH}, S_{gH})$, where
    \begingroup
    \allowdisplaybreaks
    \begin{equation}
        \FF_{gH} \coloneqq \underbrace{\Omega^1(M; \VV)}_{e} \times \underbrace{\AC_P}_{A},
    \end{equation}
    and
    \begin{align}
        S_{gH}{(T, \beta)} :&= \int_M \frac{1}{2} \left(\killing{(\id + \beta T)^{-1} F_A}{e \wedge e} + \frac{1}{(1 + \beta)} \eta(d_A e, d_A e)\right).
    \end{align}
    \endgroup
    \end{enumerate}
\end{definition}

\begin{remark}
\label{rem:qgH_Holst}
    The identifier ``generalised'' stems from the fact that qgH and gH theories can be seen as a generalisations of Palatini--Cartan--Holst theory (PCH), which is recovered (up to a boundary term) when ${T = \frac{1}{\beta}(\star^{-1} - \id)}$ (see Proposition \ref{prop:gH_PCH}). While certain choices of $T$ and $\beta$ yield topological theories,\footnote{The sense in which we will talk about topological theories will be rendered clear in Section \ref{subsec:qBFboundary}.} we will see that one can indeed model nontopological PCH theory using gH theory.
\end{remark}

\begin{lemma}
    The field equations of qgH are:
    \begin{subequations}\label{eq:EOM_qgH}
    \begin{align}
        \killing{(\id + \beta T)^{-1} F_A}{e \wedge \cdot} - \eta(d_A \tau, \cdot) &= 0, \qquad \text{from} \ \ \delta e \\
        \killing{(\id + \beta T)^{-1} (d_A e \wedge e)}{\cdot} - \eta(\tau, [\cdot, e]) &= 0, \qquad \text{from} \ \ \delta A \\
        d_A e + (1+\beta) \tau &= 0, \qquad \text{from} \ \ \delta \tau.
    \end{align}
    \end{subequations}
    For any $T, \beta$ satisfying the conditions of Definition \ref{def:ClassicalgH} the field equations of gH are:
    \begin{subequations}\label{eq:EOM_gH}
    \begin{align}
        \killing{\left((\id + \beta T)^{-1} - \frac{1}{1+\beta} \id \right) F_A}{e \wedge \cdot} &= 0, \qquad \text{from} \ \ \delta e \\
        \left((\id + \beta T)^{-1} - \frac{1}{1+\beta} \id \right) (d_A e \wedge e) &= 0, \qquad \text{from} \ \ \delta A.
    \end{align}
    \end{subequations}
\begin{proof}
    This follows directly from the variation of $S_{qgH}$ and $S_{gH}$.
\end{proof}
\end{lemma}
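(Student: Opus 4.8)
The plan is to obtain both sets of equations by a direct first variation, treating the field variations $\delta e$, $\delta A$ and (for qgH) $\delta\tau$ in turn and reading off the Euler--Lagrange form. Throughout I abbreviate $\kappa \coloneqq (\id + \beta T)^{-1}$, which exists precisely because the invertibility condition $\beta T \neq -\id$ is assumed. Since the curvature term $\tfrac12\killing{\kappa F_A}{e\wedge e}$ is common to $S_{qgH}$ and $S_{gH}$, I would settle its variations once, then handle the remaining terms of each theory separately.

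First, qgH. Varying in $\tau$ only the last two summands contribute, giving $-\eta(\delta\tau, d_A e) - (1+\beta)\eta(\delta\tau, \tau)$, and nondegeneracy of $\eta$ yields $d_A e + (1+\beta)\tau = 0$. Varying in $e$: the curvature term produces $\killing{\kappa F_A}{e\wedge\delta e}$ (the two cross terms agree because swapping the two one-form factors $e$ leaves $e\wedge e$ unchanged), while $-\eta(\tau, d_A e)$ produces, after integrating by parts and discarding the boundary term, the term $-\eta(d_A\tau, \delta e)$; together these give the first equation of \eqref{eq:EOM_qgH}. Varying in $A$, one uses $\delta F_A = d_A\delta A$ and $\delta_A(d_A e) = [\delta A, e]$; the curvature term gives $\tfrac12\killing{\kappa\, d_A\delta A}{e\wedge e}$, which upon integration by parts (see below) becomes $\killing{\kappa(d_A e\wedge e)}{\delta A}$ up to a boundary term, using $d_A(e\wedge e) = 2\, d_A e\wedge e$, and $-\eta(\tau, d_A e)$ contributes $-\eta(\tau, [\delta A, e])$. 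Collecting the coefficient of $\delta A$ gives the second equation of \eqref{eq:EOM_qgH}.

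For gH one additionally assumes $\beta \neq -1$ and repeats the above for the curvature term verbatim; the only new input is the quadratic torsion term $\tfrac{1}{2(1+\beta)}\eta(d_A e, d_A e)$. Varying in $e$ gives $\tfrac{1}{1+\beta}\eta(d_A\delta e, d_A e)$, and integrating by parts, using the Bianchi-type identity $d_A d_A e = [F_A, e]$ together with the identity $\eta([\chi,v],w) = -\killing{\chi}{v\wedge w}$ relating $\eta$ and the Killing form (derived in the excerpt), turns this into a multiple of $\killing{F_A}{e\wedge\delta e}$; combining with the curvature contribution $\killing{\kappa F_A}{e\wedge\delta e}$ gives $\killing{(\kappa - \tfrac{1}{1+\beta}\id)F_A}{e\wedge\cdot}$, the first equation of \eqref{eq:EOM_gH}. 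Varying in $A$, the torsion term gives $\tfrac{1}{1+\beta}\eta([\delta A, e], d_A e)$, which by the same identity (and $e\wedge d_A e = -\, d_A e\wedge e$) becomes a multiple of $\killing{d_A e\wedge e}{\delta A}$, and combines into $\killing{(\kappa - \tfrac{1}{1+\beta}\id)(d_A e\wedge e)}{\cdot}$, the second equation of \eqref{eq:EOM_gH}. In each theory the operator appearing in the two equations is the same, which is the consistency check one expects.

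The one step that needs genuine care is the integration by parts in the $\delta A$-variation of the curvature term: to move the exterior derivative off $\delta A$ one must commute it past the constant operator $\kappa$ and then invoke invariance of the Killing form, so one needs $\id + \beta T$ to commute with the covariant exterior derivative $d_A$ (equivalently, with the adjoint action of $\gf$) and to be self-adjoint for $\killing{\cdot}{\cdot}$. These properties hold for the deformation automorphisms relevant here (for instance the Holst-type maps built from $\id$ and $\star$) and are what make the computation close into the stated form. Everything else is bookkeeping of signs coming from form degrees and the variational bicomplex conventions, entirely parallel to Calculation \ref{cal:QExtendedEOM}, where the analogous boundary terms for $S_{\qGR}$ are recorded explicitly.
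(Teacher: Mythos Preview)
Your proposal is correct and follows exactly the paper's approach: the paper's proof is the single sentence ``This follows directly from the variation of $S_{qgH}$ and $S_{gH}$,'' and you have simply spelled out that variation term by term. One small remark: your caveat at the end about needing $(\id+\beta T)$ to commute with $d_A$ and to be self-adjoint for $\killing{\cdot}{\cdot}$ is not an extra hypothesis---since $T\in\Aut(\gf)$, it intertwines the adjoint action (hence commutes with $d_A$), and the twisted pairing of Definition~\ref{def:TwistedPairing} is symmetric by construction, so these properties are already built into the setup rather than being special to Holst-type maps.
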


\begin{proposition}
\label{prop:qBF_cleq_qgH}
    Consider quadratically extended General Relativity, and {assume that the invertibility condition holds} $\beta T  \neq -\id$, then
    \begin{enumerate}
        \item quadratically extended General Relativity is classically equivalent to quadratic generalised Holst theory.
    \end{enumerate} 
    {Additionally, whenever $\beta\not=-1$,
    \begin{enumerate}[resume]
        \item quadratic generalised Holst theory is classically equivalent to generalised Holst theory.
    \end{enumerate}}
\begin{proof}
    Denote the action of \qGR for $T, \beta$ as assumed, as:
    \begin{align}
        S_{\qGR} &= \int_M \killing{B}{F_A} + \eta(\tau, d_A e) + \killing{c}{(\id + \beta T) B + \frac{1}{2} e \wedge e + d_A S} + \frac{(1 + \beta)}{2} \eta(\tau, \tau).
    \end{align}
    Further, denote the action of qgH as
    \begin{align}
        S_{qgH} = \int_M \frac{1}{2} \killing{(\id + \beta T)^{-1} F_{\widetilde{A}}}{\widetilde{e} \wedge \widetilde{e}} - \eta(\widetilde{\tau}, d_{\widetilde{A}} \widetilde{e}) - \frac{(1 + \beta)}{2} \eta(\widetilde{\tau}, \widetilde{\tau}),
    \end{align}
    where $\widetilde{e} \in \Omega^1(M; \VV)$, $\widetilde{A} \in \AC_P$ and $\widetilde{\tau} \in \Omega^2(M; \VV)$. Recall that $\beta T  \neq -\id$ implies that $\beta T + \id$ is invertible. Define the map
    \begin{align}
        \tilde{F_1} \colon \Omega^1(M; \VV) \times \AC_P \times \Omega^2(M; \VV) &\lra \Omega^2(M; \gf^*) \times \Omega^2(M; \gf) \times \Omega^1(M; \gf^*) \\
        (e, A, \tau) & \lmap (B(e,A,\tau),c(e,A,\tau),S(e,A,\tau)) = - (\id + \beta T)^{-1} \left( \frac{1}{2} e \wedge e, F_A, 0  \right).
    \end{align}
    Then, observing that the domain of $\tilde{F_1}$ is $\FF_{qgH}$ we have the induced map
    \[
    {F_1} \colon \FF_{qgH} \to \FF_{\qGR}; \qquad {F_1} \colon (e,A,\tau) \mapsto (e,A,\tau,\tilde{F_1}(e,A,\tau)).
    \]
    (Observe, in passing, that setting $C\doteq\graph(\tilde{F_1})=\im(F_1)\subset \FF_{\qGR}$ we have that $F_1\colon \FF_{qgH} \to C$ is a diffeomorphism.)
    
    \noindent Then, we have (note in particular the lack of boundary terms):
    \begingroup
    \allowdisplaybreaks
    \begin{align}
        ({F}^*_1 S_{\qGR}) =& \ \int_M \eta(\tau, d_A e) + \frac{(1 + \beta)}{2} \eta(\tau, \tau) - \killing{(\id + \beta T)^{-1} F_A}{\frac{1}{2} e \wedge e}\\
        =& \ \int_M - \frac{1}{2} \killing{(\id + \beta T)^{-1} F_A}{e \wedge e} + \eta(\tau, d_A e) + \frac{1}{2} \eta(\tau, \tau) \\
        =& \ - \left(\int_M \frac{1}{2} \killing{(\id + \beta T)^{-1} F_A}{e \wedge e} - \eta(\tau, d_A e) - \frac{(1 + \beta)}{2} \eta(\tau, \tau) \right) = \ - S_{qgH}.
    \end{align}
    (Note that this matches with the strategy of Remark \ref{rmk:ClassicalEQsimple}.)
    \endgroup
    {Now, let us further assume that $\beta\neq -1$, and denote the action of gH for $\widetilde{e} \in \Omega^1(M; \VV)$, $\widetilde{A} \in \AC_P$ and $T, \beta$ as above by
    \begin{align}
        S_{gH} = \frac{1}{2} \int_M \killing{(\id + \beta T)^{-1} F_{\widetilde{A}}}{\widetilde{e} \wedge \widetilde{e}} + \frac{1}{(1 + \beta)} \eta(d_{\widetilde{A}} \widetilde{e}, d_{\widetilde{A}} \widetilde{e}).
    \end{align}
    
    To prove the second claim, define another map
    \begin{align}
        &\tilde{F_2} \colon \Omega^1(M; \VV) \times \AC_P \lra \Omega^2(M; \gf^*) \times \Omega^2(M; \gf) \times \Omega^1(M; \gf^*) \times \Omega^2(M; \VV) \\
        &(e, A)  \lmap (B(e,A),c(e,A),S(e,A), \tau(e, A) = - (\id + \beta T)^{-1} \left( \frac{1}{2} e \wedge e, F_A, 0, \frac{(\id + \beta T)}{(1+\beta)} d_A e \right).
    \end{align}
    Then, observing that the domain of $\tilde{F_2}$ is $\FF_{gH}$ we have
    \begin{align}
        {F}_2 \colon \FF_{gH} \to \FF_{\qGR}; \qquad {F}_2 \colon (e,A) \mapsto (e, A,\tau(e, A),F(e,A)),
    \end{align}
    a diffeomorphism onto $\mathrm{graph}(\tilde{F_2})$, and
    similarly to the previous case one obtains:
    \begingroup
    \allowdisplaybreaks
    \begin{align}
        {F_2}^*S_{qgH} &= \int_M \frac{1}{2} \killing{(\id + \beta T)^{-1} F_{A}}{e \wedge e} + \frac{1}{2(1+\beta)} \eta( d_A e, d_{A} e) \\
        &= \frac{1}{2} \int_M \killing{(\id + \beta T)^{-1} F_{A}}{e \wedge e} + \frac{1}{(1+\beta)} \eta( d_A e, d_{A} e) = S_{gH}.
    \end{align}
    \endgroup
    }
\end{proof}
\end{proposition}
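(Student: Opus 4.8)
The plan is to establish both classical equivalences by the strategy of Remark \ref{rmk:ClassicalEQsimple}: in each case I exhibit a submanifold $C$ of the larger field space that is the graph of the substitution provided by the algebraic field equations of Calculation \ref{cal:TwoAlgebraic}, together with the obvious diffeomorphism onto the smaller field space, and then check that the smaller action is pulled back (up to an overall sign, which is irrelevant for critical loci and symmetry distributions) to the restriction of the larger action, with no boundary terms generated, and that the symmetry distributions are matched. The invertibility hypothesis $\beta T\neq-\id$ enters precisely here: it is what makes $(\id+\beta T)^{-1}$ — hence the substitutions — well-defined.

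For claim (1), I take $C_1=\graph(\tilde F_1)\subset\FF_{\qGR}$, where $\tilde F_1\colon(e,A,\tau)\mapsto\bigl(B,c,S\bigr)=-(\id+\beta T)^{-1}\bigl(\tfrac12 e\wedge e,\,F_A,\,0\bigr)$; thus $C_1$ lies in the locus where the algebraic equations \eqref{eq:EOMsub1_a}–\eqref{eq:EOMsub1_b} hold with $S$ gauge-fixed to zero, and $F_1=(\mathrm{id},\tilde F_1)\colon\FF_{qgH}\xrightarrow{\ \sim\ }C_1$ is a diffeomorphism. The key computation: on $C_1$ the summand $\killing{c}{(\id+\beta T)B+\tfrac12 e\wedge e+d_A S}$ of $S_{\qGR}$ vanishes identically, since its argument is the left-hand side of \eqref{eq:EOMsub1_b} and $S=0$; what remains is $\int_M \killing{B}{F_A}+\eta(\tau,d_A e)+\tfrac{1+\beta}{2}\eta(\tau,\tau)$, and substituting $B=-\tfrac12(\id+\beta T)^{-1}(e\wedge e)$ and using symmetry of $\killing{\cdot}{\cdot}$ on two-forms rearranges this into $-S_{qgH}$. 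Since no integration by parts is performed, no boundary term appears. It then remains to verify $D_{\qGR}|_{C_1}\simeq D_{qgH}|_{\EL_{qgH}}$: the $\gf$-component of the $\df$-gauge action should restrict to the $SO(3,1)$ gauge symmetry of qgH (together with diffeomorphisms), while the abelian shift symmetries along the $V$- and $\gf^*$-directions must be seen to act trivially on $C_1$ or to be absorbed by the partial gauge fixing $S=0$.

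For claim (2), assuming in addition $\beta\neq-1$, I restrict further \emph{inside} $\FF_{qgH}$ to $C_2=\{\tau=-\tfrac{1}{1+\beta}d_A e\}$, the locus of the remaining equation \eqref{eq:EOMsub2_c} (equivalently the $\delta\tau$-equation of qgH), which is diffeomorphic to $\FF_{gH}$ via $(e,A)\mapsto(e,A,-\tfrac{1}{1+\beta}d_A e)$. Substituting $\tau$ into $S_{qgH}$ collapses $-\eta(\tau,d_A e)-\tfrac{1+\beta}{2}\eta(\tau,\tau)$ to $\tfrac{1}{2(1+\beta)}\eta(d_A e,d_A e)$, so the restriction of $S_{qgH}$ is exactly $S_{gH}$, again without boundary contributions; composing with $F_1$ yields the promised map $\FF_{gH}\to\FF_{\qGR}$ directly. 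The symmetry check is analogous to that of claim (1) (now no $\tau$-shift survives either), and the hypothesis $\beta\neq-1$ is used to solve for $\tau$ and to divide by $1+\beta$.

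The main obstacle is not the algebra of the action computations — which is routine once the substitutions are in place — but the verification that the involutive distribution of non-trivial symmetries of \qGR, pulled back along $F_1$ (resp. $F_2$), coincides on the relevant critical locus with that of qgH (resp. gH). Concretely one must track how the $\df$-gauge transformations decompose along $\df=\gf^*\oplus V\oplus\gf$ and argue that, after restricting to the graph, only the transformations already present in the target theory act non-trivially; the shift symmetries along $\gf^*$ and $V$, together with the $S$-sector, are the pieces requiring care, and this is where the "elimination of auxiliary fields" mechanism underlying the later BV statements is already visible at the classical level.
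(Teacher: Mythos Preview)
Your proposal is correct and follows essentially the same approach as the paper: both arguments define the graph maps $F_1$ and $F_2$ via the algebraic field equations of Calculation \ref{cal:TwoAlgebraic} (with $S=0$), compute the pulled-back actions to obtain $-S_{qgH}$ and $S_{gH}$ respectively without generating boundary terms, and invoke the strategy of Remark \ref{rmk:ClassicalEQsimple}. If anything, you are more explicit than the paper about the symmetry-matching step $D_{\qGR}|_{C_1}\simeq D_{qgH}$, which the paper leaves implicit in its appeal to Remark \ref{rmk:ClassicalEQsimple}; your identification of this as the main remaining obstacle is accurate, and the mechanism you sketch (the $\gf^*$- and $V$-components of the $\df$-gauge action becoming trivial or absorbed on the graph) is indeed what is later made precise in the BV analysis of Section \ref{sec:BVtheories}.
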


Using the relations between the Killing form, the trace and $\eta$ one can prove the following relation between gH and Palatini--Cartan--Holst theory:

\begin{proposition}
\label{prop:gH_PCH}
    Whenever $\beta \notin \{-1, 0\}$, generalised Holst theory with $T = \frac{1}{\beta}(\star^{-1} - \id)$ takes the following form:
    \begin{align}
        S_{gH}{\left(\frac{1}{\beta}(\star^{-1} - \id), \beta \right)} = \frac{1}{2} \int_M \tr[T_\gamma(F_A) \wedge e \wedge e] + \frac{1}{\gamma} d \eta(d_A e, e).
    \end{align}
    Where we denoted $T_\gamma=1+\frac{1}{\gamma}\star$ the Holst automorphism (Definition \ref{def:PCH}) and $\gamma \coloneqq 1 + \beta$. Hence, up to a boundary term and scaling, gH theory for the chosen parameters coincides with Palatini--Cartan--Holst theory.
\begin{proof}
    For the above choices of $\beta$ and $T$ we know that the action of generalised Holst theory is:
    \begin{align}
        S_{gH}{\left(\frac{1}{\beta}(\star^{-1} - \id), \beta \right)} = \frac{1}{2} \int_M \tr[F_A \wedge e \wedge e] + \frac{1}{\gamma} \eta(d_A e, d_A e).
    \end{align}
    One can rewrite the term $\eta(d_A e, d_A e)$ as:
    \begin{subequations}
    \begin{align}
        \eta(d_A e, d_A e) &= d \eta(d_A e, e) - \eta([F_A, e], e) = d \eta(d_A e, e) + \killing{F_A}{e \wedge e} \\
        &= d \eta(d_A e, e) + \tr [\star F_A \wedge e \wedge e].
    \end{align}
    \end{subequations}
    Thus one can rewrite the action as
    \begin{align}
        S_{gH}{\left(\frac{1}{\beta}(\star^{-1} - \id), \beta \right)} &= \frac{1}{2} \int_M \tr[F_A \wedge e \wedge e] + \frac{1}{\gamma} \left( d \eta(d_A e, e) + \tr [\star F_A \wedge e \wedge e] \right) \\
        &= \frac{1}{2} \int_M \tr\left[\left(F_A + \frac{1}{\gamma} \star F_A\right) \wedge e \wedge e \right] + \frac{1}{\gamma} d \eta(d_A e, e)
    \end{align}
    This proves the claim.
\end{proof}
\end{proposition}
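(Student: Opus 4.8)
The plan is to unwind the definition of generalised Holst theory (Definition \ref{def:ClassicalgH}(2)) for the prescribed automorphism, and then recognise the resulting integrand as the Palatini--Cartan--Holst Lagrangian plus a total derivative. First I would compute $\id + \beta T$ for $T = \frac{1}{\beta}(\star^{-1} - \id)$: since $\beta\cdot\frac{1}{\beta}=1$, the two copies of $\id$ cancel and one is left with $\id + \beta T = \star^{-1}$, hence $(\id + \beta T)^{-1} = \star$. Substituting into $S_{gH}(T,\beta)$ turns the curvature term $\killing{(\id+\beta T)^{-1}F_A}{e\wedge e}$ into $\killing{\star F_A}{e\wedge e}$, and using the dictionary between the Killing form, $\eta$, the identification $\gf\simeq\bigwedge^2 V$ and $\tr$ set up before Definition \ref{def:qeGR} (in the form $\killing{\chi}{e\wedge e}=\tr[\star\chi\wedge e\wedge e]$) this simplifies to $\tr[F_A\wedge e\wedge e]$. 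After this first step the action has the shape $\tfrac12\int_M\tr[F_A\wedge e\wedge e]+\tfrac1\gamma\int_M\eta(d_A e,d_A e)$, with the constant fixed by the substitution $\gamma\coloneqq 1+\beta$.

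The second, and main, step is to rewrite the quadratic torsion term $\eta(d_A e,d_A e)$. Here I would apply the graded Leibniz rule for $d$ to the $3$-form $\eta(d_A e,e)$ together with the structure equation $d_A d_A e=[F_A,e]$, which yields $d\,\eta(d_A e,e)=\eta(d_A e,d_A e)+\eta([F_A,e],e)$. Feeding in the algebraic identity $\eta([\chi,v],w)=-\killing{\chi}{v\wedge w}$ recorded just before Definition \ref{def:qeGR}, with $\chi=F_A$ and $v=w=e$, converts $\eta([F_A,e],e)$ into $-\killing{F_A}{e\wedge e}$, so that $\eta(d_A e,d_A e)=d\,\eta(d_A e,e)+\killing{F_A}{e\wedge e}=d\,\eta(d_A e,e)+\tr[\star F_A\wedge e\wedge e]$, the last equality being the Killing--$\tr$ relation once more. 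This exhibits the torsion-squared term as a total derivative plus a curvature term of \emph{Holst type}.

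Finally I would substitute this back, collect the coefficients of $\tr[F_A\wedge e\wedge e]$ and $\tr[\star F_A\wedge e\wedge e]$, and observe that the combination under the trace is exactly $F_A+\tfrac1\gamma\star F_A=T_\gamma(F_A)$, the leftover piece being $\tfrac1\gamma d\,\eta(d_A e,e)$; comparison with Definition \ref{def:PCH} then identifies the bulk part with $S_{PCH}(\gamma)$ up to the overall scaling. I expect the only delicate point to be bookkeeping: keeping the Hodge conventions on $\bigwedge^2 V$ consistent (so that $\star^{-1}$ is treated correctly and the $\tfrac12$, $\tfrac1\gamma$ and the sign of the boundary term come out as stated) and tracking the normalisation constant in the Killing--$\tr$--$\eta$ translation. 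None of the steps is conceptually deep; the content is entirely the translation dictionary between the $BF$-type data $(\killing{\cdot}{\cdot},\df)$ and the coframe data $(\tr,\star,\eta)$ already established in Section \ref{subsec:qBFclassical}.
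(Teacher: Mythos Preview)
Your proposal is correct and follows essentially the same route as the paper's proof: compute $(\id+\beta T)^{-1}=\star$, use the $\killing{\cdot}{\cdot}\leftrightarrow\tr[\star\cdot\wedge\cdot]$ dictionary to identify the curvature term, then rewrite $\eta(d_Ae,d_Ae)$ via the Leibniz rule and the identity $\eta([\chi,v],w)=-\killing{\chi}{v\wedge w}$ as a total derivative plus the Holst term. Your caveat about the Hodge conventions on $\bigwedge^2 V$ (so that $\star\circ\star$ comes out as $\id$ in the first step) is the only place to be careful, and the paper simply states that reduction without further comment.
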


\begin{remark}[On ``quadratic'' theories]
\label{rem:quadratic_torsion}
    Note that in passing from quadratic generalised Holst theory  and generalised Holst theory (Definitions \ref{def:ClassicalgH} and \ref{def:ClassicalgH}) we have dropped the identifier ``quadratic''. The action of qgH has a term proportional to $\eta(\tau, \tau)$ which justifies its name. Looking at Definition \ref{def:ClassicalgH} we see that gH features the term $\eta(d_A e, d_A e)$, which one could think of as a ``quadratic torsion'' term as well. However, the identity (cf. Proposition \ref{prop:gH_PCH})
    \begin{subequations}
    \begin{align}
        \eta(d_A e, d_A e) &= d \eta(d_A e, e) - \eta([F_A, e], e) = d \eta(d_A e, e) + \killing{F_A}{e \wedge e},
    \end{align}
    \end{subequations}
    shows that $\eta(d_A e, d_A e)$ contributes only with a topological and a boundary term. Hence it makes sense to consider generalised Holst theory as no longer characterised by ``quadratic torsion'', and the terminology we used reflects that.
\end{remark}

\begin{remark}[Palatini--Cartan--Holst theory from \qGR theory]\label{rem:gHlikePCH}
    Proposition \ref{prop:gH_PCH} has several interesting consequences. First of all, note that the chosen parameters belong to the intersection of the gravitational sector ($\beta T\not = \beta \id$) of \qGR  and the invertibility sector ($\beta T \not = - \id$). Linking this with Proposition \ref{prop:qBF_cleq_qgH} we can relate the action functional of \qGR and that of Palatini--Cartan--Holst theory (Definition \ref{def:PCH}), up to a boundary term, via classical equivalence:
    \begin{align}
        \left. S_{\qGR}{\left(\frac{1}{\beta}(\star^{-1} - \id), \beta \right)}\right|_C &= - \frac{1}{2} \int_M \underbrace{\tr\left[T_\gamma(F_A) \wedge e \wedge e \right]}_{\text{Palatini--Cartan--Holst}} + \underbrace{\frac{1}{\gamma} d \eta(d_A e, e)}_{\text{boundary term}},
    \end{align}
    where $C$ is defined as
    \begin{align}
        C \coloneqq \left\{(\AB, \BB) \in \FF_{\qGR} \Big| c = - (\id + \beta T)^{-1} F_A, \tau = - \frac{1}{(1+\beta)} d_A e \right\}.
    \end{align}
    Observe that the boundary term corresponds to the Nieh--Yan class; this, among other topological contributions, is discussed in \cite[eq. 2,3]{Rezende2009}.
\end{remark}

The following diagram summarises the \emph{classical} equivalence results obtained so far:

\begin{figure}[H]
    \[
    \xymatrix{
    &  \qGR \ar[dl]_{\genfrac{}{}{0pt}{}{}{(\id + \beta T) \neq 0}} \ar[dd]^{\genfrac{}{}{0pt}{}{(1 + \beta) \neq 0}{(\id + \beta T) \neq 0}} & & {\small S_{\qGR}=\int_M \paird{\BB}{F_\AB} + \frac{1}{2} \big( \paird{\BB}{\BB} + \beta \pairT{\BB}{\BB} \big)}  \\
        qgH \ar[dr]_{\beta \neq -1} &   & & {\small S_{qgH}= \int_M \frac{1}{2} \killing{(\id + \beta T)^{-1} F_A}{e \wedge e} - \eta(\tau, d_A e) - \frac{(1 + \beta)}{2} \eta(\tau, \tau)}\\
        & gH & & {\small S_{qH}=\frac{1}{2} \int_M \killing{(\id + \beta T)^{-1} F_{\widetilde{A}}}{\widetilde{e} \wedge \widetilde{e}} + \frac{1}{(1 + \beta)} \eta(d_{\widetilde{A}} \widetilde{e}, d_{\widetilde{A}} \widetilde{e})}
    }
    \]
\caption{Diagram of classically equivalent theories displaying the results of Proposition \ref{prop:qBF_cleq_qgH}. The classical equivalence between \qGR and gH is obtained via composition.} \label{fig:classical_diagram}
\end{figure}

\subsection{Classical boundary/Hamiltonian picture}
\label{subsec:qBFboundary}

In the following we review the classical boundary structure of \qGR. This is instrumental to obtaining a Hamiltonian version of the Lagrangian field theories we studied so far, and will give us direct information on the (local) degrees of freedom of the various models we consider and their symmetries, the study of which will be perfected in Section \ref{subsec:BVextensions}. 

The basis of the procedure adopted in the following section is the Kijowksi--Tulczijew construction \cite{KijTul1979} reviewed in Section \ref{subsubsec:KTconstruction}. It outputs a symplectic space of (boundary) fields (Definition \ref{def:GeometricPhase}), and a submanifold of ``Cauchy data'', defined by the field equations, which in turn defines the \emph{reduced phase space} (Definition \ref{def:ReducedPhase}) of the theory. {If the theory has a well-defined Cauchy problem, the submanifold is coisotropic, and is possibly specified by  a first-class constraint set. In certain cases, however, the restriction of the field equations to the boundary yield a constraint set that is not first-class. In that case, the reduced phase space cannot be obtained by means of coisotropic reduction alone, and the procedure is more involved.}

In Definition \ref{def:ClassicalTqBF}, \qGR theory was defined as a family of theories that depend on the choice of deformation automorphism $T$ and a real parameters $\beta$. In this section we want to understand how these choices influence the properties of \qGR by looking at the constraint algebra, through which we will infer the structure of the reduced phase space and learn about the (residual, local) degrees of freedom therein. For an application of these methods to gravity, see \cite{CattaneoSchiavina2018,CanepaCattaneoSchiavina2021,CanepaCattaneoTecchiolli}.

In particular we want to find out how these properties link to known theories of gravity. Thus at each step we will highlight what restrictions we impose on $T$ and $\beta$. In this investigation, the gravitational sector (see \ref{def:twisted_untwisted}) will be of particular interest. 

Recall that the classical space of fields of \qGR on a manifold $M$ is $\FF_{\qGR} \coloneqq \AC_P \times \Omega^2(M; \df^*)$, where $P \lra M$ is a $D$-bundle with $D$ integrating $\df$, which in all our examples is the trivial standard extension of $\gf \simeq \so(3,1)$ by an orthogonal module. When $\partial M\not=\emptyset$, the boundary inclusion map $\imath_\partial \colon \partial M \lra M$ induce the (boundary) principal bundle $P^\partial \coloneqq \imath_\partial^* P \to \partial M$.

\begin{proposition}
    For every $\beta\in \mathbb{R}$ and $T\in\Aut(\so(3,1))$, the geometric phase space of quadratically extended General Relativity is given by the symplectic manifold
    \begin{equation}
        \FF^\partial_{\qGR} \coloneqq \AC_{P^\partial} \times \Omega^2(\partial M; \df),
    \end{equation}
    with the exact symplectic form
    \begin{equation}
        \varpi^\partial_{\qGR} \coloneqq \int_{\partial M} \killing{\delta B}{\delta A} + \eta(\delta \tau, \delta e) + \killing{\delta c}{\delta S}.
    \end{equation}
\begin{proof}
    The space of pre-boundary fields $\widetilde{\FF}_{\qGR}$ is\footnote{In principle one should consider higher transverse jets of fields, but they drop out trivially.}
    \begin{align}
        \widetilde{\FF}_{\qGR} = \AC_{P^\partial} \times \Omega^2(\partial M, \df^*).
    \end{align}
    Denote the restricted fields in $\widetilde{\FF}_{\qGR}$ with the same symbols as in $\FF_{\qGR}$. The variation of $S_{\qGR}$ (compare Calculation \ref{cal:QExtendedEOM}) splits into a bulk term and a boundary term which can be seen as a $1$-form on $\widetilde{\FF}_{\qGR}$. This $1$-form takes the following form:
    \begin{equation}
        \widetilde{\alpha}_{\qGR} = \int_{\partial M} \killing{B}{\delta A} + \eta(\tau, \delta e) + \killing{c}{\delta S}.
    \end{equation}
    From $\widetilde{\alpha}_{\qGR}$ we deduce a pre-symplectic $2$-form on the pre-boundary fields $\widetilde{\FF}_{\qGR}$ via
    \begin{equation}
        \widetilde{\varpi}_{\qGR} \coloneqq \delta \widetilde{\alpha}_{\qGR} = \int_{\partial M} \killing{\delta B}{\delta A} + \eta(\delta \tau, \delta e) + \killing{\delta c}{\delta S}.
    \end{equation}
    One immediately sees that the kernel of the map $\widetilde{\varpi}_{\qGR}^\sharp \colon T\widetilde{\mathcal{F}}_{\qGR} \to T^*\widetilde{\mathcal{F}}_{\qGR}$, denoted for simplicity $\ker(\widetilde{\varpi}_{\qGR})$, is empty. Hence, the geometric phase space is the quotient $\mathcal{F}_{\qGR}^\partial \coloneqq \widetilde{\mathcal{F}}_{\qGR}/\mathrm{ker}(\widetilde{\varpi}_{\qGR})$, the quotient map $\pi\colon \widetilde{\mathcal{F}}_{\qGR} \to \mathcal{F}^\partial_{\qGR}$ is the identity, and the boundary $1$- and $2$-forms are simply
    \begin{subequations}
    \label{eq:BoundaryForms}
    \begin{align}
        \alpha^\partial_{\qGR} &= \int_{\partial M} \killing{B}{\delta A} + \eta(\tau, \delta e) + \killing{c}{\delta S}, \\
        \varpi^\partial_{\qGR} &= \int_{\partial M} \killing{\delta B}{\delta A} + \eta(\delta \tau, \delta e) + \killing{\delta c}{\delta S}.
    \end{align}
    \end{subequations}
\end{proof}
\end{proposition}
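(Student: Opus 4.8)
The plan is to run the Kijowski--Tulczyjew construction of Section~\ref{subsubsec:KTconstruction} directly on the pair $(\FF_{\qGR},S_{\qGR}(T,\beta))$, following the template used for related $BF$-type and gravitational models. The only input needed is the variation of $S_{\qGR}$ already performed in Calculation~\ref{cal:QExtendedEOM}, so the work reduces to reading off the boundary one-form, differentiating it, and checking the regularity hypothesis of Assumption~\ref{as:GeometricSymplectic}.

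First I would identify the pre-boundary fields. The Lagrangian \eqref{eq:qAction} is first order: the fields enter with derivatives only through $F_\AB$, i.e.\ through $F_A$, $d_A e$ and $d_A S$, each carrying at most one derivative. Hence, after restricting fields and their transverse jets to a collar $\partial M\times[0,\epsilon]$ and integrating by parts, no transverse jet survives in the boundary contribution, and $\widetilde{\FF}^\partial_{\qGR}=\AC_{P^\partial}\times\Omega^2(\partial M;\df^*)$, with the restricted fields written $(A,e,S)$ and $(B,\tau,c)$ as in the bulk. The boundary one-form $\widetilde{\alpha}^\partial_{\qGR}$, defined by $\delta S_{\qGR}=\mathsf{EL}+(\widetilde{\pi}^\partial)^*\widetilde{\alpha}^\partial_{\qGR}$, is then read off from the total-derivative terms of Calculation~\ref{cal:QExtendedEOM}: integrating by parts $\killing{B}{d_A\delta A}$ from $\killing{B}{F_A}$ contributes $\killing{B}{\delta A}$ on the boundary, $\eta(\tau,d_A\delta e)$ from $\eta(\tau,d_Ae)$ contributes $\eta(\tau,\delta e)$, and $\killing{c}{d_A\delta S}$ from $\killing{c}{d_AS}$ contributes $\killing{c}{\delta S}$, so that
\[
\widetilde{\alpha}^\partial_{\qGR}=\int_{\partial M}\killing{B}{\delta A}+\eta(\tau,\delta e)+\killing{c}{\delta S}
\]
and hence
\[
\widetilde{\varpi}^\partial_{\qGR}=\delta\widetilde{\alpha}^\partial_{\qGR}=\int_{\partial M}\killing{\delta B}{\delta A}+\eta(\delta\tau,\delta e)+\killing{\delta c}{\delta S}.
\]
Neither form depends on $T$ or $\beta$, which enter only the bulk part of $\delta S_{\qGR}$; this accounts for the uniformity of the statement in the parameters.

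The one genuinely substantive step is to verify Assumption~\ref{as:GeometricSymplectic}, namely that $\widetilde{\varpi}^\partial_{\qGR}$ is already symplectic, so that the presymplectic reduction is trivial. I would argue pointwise on $\partial M$: the three summands pair $\delta B$ with $\delta A$, $\delta\tau$ with $\delta e$, and $\delta c$ with $\delta S$ through the Killing form on $\so(3,1)$ (nondegenerate, $\so(3,1)$ being semisimple) and through $\eta$ on $V$ (nondegenerate by hypothesis); the three blocks involve pairwise disjoint fields, and the wedge pairing $\Omega^1(\partial M;\cdot)\times\Omega^2(\partial M;\cdot)\to\Omega^3(\partial M)$ used to integrate over the three-manifold $\partial M$ is itself nondegenerate. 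Therefore $\ker\big((\widetilde{\varpi}^\partial_{\qGR})^\sharp\big)=0$, the quotient map $\pi^\partial$ is the identity, and consequently $\FF^\partial_{\qGR}=\widetilde{\FF}^\partial_{\qGR}$, $\alpha^\partial_{\qGR}=\widetilde{\alpha}^\partial_{\qGR}$ and $\varpi^\partial_{\qGR}=\delta\alpha^\partial_{\qGR}$. Finally, using the identifications $V\simeq V^*$ and $\so(3,1)\simeq\so(3,1)^*$ via $\eta$ and the Killing form (introduced above for $\gf=\so(3,1)$), one rewrites $\Omega^2(\partial M;\df^*)\cong\Omega^2(\partial M;\df)$, reading $\BB$ as $(c,\tau,B)$, which puts $\FF^\partial_{\qGR}$ and $\varpi^\partial_{\qGR}$ in exactly the claimed form.

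I do not expect a real obstacle: the two points requiring (routine) care are the bookkeeping of the first step --- confirming that the collar restriction is genuinely ``na\"ive'', i.e.\ that no transverse jet leaks into $\widetilde{\alpha}^\partial_{\qGR}$ --- and the nondegeneracy check above, which is immediate once one notices the block structure of $\widetilde{\varpi}^\partial_{\qGR}$. I would also remark in passing that the same computation applies verbatim to \qBF over an arbitrary finite-dimensional metric Lie algebra $(\df,\paird{\cdot}{\cdot})$, the only input being nondegeneracy of $\paird{\cdot}{\cdot}$.
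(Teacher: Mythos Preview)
Your proposal is correct and follows essentially the same approach as the paper: identify the pre-boundary space, read off $\widetilde{\alpha}^\partial_{\qGR}$ from the boundary terms of the variation in Calculation~\ref{cal:QExtendedEOM}, differentiate, and observe that the kernel of the resulting two-form is trivial. You supply more detail than the paper does---in particular the block-structure argument for nondegeneracy and the explicit remark that the Lagrangian is first order so no transverse jets survive---but these are elaborations rather than a different route.
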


\begin{remark}
  It should be stressed again that the above result is completely independent of $T$ and $\beta$. Thus both the gravitational and topological sector have the same (exact) symplectic space of (classical) boundary fields
  \begin{equation}
    \FF^\partial_{\qGR} = \AC_{P^\partial} \times \Omega^2(\partial M; \df); \qquad \varpi^\partial_{\qGR} = \delta \alpha^\partial_{\qGR},
  \end{equation}
  with $\alpha^\partial_{\qGR}$ and $\varpi^\partial_{\qGR}$ as in Equation \eqref{eq:BoundaryForms}.
\end{remark}

Let us now inspect the constraints imposed on the boundary data of \qGR which let us define the reduced phase space of Definition \ref{def:ReducedPhase}. {By restricting the field equations \eqref{eq:EOMsub} to the boundary, one can implement six constraints} using the Lagrange multipliers $\ab = (\theta, \epsilon, \sigma) \in \Omega^0(\partial M; \df)$ and $\HB = (\alpha, \rho, \zeta) \in \Omega^1(\partial M; \df)$ as
\begin{alignat}{2}
    H_\theta \coloneqq& \ \int_{\partial M} \killing{d_A B - e \wedge \tau - [c, S]}{\theta}, &&\qquad K_\alpha \coloneqq \ \int_{\partial M} \killing{(\id + \beta T) B + d_A S + \frac{1}{2} e \wedge e}{\alpha},  \\
    I_\epsilon \coloneqq& \ \int_{\partial M} \eta(d_A \tau, \epsilon) + \killing{c}{e \wedge \epsilon}, &&\qquad L_\rho \coloneqq \ \int_{\partial M} \eta(d_A e + (1+\beta) \tau, \rho), \\
    J_\sigma \coloneqq& \ \int_{\partial M} \killing{d_A c}{\sigma}, &&\qquad M_\zeta \coloneqq \ \int_{\partial M} \killing{F_A + (\id + \beta T) c}{\zeta}.
\end{alignat}

These are local functionals on the symplectic space $\FF^\partial_{\qGR}$. We will determine their mutual dependence, and compute their Poisson brackets to establish which of them are \emph{first class}\footnote{Recall that every first class constraint eliminates \emph{two} local degrees of freedom \cite{Dirac1958, HenneauxTeitelboim1992}. This is because the underlying procedure is symplectic reduction, which is a combination of a restriction to a coisotropic submanifold and a quotient by its characteristic foliation.} and how the choices for $T$ and $\beta$ influence the outcome. 

\begin{calculation}
\label{calc:ShiftS}
    Whenever {the invertibility condition $\beta T  \neq -\id$ holds}, the constraint functions $J_\sigma$ and $M_\zeta$ are dependent, upon choosing $\zeta = -d_A \left( (\id + \beta T)^{-1} \sigma \right)$.
\begin{proof}
Let us set $\zeta = - d_A \left((\id+\beta T)^{-1}\sigma \right)$ and keep in mind that $d_A F_A = 0$:
  \begin{align}
    M_{- d_A \left((\id+\beta T)^{-1}\sigma \right)} &= \ \int_{\partial M} \killing{F_A + (\id + \beta T) c}{- d_A \left((\id+\beta T)^{-1}\sigma \right)} \\
    &= \int_{\partial M} - d\killing{(\id+\beta T)^{-1} F_A + c}{\sigma} + \killing{d_A c}{\sigma} \\
    &\simeq \int_{\partial M} \killing{d_A c}{\sigma} = J_\sigma.
  \end{align}
  In the last line we dropped the boundary term by using that $\partial M = \emptyset$ and used the commutation of $d_A$ with automorphisms.
\end{proof}
\end{calculation}

\begin{definition}
    Let $c_1, c_2\in \mathbb{N}$ respectively denote the number of independent local first- and second-class constraints, and let $d$ be the number of local degrees of freedom in $\FF^\partial$.
    A theory is said to be topological if $d = 2c_1 + c_2$.
\end{definition}

\begin{theorem}
\label{theo:ConstraintAlgebraQBF}
  {In the invertibility sector} the Poisson brackets of the constraint functions $H_{\theta}, I_\epsilon, K_\alpha, L_\rho, M_\zeta$ are given by:
  \begingroup
  \allowdisplaybreaks
  \begin{alignat}{3}
    \{K_\alpha, K_{\overline\alpha}\} &= 0, \quad && \{K_\alpha, L_{\rho}\} = \int_{\partial M} \killing{\beta (\id  - T)\alpha}{e \wedge \rho}, \quad &&\{K_\alpha, M_{\zeta}\} = 0, \\
    \{K_\alpha, H_{\theta}\} &= K_{[\theta, \alpha]}, \quad && \{K_\alpha, I_{\epsilon}\} = L_{[\alpha, \epsilon]} + \int_{\partial M} \killing{ \beta (T - \id) \alpha}{(\tau \wedge \epsilon)} , \quad &&~ \\
    ~& \quad && \{L_\rho, L_{\overline\rho}\} = 0, \quad &&\{L_\rho, M_{\zeta}\} = 0, \\
    \{L_\rho, H_{\theta}\} &= L_{[\theta, \rho]}, \quad && \{L_\rho, I_{\epsilon}\} = M_{\rho \wedge \epsilon} + \int_{\partial M} \killing{ \beta (\id - T) c}{\rho \wedge \epsilon}, \quad &&~ \\
    \{M_\zeta, M_{\overline\zeta}\} &= 0, \quad &&\{M_\zeta, H_{\theta}\} = M_{[\theta, \zeta]}, \quad && \{M_\zeta, I_{\epsilon}\} = 0,\\
    \{H_{\theta}, H_{\overline \theta}\} &= H_{[\theta, \overline \theta]}, \quad &&\{H_{\theta}, I_{\epsilon}\} = I_{[\theta, \epsilon]}, \quad && \{I_{\epsilon}, I_{\overline\epsilon}\} = 0.
  \end{alignat}
  \endgroup
\begin{proof}
    See Appendix \ref{App:CalcBdyConstr}.
\end{proof}
\end{theorem}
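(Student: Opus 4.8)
The plan is to reduce the statement to an explicit but mechanical computation of Hamiltonian vector fields and Poisson brackets on the symplectic manifold $(\FF^\partial_{\qGR},\varpi^\partial_{\qGR})$ established above, with $\varpi^\partial_{\qGR}$ as in \eqref{eq:BoundaryForms}. From that form one reads off that the canonically conjugate pairs of boundary fields are $(A,B)$, $(e,\tau)$ and $(S,c)$, paired respectively by the Killing form on the $\gf$/$\gf^*$ slots and by $\eta$ on the $V$ slot; hence for any two of the local functionals $H_\theta, I_\epsilon, K_\alpha, L_\rho, M_\zeta$ the bracket is assembled from their functional derivatives in these three sectors. First I would compute $\delta(\cdot)$ for each constraint, integrate by parts — no boundary contributions appear because $\partial M$ is closed — and rewrite everything using the identity $\killing{\chi}{v\wedge w}=-\eta([\chi,v],w)$ together with the facts that $T\in\Aut(\so(3,1))$ is constant (so commutes with $d_A$), respects the Lie bracket, and preserves the Killing form. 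This produces closed formulas for the Hamiltonian vector fields $X_{H_\theta},X_{I_\epsilon},X_{K_\alpha},X_{L_\rho},X_{M_\zeta}$ as (field-dependent) flows on $\FF^\partial_{\qGR}$, after which each bracket $\{F,G\}=X_F(G)$ is obtained by differentiating the relevant densities along these flows.

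The computation is organised around one structural observation: the triple $(H_\theta,I_\epsilon,J_\sigma)$ is the moment map for the residual gauge action of the structure Lie algebra $\df=\so(3,1)\oplus V\oplus\so(3,1)^*$ — $H$ the rotations, $I$ the $V$-shifts, $J$ the $\gf^*$-shifts — whereas $K_\alpha,L_\rho,M_\zeta$ are the three components of the deformed flatness constraint $F_\AB+(\text{deformation})\,\BB\approx 0$; in the invertibility sector $J$ is dependent on $M$ by Calculation \ref{calc:ShiftS}, which is why it does not appear in the list. Because every constraint density is a gauge-invariant contraction of covariant objects, the flow of $H_\theta$ acts on the other functionals by the corresponding infinitesimal $\so(3,1)$-transformation, which delivers the whole ``covariant'' family $\{H_\theta,H_{\overline\theta}\}=H_{[\theta,\overline\theta]}$, $\{H_\theta,I_\epsilon\}=I_{[\theta,\epsilon]}$, $\{K_\alpha,H_\theta\}=K_{[\theta,\alpha]}$, $\{L_\rho,H_\theta\}=L_{[\theta,\rho]}$, $\{M_\zeta,H_\theta\}=M_{[\theta,\zeta]}$ at once. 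A handful of the remaining brackets vanish purely because the constraints share no conjugate field pair — e.g. $M_\zeta$ depends only on $(A,c)$ while its conjugate partners $B,S$ are absent from $M_{\overline\zeta}$, $L_\rho$ and $I_\epsilon$, so $\{M_\zeta,M_{\overline\zeta}\}=\{M_\zeta,L_\rho\}=\{M_\zeta,I_\epsilon\}=0$ — and a second group vanishes after a single integration by parts, e.g. $\{L_\rho,L_{\overline\rho}\}$ and $\{K_\alpha,M_\zeta\}$ (the latter being exactly the computation in Calculation \ref{calc:ShiftS}). The genuinely new content sits in $\{K_\alpha,L_\rho\}$, $\{K_\alpha,I_\epsilon\}$, $\{L_\rho,I_\epsilon\}$ and in $\{K_\alpha,K_{\overline\alpha}\}$, $\{I_\epsilon,I_{\overline\epsilon}\}$: here the deformation term $\beta\pairT{\BB}{\BB}$ — equivalently the factors $\id+\beta T$ in $K,M$ and $1+\beta$ in $L$ — obstructs the na\"ive covariance, leaving the ``anomalous'' densities $\killing{\beta(\id-T)\alpha}{e\wedge\rho}$, $\killing{\beta(T-\id)\alpha}{\tau\wedge\epsilon}$ and $\killing{\beta(\id-T)c}{\rho\wedge\epsilon}$, while $\{K_\alpha,K_{\overline\alpha}\}$ and $\{I_\epsilon,I_{\overline\epsilon}\}$ require a cancellation to come out zero. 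A useful sanity check throughout is that every deformation-dependent correction is proportional to $\id-T$ and hence disappears when $T=\id$, in agreement with Corollary \ref{cor:qBF_topological_cases} (there the theory is of $BF$ type and its flatness constraint must be first class).

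The obstacle here is not conceptual but bookkeeping: controlling signs, form-degrees, and the integration-by-parts terms through the chain of identifications $V\cong V^*$, $\so(3,1)\cong\so(3,1)^*\cong\textstyle\bigwedge^2 V$ and the twist by $T$, and in particular getting the coefficients of the anomalous terms right. I would organise the appendix proof bracket by bracket: first dispose of the field-content zeros, then the integration-by-parts zeros, then the covariant family via the moment-map argument, and finally grind through $\{K,L\}$, $\{K,I\}$, $\{L,I\}$, $\{K,K\}$, $\{I,I\}$ directly from the Hamiltonian vector fields, using the antisymmetry $\{F,G\}=-\{G,F\}$ as an independent check on the surviving terms.
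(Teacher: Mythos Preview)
Your proposal is correct and follows essentially the same route as the paper's appendix: compute the Hamiltonian vector fields of each constraint by varying the densities with respect to the canonical pairs $(A,B)$, $(e,\tau)$, $(S,c)$, then evaluate each bracket as $\iota_{X_F}\delta G$ and simplify with the Killing/$\eta$ identity and integration by parts, concentrating the genuine work on $\{K,L\}$, $\{K,I\}$, $\{L,I\}$. Your additional layer of organisation---the moment-map interpretation of $H_\theta$ to dispatch the covariant brackets in one stroke, and the ``shared conjugate pair'' argument for the trivial zeros---is a nice streamlining that the paper leaves implicit, but it does not change the underlying computation.
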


\begin{remark}[On first and second class constraint sets]\label{rmk:firstvssecondclass}
    One can immediately see that the above relations do not define a Poisson subalgebra for all choices of $T$ and $\beta$ due to the functional form of $\{K_\alpha, L_{\rho}\}$, $\{K_\alpha, I_{\epsilon}\}$ and $\{L_\rho, I_{\epsilon}\}$ for arbitrary $T$, which means they are not all first-class. The failure of being involutive of some of these constraints (i.e.\ their second-class nature) is controlled by $\beta(\mathrm{id} - T)\in\mathrm{Aut}(\gf)$. {Hence, the constraint set is first class iff $T$ and $\beta$ are in the topological sector $(\beta T = \beta \id)$, in which case the theory reduces to a $BF$ theory (cf.\ Definition \ref{def:ClassicalBFBB}).}

    A theory whose constraint subalgebra is not first class cannot have a well-posed Cauchy problem.\footnote{Our construction---following Kijowski--Tulczyjew---aims at constructing coisotropic submanifolds of the (symplectic) space of boundary fields. These are often given by first-class constraint sets therein.  Dirac’s classical construction \cite{Dirac1958} instead defines constraints in a different ambient space, so also second class constraints are allowed in systems with well-posed Cauchy problems.} This is because existence and uniqueness of solutions corresponds to the projection of the EL locus to the boundary being Lagrangian. Since by construction such projection lies in the constrained set, it being not coisotropic means that it cannot contain any Lagrangian submanifold, and thus in particular the Cauchy problem is not well defined. {An example of this scenario within gravitational theories is given by Palatini--Cartan theory with a null boundary \cite{CanepaCattaneoTecchiolli}.}
    
    {Another way of thinking about this is as follows. Let $M=\Sigma \times [0,1]$, so that $\FF^\partial = \overline{\FF}_{\Sigma\times\{0\}} \times \FF_{\Sigma\times\{1\}}$, where the overline indicates the symplectic manifold with opposite simplectic structure. Denote by $L_M\doteq \pi(EL)$ the projection to the space of boundary fields of the set of solutions to the field equations, as above. (In regular Hamiltonian theories, this is the graph of the Hamiltonian evolution at finite time.) If $L_M$ is Lagrangian (in the sense that its tangent spaces are isotropic with isotropic complement), then $\pi_1(L_M)\subset \overline{\FF}_{\Sigma\times\{0\}}$ must be coisotropic \cite[Lemma A.16]{CattaneoMnevWernli} (see \cite{CMRCorfu} for more details).}
    
    As we will see later on, this problem is resolved ``effectively'', in the sense that elimination of auxiliary fields (BV pushforward or classical equivalence) takes us to theories with better-behaved evolution equations. Indeed, we can further distinguish two classes of examples {within the invertibility sector}:
    \begin{enumerate}
    \item For $\beta \not= -1$ \qGR theories are (BV-)equivalent to generalised Holst theories (Proposition \ref{prop:qBFreduceALL}), which have a well defined Cauchy problem. Such equivalence goes through by enforcing the equations of motion for only a part of the fields (specifically for $B$, $c$ and $\tau$). If one restricts the constraint functions to solutions of said partial EoM's, most of the constraint functions vanish, and the remaining two (namely $I_\epsilon$ and $H_\theta$), coincide with those reported in Equations \eqref{eq:constraintsgH} (cf.\ with \cite{CattaneoSchiavina2018}).
    \item For $\beta = -1$ the above procedure fails because $\tau$ is eliminated from the constraint function $L_\rho$, which is therefore not automatically satisfied, even if we enforced EoM's for $c$ and $B$. In this case three constraint functions remain (namely $I_\epsilon$, $H_\theta$ and $L_\rho$), and the theory reduces to ``half-shell'' Palatini Cartan Holst theory (cf. Proposition \ref{prop:qgH_PC}).
    \end{enumerate}
    Ultimately this suggests that neither classical equivalence nor BV pushforward necessarily preserve the boundary structure of a field theory, an observation that is consistent with \cite{SimaoCattaneoSchiavina}. Field theories are realisations of ``a cohomology'' of physical data. Some realisations may be better suited for certain applications, such as solving field equations, and others for different applications, such as quantisation.
    
\end{remark}

The Lagrange multiplier $\ab = (\theta, \epsilon, \sigma) \in \Omega^0(\partial M; \df)$ has $6+4+6 = 16$ local degrees of freedom, while the field $\HB = (\alpha, \rho, \zeta) \in \Omega^1(\partial M; \df)$ has $18+12+18=48$. The following is a direct consequence, which justifies the nomenclature:

\begin{theorem}
\label{theo:Top_if_untwisted}
In the topological sector, \qGR is topological.
\begin{proof}
  We begin with $\beta = -1$ and $T = \id$. In this case the generators can be written as
  \begin{align}
      J_\ab \coloneqq \int_{\partial M} \paird{d_\AB \BB}{\ab}, \qquad L_\HB \coloneqq \int_{\partial M} \paird{F_\AB}{\HB}.
  \end{align}
  Those are just the generators for a $BF$ theory. We move on to choices of $T, \beta$ for which $\beta T = \beta \id$. This holds for $\beta = 0$ and $T = \id$ respectively. In this case the generators are
  \begin{align}
      J_\ab \coloneqq \int_{\partial M} \paird{d_\AB \BB}{\ab}, \qquad L_\HB \coloneqq \int_{\partial M} \paird{F_\AB + (1+\beta) \BB}{\HB},
  \end{align}
  which are the generators for a $BF+ (1+\beta)BB$ theory. Since $(1+\beta) \neq 0$ one can further see that $J_\ab$ and $L_\HB$ are dependent upon choosing $\HB = - \frac{1}{(1+\beta)} d_\AB \ab$. Hence, $\HB$ and $\ab$ have 54 collective local degrees of freedom, but $16$ of those generate redundant first-class constraints. Thus, the number of independent constraints is 48 and \qGR is topological.
\end{proof}
\end{theorem}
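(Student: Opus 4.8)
The plan is to reduce to the well-understood $BF+\Lambda BB$ situation and then run a direct degree-of-freedom count on the geometric phase space. The first step is to feed the topological sector condition $\beta T=\beta\id$, i.e.\ $\beta(T-\id)=0$, into the action: by Remark \ref{rem:untwisted_simpler} it collapses to $S_{\qGR}(\beta)=S_{BF}(1+\beta)$, a $BF+\Lambda BB$ theory for the $16$-dimensional metric Lie algebra $\df=\so(3,1)^*\oplus V\oplus\so(3,1)$ with $\Lambda=1+\beta$, and the restriction of its field equations to $\partial M$ is generated by $J_\ab=\int_{\partial M}\paird{d_\AB\BB}{\ab}$ for $\ab\in\Omega^0(\partial M;\df)$ and $L_\HB=\int_{\partial M}\paird{F_\AB+(1+\beta)\BB}{\HB}$ for $\HB\in\Omega^1(\partial M;\df)$. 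I would then split along Lemma \ref{lem:sectorrelations}(1) into the invertibility case $\beta\neq-1$ and the case $\beta=-1$, which in the topological sector forces $T=\id$ and reduces the theory to \emph{pure} $BF$ by Corollary \ref{cor:qBF_topological_cases}.

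For the count itself, the geometric phase space is $\FF^\partial_{\qGR}=\AC_{P^\partial}\times\Omega^2(\partial M;\df)$ over the $3$-manifold $\partial M$, so its number of local degrees of freedom is $d=\binom{3}{1}\cdot 16+\binom{3}{2}\cdot 16=48+48=96$, while the Lagrange multipliers $\ab,\HB$ carry $16+48=64$ naive constraints. First-classness of this whole set ($c_2=0$) follows when $\beta\neq-1$ from Theorem \ref{theo:ConstraintAlgebraQBF} together with Remark \ref{rmk:firstvssecondclass}, since every potentially second-class bracket there is proportional to $\beta(\id-T)$, which vanishes in this sector; when $\beta=-1$ it is the familiar first-classness of the pure-$BF$ constraint algebra. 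The remaining input is reducibility: from the Bianchi identity $d_\AB F_\AB=0$ and $\partial(\partial M)=\emptyset$ one gets $L_{d_\AB\mu}=-(1+\beta)\,J_\mu$ for $\mu\in\Omega^0(\partial M;\df)$ (degenerating to $L_{d_\AB\mu}\equiv 0$ when $\beta=-1$), a $16$-dimensional family of relations among the $64$ generators. Hence the number of independent first-class constraints is $c_1=64-16=48$ and $c_2=0$, so $2c_1+c_2=96=d$ and \qGR is topological.

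The bracket computation is not the obstacle, being inherited wholesale from Theorem \ref{theo:ConstraintAlgebraQBF}; the delicate point is establishing that $16$ is the \emph{exact} number of independent relations, i.e.\ that there are no further dependencies among the surviving $48$ generators and that each of them is genuinely nonvanishing on $\FF^\partial_{\qGR}$. For $\beta\neq-1$ this rests on $\BB$ being an unconstrained field, so $L_\HB$ vanishes identically only for $\HB=0$ and the only relations come from $\HB\in\im(d_\AB)$; for the pure-$BF$ branch one must instead locate the $16$-dimensional reducibility inside the $L$-constraints (the shift symmetry being on-shell reducible) rather than in the $J$--$L$ cross-relation, and check that $J_\ab$ is nondegenerate in $\ab$. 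Making the constant-rank/local-genericity hypotheses underlying this Kijowski--Tulczyjew-type count explicit is the one place where some care is needed.
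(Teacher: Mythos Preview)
Your proposal is correct and follows essentially the same route as the paper: reduce to $BF+(1+\beta)BB$ for $\df$, write the compact generators $J_\ab,L_\HB$, exhibit the $16$-parameter reducibility $L_{d_\AB\mu}=-(1+\beta)J_\mu$ (equivalently $\HB=-\tfrac{1}{1+\beta}d_\AB\ab$), and count $2c_1=96=d$. Your write-up is in fact more careful than the paper's sketch---you explicitly invoke Theorem \ref{theo:ConstraintAlgebraQBF} for first-classness, separate the pure-$BF$ branch cleanly, and flag the point about $16$ being the \emph{exact} reducibility rank, which the paper leaves implicit.
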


{Theorem \ref{theo:Top_if_untwisted} is not surprising, since we already knew that in the topological sector \qGR theory reduces to a $BF + \Lambda BB$ (topological) theory. This tells us that \qGR theory cannot directly be taken as a model for gravity, as it should instead have two local degrees of freedom. The classical equivalences of Proposition \ref{prop:qBF_cleq_qgH}, together with \ref{prop:gH_PCH} prompt us to study \qGR theory in the gravitational sector.

Note that results derived in Section \ref{subsec:qBFclassicalEQ} (see Figure \ref{fig:classical_diagram}) always required the invertibility condition $\beta T  \neq -\id$. This rules out the special case $\beta \neq 0$, $T = - \frac{1}{\beta} \id$ (the complement of the invertibility sector),  which yields a $BF$ theory with an additional $\beta$-dependent} deformation term (see Proposition \ref{prop:T_beta_id}, below).  {From Lemma \ref{lem:sectorrelations} $\beta=-1$ differentiates the topological from the gravitational sector \emph{within} the invertibility sector and, indeed, from Theorem \ref{theo:ConstraintAlgebraQBF} one can directly infer that the constraint algebra for this special case differs from that of for pure $BF$ theory only by the constraint $L_\rho$, which is sensitive to $\beta=-1$.} {To sum up the above, one has the following special choice of parameters for \qGR:}

\begin{proposition}[Non-{invertible} \qGR theory]
\label{prop:T_beta_id}
    {In the complement of the invertibility sector, $\beta T = -\id$,} we have
    \begin{align}
            S_{\qGR}\left[ - \frac{1}{\beta} \id, \beta \right] &= \int_M \killing{B}{F_A} + \eta(\tau, d_A e) + \killing{c}{\frac{1}{2} e \wedge e + d_A S} + \frac{ (1 + \beta)}{2} \eta(\tau, \tau) \\
            &= \int_M \paird{\BB}{F_\AB} + \frac{ (1 + \beta)}{2} \eta(\tau, \tau).
    \end{align}
    The resulting theory {lies in the topological sector whenever $\beta = -1$, and in the gravitational sector otherwise}.
\end{proposition}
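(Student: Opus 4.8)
The plan is to prove the two displayed identities by direct substitution into the explicit form of the \qGR action recorded in Calculation \ref{cal:QExtendedEOM}, and then to read the sector membership off the conditions of Definition \ref{def:twisted_untwisted}.

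First I would start from the expanded action \eqref{eq:qAction},
\[
S_{\qGR}(T,\beta) = \int_M \killing{B}{F_A} + \eta(\tau, d_A e) + \killing{c}{(\id + \beta T) B + \tfrac{1}{2} e \wedge e + d_A S} + \tfrac{(1+\beta)}{2}\,\eta(\tau,\tau),
\]
and impose the condition defining the complement of the invertibility sector, $\beta T = -\id$. Since $T\in\Aut(\gf)$ this forces $\beta\neq 0$ and $T = -\tfrac{1}{\beta}\id$, and, crucially, it makes $\id + \beta T = 0$. Hence the pairing $\killing{c}{(\id+\beta T)B}$ vanishes identically and what remains is precisely the first displayed line of the proposition.

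For the second identity I would recognise the surviving terms as $\int_M \paird{\BB}{F_\AB}$ together with the quadratic term $\tfrac{(1+\beta)}{2}\eta(\tau,\tau)$. Using the decomposition of the curvature \eqref{eq:qBFcurvature}, namely $F_\AB = \big(F_A,\ d_A e,\ d_A S + \tfrac{1}{2}[e,e]\big)$, the reading of $\BB$ as $(c,\tau,B)\in\Omega^2(M;\df)$, and the pairing $\paird{\cdot}{\cdot}$ of Definition \ref{def:StandardExtension_triv}, one expands
\[
\paird{\BB}{F_\AB} = \killing{B}{F_A} + \eta(\tau, d_A e) + \killing{c}{d_A S + \tfrac{1}{2} e\wedge e}
\]
after the standing identifications $V\simeq V^*$ and $\gf\simeq\bigwedge^2 V\simeq\gf^*$; comparison with the first line then yields the second equality. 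Equivalently, one may start from $S_{\qGR}(T,\beta) = S_{BF}(1) + \beta\int_M\pairT{\BB}{\BB}$ and observe that when $\beta T=-\id$ the twisted cross-term contributes $\beta\,\killing{B}{T(c)} = -\killing{B}{c}$, which exactly cancels the corresponding term $\killing{B}{c}$ of $\tfrac{1}{2}\paird{\BB}{\BB}$, so that $\tfrac{1}{2}\paird{\BB}{\BB}+\tfrac{\beta}{2}\pairT{\BB}{\BB}$ collapses to $\tfrac{(1+\beta)}{2}\eta(\tau,\tau)$. The only point requiring care here is the bookkeeping of the various identifications in the pairing; I do not anticipate a genuine obstacle.

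Finally, for the sector claim I would simply note that with $\beta T = -\id$ one has $\beta T = \beta\id$ if and only if $-\id = \beta\id$, i.e. if and only if $\beta = -1$. By Definition \ref{def:twisted_untwisted} the former is precisely the topological sector condition and $\beta\neq-1$ the gravitational sector condition; this coincides with item (3) of Lemma \ref{lem:sectorrelations}, which may be cited directly.
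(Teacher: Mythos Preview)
Your proposal is correct and matches the paper's (implicit) approach: the proposition is stated without a separate proof, the identities being immediate from direct substitution of $\id+\beta T=0$ into the expanded action \eqref{eq:qAction} and comparison with the curvature decomposition \eqref{eq:qBFcurvature}, while the sector claim is exactly item (3) of Lemma \ref{lem:sectorrelations}. One small slip in your alternative argument: from Definition \ref{def:ClassicalTqBF} with $\alpha=1$ one has $S_{\qGR}=S_{BF}(1)+\tfrac{\beta}{2}\int_M\pairT{\BB}{\BB}$, not $\beta\int_M\pairT{\BB}{\BB}$; the missing factor of $\tfrac12$ is harmless for the cancellation you describe, but worth correcting.
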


{Following Remark \ref{rmk:firstvssecondclass}, the constraint set is bound to be only second-class in the gravitational sector of \qGR.}
\begin{remark}[Equivalent boundary analysis]\label{rmk:classicalboundaryreplacement}
    {In the following we will not tackle a direct analysis of \qGR in the gravitational sector, but we will analyse the boundary structure of gH and qgH theories instead, which are both classically equivalent\footnote{While this strongly suggests that our results should hold for \qGR, in principle one needs to check if the classical equivalence preserves the boundary structure. {A stronger hint of this will be provided by the discovery that the classical equivalence can actually be promoted to a full BV equivalence, see Section \ref{subsec:MainResults}.}} to \qGR in virtue of Proposition \ref{prop:qBF_cleq_qgH}.

    Note that Proposition \eqref{prop:T_beta_id} covers all combinations of $\beta$ and $T$ in the gravitational sector that are not in the parameter ranges of Proposition \ref{prop:qBF_cleq_qgH} (which instead assumed the invertibility sector, i.e. $\beta T \not= -\id$). Thus, from now on, we restrict ourselves to the invertible gravitational sector by fixing $\beta T  \neq -\id$ and requiring $\beta T\not = \beta \id$, which lets us use said classical equivalences. If $\beta = -1$ one \emph{needs} to use qgH as the equivalent model, while for $\beta \neq -1$ one can use the simpler gH theory by rule of Proposition \ref{prop:qBF_cleq_qgH}.}
\end{remark}

We have: 

\begin{proposition}\label{prop:qgH_PC}
    For $T \neq \id - \star^{-1}$ and $\beta = -1$, qgH reduces to so-called ``half-shell Palatini--Cartan theory''
    \begin{align}
        S_{qgH}\left[T=\id - \star^{-1},\beta=-1 \right] = \int_M \underbrace{\frac{1}{2} \tr[F_A \wedge e \wedge e]}_{\text{Palatini--Cartan}} \underbrace{- \eta(\tau, d_A e)}_{\text{zero torsion}}.
    \end{align}
    Moreover, for any $T\not = \id$ and $\beta = -1$, the space of classical boundary fields as well as the degrees of freedom of quadratic generalised Holst are the same as that of half-shell Palatini--Cartan theory. 
\end{proposition}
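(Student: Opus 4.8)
I would prove this in two stages, mirroring the two claims. For the first, the plan is purely computational: unfold Definition~\ref{def:ClassicalgH} at $\beta=-1$, $T=\id-\star^{-1}$. Since $1+\beta=0$ the term $\tfrac{1+\beta}{2}\eta(\tau,\tau)$ drops out, and since $\id+\beta T=\id-(\id-\star^{-1})=\star^{-1}$ we have $(\id+\beta T)^{-1}=\star$; hence $S_{qgH}(\id-\star^{-1},-1)=\int_M\tfrac12\killing{\star F_A}{e\wedge e}-\eta(\tau,d_A e)$. Invoking the identity between the Killing form, the trace and $\star$ already used in the proof of Proposition~\ref{prop:gH_PCH} (namely $\killing{\star F_A}{e\wedge e}=\tr[F_A\wedge e\wedge e]$) turns the first summand into $\tfrac12\tr[F_A\wedge e\wedge e]$, which is the displayed action. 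I would remark that this is genuinely \emph{not} Palatini--Cartan theory: the $\delta\tau$-equation of the extra term $-\eta(\tau,d_A e)$ is $d_A e=0$, so torsion-freeness is imposed only ``half on-shell'' --- whence the name.

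For the second claim, the plan is to run the Kijowski--Tulczyjew construction of Section~\ref{subsubsec:KTconstruction} for $S_{qgH}(T,-1)$ with $\beta=-1$ and $T\neq\id$ arbitrary (so $\id-T$ is invertible, i.e.\ the invertibility condition of Definition~\ref{def:ClassicalgH} holds). Restricting $A,e,\tau$ to $\partial M$ gives the pre-boundary fields $\widetilde{\FF}^\partial_{qgH}=\AC_{P^\partial}\times\Omega^1(\partial M;\VV)\times\Omega^2(\partial M;\VV)$, which do not depend on $T$; and integrating by parts in $\delta S_{qgH}(T,-1)$ exactly as in Calculation~\ref{cal:QExtendedEOM} (using that $\star$, hence $(\id-T)^{-1}$, commutes with $d_A$) I would read off the pre-boundary one-form
\[ \widetilde{\alpha}^\partial_{qgH}=\int_{\partial M}\tfrac12\killing{(\id-T)^{-1}\delta A}{e\wedge e}-\eta(\tau,\delta e). \]
For $T=\id-\star^{-1}$ this is \emph{literally} the pre-boundary one-form $\int_{\partial M}\tfrac12\tr[\delta A\wedge e\wedge e]-\eta(\tau,\delta e)$ of half-shell Palatini--Cartan theory, so in that case there is nothing left to do. For generic $T$ the point is that $(\id-T)^{-1}$ is an invertible, field-independent endomorphism of the $\so(3,1)$-factor, inserted into a term already paired with $e\wedge e$; computing $\widetilde{\varpi}^\partial_{qgH}=\delta\widetilde{\alpha}^\partial_{qgH}$ one finds its kernel is $\{(X_A,0,-[(\id-T)^{-1}X_A,e]):X_A\in\Omega^1(\partial M;\so(3,1))\}$, which, because $(\id-T)^{-1}$ is invertible and by nondegeneracy of $e$, has exactly the same shape as in the $T=\id-\star^{-1}$ case. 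Pre-symplectic reduction (Assumption~\ref{as:GeometricSymplectic}) therefore produces a geometric phase space $\FF^\partial_{qgH}$ symplectomorphic to that of half-shell Palatini--Cartan theory --- the invertible twist $(\id-T)^{-1}$ being absorbed by the reduction --- so the spaces of classical boundary fields agree.

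It then remains to match degrees of freedom. The plan is to restrict the field equations~\eqref{eq:EOM_qgH} at $\beta=-1$ to $\partial M$ and keep their tangential components as constraints; three families survive --- from $\delta\tau$ (which reads $d_A e=0$), from $\delta e$, and from $\delta A$ --- as anticipated in Remark~\ref{rmk:firstvssecondclass}(2). Invertibility of $(\id-T)^{-1}$ again lets the symplectomorphism of the previous paragraph carry these to the constraints of half-shell Palatini--Cartan theory with the same number of independent generators, so that the Cauchy subset of $\FF^\partial_{qgH}$ and its reduction --- hence the count of local degrees of freedom --- coincide with those of half-shell Palatini--Cartan theory (cf.\ \cite{CattaneoSchiavina2018}). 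The hard part is the reduction step: one must check that inserting the twist $(\id-T)^{-1}$ really does not change the ranks of the coframe-dependent maps $\cdot\wedge e$ that control the pre-symplectic reduction, nor the number of \emph{independent} constraints. This is believable because the twist and the coframe act on disjoint slots, but it needs to be verified with care --- all the more since the constraint \emph{algebra} does depend on $T$, through the $\beta(\id-T)$-terms of Theorem~\ref{theo:ConstraintAlgebraQBF}, so the first-class/second-class partition can change even though its total size, and hence the dimension count, does not.
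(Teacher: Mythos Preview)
Your proposal is essentially correct and follows the same route as the paper's own proof. For the first claim you supply the explicit computation (which the paper actually omits), and for the second you run the Kijowski--Tulczyjew construction for general $T\neq\id$, $\beta=-1$, identify the kernel of $\widetilde{\varpi}^\partial_{qgH}$ as parametrised by $X_A$ with $X_e=0$ and $X_\tau$ determined, and argue that the invertible twist $(\id-T)^{-1}$ does not change the reduction---exactly as the paper does.

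Two minor differences are worth noting. First, your kernel formula $X_\tau=-[(\id-T)^{-1}X_A,e]$ uses the Lie bracket action of $\gf$ on $V$, whereas the paper writes $X_\tau=\star[\widetilde{T}[X_A]\wedge e]$ with the internal Hodge star; your version actually follows more directly from the relation $\eta([\chi,v],w)=-\killing{\chi}{v\wedge w}$ applied to the kernel condition $\eta(X_\tau,\delta e)=\killing{\widetilde{T}[X_A]}{\delta e\wedge e}$. Second, for the degrees-of-freedom count the paper simply defers to \cite[Theorem~4.28 and Proposition~4.29]{CattaneoSchiavina2018} after constructing the explicit reduction map $\pi_M$ via a background connection, while you propose to track the three surviving constraint families through the symplectomorphism. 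Your approach is more self-contained but requires the rank check you flag at the end; the paper sidesteps this by observing that the computation is verbatim that of the cited reference once the reduction map is in hand.
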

\begin{proof}
    {
    Note that for any $T \neq \id$ and $\beta = -1$ the action of $qgH$ looks as follows:
    \begin{align}
        S_{qgH}(T, -1) :&= \int_M \frac{1}{2} \killing{(\id - T)^{-1} F_A}{e \wedge e} - \eta(\tau, d_A e).
    \end{align}
    Denoting the invertible automorphism $(\id - T)^{-1}$ by $\widetilde{T}$ this yields the following pre-boundary two-form:
    \begin{align}
        \widetilde{\varpi}^\partial_{qgH} = \int_{\partial M} \killing{\widetilde{T} [\delta A]}{\delta e \wedge e} - \eta(\delta \tau, \delta e)
    \end{align}
    Its kernel is given by
    \begin{align}
        (X_\tau) &= \star \left[(X_{\widetilde{T}[A]}) \wedge e \right], \\
        (X_e) &= 0.
    \end{align}
    Thus one can fix $A$ using the following vertical vector field:
    \begin{align}
        \XX = (X_{\widetilde{T}[A]}) \ddelta{}{\widetilde{T}[A]} + \star \left[(X_{\widetilde{T}[A]}) \wedge e \right] \ddelta{}{\tau}.
    \end{align}
    Flowing along $\XX$ we can choose a background connection $\underline{A}$ to obtain an explicit map to the space of boundary fields:
    \begin{align}
        \pi_M \colon
        \begin{cases}
            \textbf{t} = \tau + \star \left[ \widetilde{T}[\underline{A} - A] \wedge e \right], \\
            \textbf{e} = e.
        \end{cases}
    \end{align}
    From here on one can follow the proofs of \cite[Theorem 4.28]{CattaneoSchiavina2018} and \cite[Proposition 4.29]{CattaneoSchiavina2018} mostly verbatim while carrying in mind that $\tau$ is an element of $\Omega^2(M; \VV)$ in our case, and hence the internal Hodge star $\star$ in the above formulas is required.
    }
\end{proof}

\begin{remark}
{Half-shell PC theory has been investigated in \cite[Section 4.3]{CattaneoSchiavina2018}. It is however somewhat pathological in that the projection to the boundary of the space of solutions of the Euler--Lagrange field equations is not Lagrangian (which is compatible with the observation that the constraint set is not coisotropic).}
\end{remark}

{We now look at the invertible, gravitational sector where additionally $\beta \not=-1$.
Note that, in this sector \qGR is classically equivalent to generalised Holst theory (Proposition \ref{prop:qBF_cleq_qgH}).} As anticipated in Remark \ref{rmk:classicalboundaryreplacement}, we directly analyse qgH theory. The proof of the following Proposition is analogous to the proof of \cite[Theorem 4.6]{CattaneoSchiavina2018} with minor modifications due to $\beta$ and $T$.

\begin{proposition}[See {\cite[Theorem 4.6, Corollary 4.23 and Remark 4.27]{CattaneoSchiavina2018}}]\label{prop:HolstHodgeFour}
    The geometric phase space of generalised Holst theory {in the invertible, gravitational sector with $\beta\not=-1$} is the symplectic manifold given by the fibre bundle
    \begin{equation}
        \FF^\partial_{gH} \lra \Omega^1_{nd}(\partial M; \VV^\partial),
    \end{equation}
    where $\VV^\partial\coloneqq\imath^*\VV$ denotes the induced vector bundle on $\partial M$, and the generic fibre over $e \in \Omega_{nd}^1(\partial M; \VV^\partial)$ is given by the reduced space of connections $\AC_{P^\partial}^{red} \coloneqq \AC_{P^\partial} /_\sim$ with respect to the equivalence relation
    \begin{equation}
        A \sim A^\prime \quad \Longleftrightarrow \quad A - A^\prime \in \ker(e \wedge \cdot), \qquad e \wedge \cdot \ \colon \Omega^1(\partial M; \textstyle{\bigwedge}^2 \VV^\partial) \lra \Omega^2(\partial M; \textstyle{\bigwedge}^3 \VV^\partial).
    \end{equation}
    We denote equivalence classes by $[A]_e \in \AC_{P^\partial}^{red}$. The symplectic form is given by
    \begin{equation}
        \varpi^\partial_{gH} = \int_{\partial M} 2 \killing{\left[\frac{1}{1 + \beta} \id + (\id + \beta T)^{-1} \right] \delta \overline{A}}{\overline{e} \wedge \delta \overline{e}}.
    \end{equation}
    The surjective submersion $\pi_{gH} \colon \FF_{gH} \lra \FF^\partial_{gH}$ has the explicit form:
    \begin{align}
        \pi_{gH} \colon
        \begin{cases}
            \overline{e} = e, \\
            \overline{A} = [A]_e,
        \end{cases}
    \end{align}
    where $\overline{e} \in \Omega^1_{nd}(\partial M; \VV^\partial)$ and $[A]_e \in \AC_{P^\partial}^{red}$.
    Moreover, the field equations of generalised Holst theory define the following constraints on the space of pre-boundary fields, basic w.r.t. $\pi_{gH}$ and first-class:
    \begin{subequations}
    \mathtoolsset{showonlyrefs = false}
    \label{eq:constraintsgH}
        \begin{align}
            \widetilde L_\alpha &= \int_{\partial M} \killing{\left((\id + \beta T)^{-1} - \frac{1}{1+\beta} \id \right) \alpha}{e \wedge d_{\widetilde A} e}, \\
            \widetilde J_\mu &= \int_{\partial M} \killing{\left((\id + \beta T)^{-1} - \frac{1}{1+\beta} \id \right) F_A}{e \wedge \mu},
        \end{align}
    \end{subequations}
    \mathtoolsset{showonlyrefs = true}
    where $\alpha \in \Omega^0(\partial M; \bigwedge^2 \VV^\partial)$, $\mu \in \Omega^0(\partial M; \VV^\partial)$ and $\widetilde{A}$ is defined as in \cite[Lemma 4.12]{CattaneoSchiavina2018}.
\end{proposition}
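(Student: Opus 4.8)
The plan is to run the Kijowski--Tulczijew construction of Section \ref{subsubsec:KTconstruction} on $S_{gH}$, essentially transcribing the Palatini--Cartan--Holst boundary analysis of \cite[Section~4]{CattaneoSchiavina2018}, with the Holst automorphism $T_\gamma$ replaced by the two $\beta,T$-dependent operators built into the statement: the combination $\mathcal{T}_{\mathrm{eff}}\coloneqq(\id+\beta T)^{-1}+\frac{1}{1+\beta}\id$ (the one appearing in the claimed symplectic form) will govern the symplectic data, while $\mathcal{T}'\coloneqq(\id+\beta T)^{-1}-\frac{1}{1+\beta}\id$ (the one in the field equations \eqref{eq:EOM_gH}) will govern the constraints; both are well-defined precisely because we are in the invertible sector and $\beta\neq-1$. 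Heuristically this works because $S_{gH}$ equals a ``generalised PCH'' action with automorphism $\mathcal{T}_{\mathrm{eff}}$ up to a boundary term (cf.\ the computation proving Proposition \ref{prop:gH_PCH}), and a boundary term in the action shifts the pre-boundary one-form only by a $\delta$-exact piece, hence leaves the pre-symplectic form unchanged.

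The first step is to compute $\delta S_{gH}$ and read off, on $\widetilde{\FF}^\partial_{gH}=\AC_{P^\partial}\times\Omega^1(\partial M;\VV^\partial)$, the pre-boundary one-form $\widetilde{\alpha}^\partial_{gH}$ together with $\widetilde{\varpi}^\partial_{gH}=\delta\widetilde{\alpha}^\partial_{gH}$, which comes out proportional to $\killing{\mathcal{T}_{\mathrm{eff}}\,\delta A}{e\wedge\delta e}$. The second step is to compute $\ker(\widetilde{\varpi}^\partial_{gH})$: testing a candidate null vector $(X_A,X_e)$ against variations of $A$ alone forces $e\wedge X_e=0$, hence $X_e=0$ once $e\in\Omega^1_{nd}(\partial M;\VV^\partial)$; testing against variations of $e$ alone forces $\mathcal{T}_{\mathrm{eff}}X_A$ to pair trivially with every $e\wedge Y_e$, hence $X_A\in\ker(e\wedge\cdot\,)$. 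Both deductions use, besides invertibility of $\mathcal{T}_{\mathrm{eff}}$, the linear-algebra lemmas of \cite[Section~4]{CattaneoSchiavina2018}, valid for nondegenerate $e$. Thus the pre-symplectic reduction is the stated bundle over $\Omega^1_{nd}(\partial M;\VV^\partial)$ with fibre $\AC_{P^\partial}^{red}=\AC_{P^\partial}/\ker(e\wedge\cdot\,)$; the reduced form $\varpi^\partial_{gH}=2\int_{\partial M}\killing{\mathcal{T}_{\mathrm{eff}}\,\delta\overline{A}}{\overline{e}\wedge\delta\overline{e}}$ is then automatically well-defined and symplectic (being the quotient of $\widetilde{\varpi}^\partial_{gH}$ by its own kernel), and $\pi_{gH}(e,A)=(e,[A]_e)$.

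For the constraints I would restrict the Euler--Lagrange equations \eqref{eq:EOM_gH} of gH to $\partial M$ and pair them against Lagrange multipliers $\alpha\in\Omega^0(\partial M;\textstyle{\bigwedge}^2\VV^\partial)$ and $\mu\in\Omega^0(\partial M;\VV^\partial)$, which produces exactly $\widetilde{L}_\alpha$ and $\widetilde{J}_\mu$ of \eqref{eq:constraintsgH} (with $\mathcal{T}'$ in the role of $T_\gamma$). Since $d_A e$ is not a function of the boundary data alone, $\widetilde{L}_\alpha$ must be expressed through a fixed representative $\widetilde{A}$ of $[A]_e$, chosen as in \cite[Lemma~4.12]{CattaneoSchiavina2018}. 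It then remains to check, following \cite[Corollary~4.23, Remark~4.27]{CattaneoSchiavina2018}, that $\widetilde{L}_\alpha$ and $\widetilde{J}_\mu$ are invariant under shifts of $A$ by $\ker(e\wedge\cdot\,)$ --- hence basic with respect to $\pi_{gH}$, by an integration-by-parts argument using the defining property of that kernel --- and that their mutual Poisson brackets close, i.e.\ the constraint set is first class; the bracket computation is the PCH one verbatim, because $\mathcal{T}'$ is $\gf$-equivariant (hence commutes with $d_A$) and therefore rides through all the manipulations.

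I expect the only genuinely non-mechanical point --- the ``minor modifications due to $\beta$ and $T$'' --- to be verifying that $\mathcal{T}_{\mathrm{eff}}$ inherits the nondegeneracy the Holst automorphism enjoys in \cite{CattaneoSchiavina2018}: invertibility on $\textstyle{\bigwedge}^2\VV$ and compatibility with the internal Hodge structure so that $\ker(e\wedge\cdot\,)$ is preserved, which is what makes the kernel computation above come out as stated. In the $\gf$-equivariant setting in force here, Schur's lemma forces $\mathcal{T}_{\mathrm{eff}}=a\,\id+b\,\star$ with $a,b\in\RR$, and these properties hold as soon as $(a,b)\neq(0,0)$; for the Holst choice $T=\frac{1}{\beta}(\star^{-1}-\id)$ this is automatic, since there $(\id+\beta T)^{-1}=\star$, so $\mathcal{T}_{\mathrm{eff}}=\star+\frac{1}{\gamma}\id$ with $\gamma=1+\beta$, whose (complex) eigenvalues $\pm i+\frac{1}{\gamma}$ never vanish. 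So in the regime actually exploited later the modification is cosmetic; in full generality one should add this nondegeneracy of $\mathcal{T}_{\mathrm{eff}}$ to the hypotheses. Granting it, everything else --- the variational computation, the $e\wedge\cdot$ lemmas, the descent to the quotient, and the constraint algebra --- is the argument of \cite{CattaneoSchiavina2018} with $T_\gamma$ replaced by $\mathcal{T}_{\mathrm{eff}}$ for the symplectic data and by $\mathcal{T}'$ for the constraints.
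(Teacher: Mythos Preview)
Your proposal is correct and is exactly the approach the paper indicates: the paper gives no explicit proof, only the remark immediately preceding the proposition that the argument is ``analogous to the proof of \cite[Theorem 4.6]{CattaneoSchiavina2018} with minor modifications due to $\beta$ and $T$.'' You have correctly identified those modifications as the replacement of $T_\gamma$ by $\mathcal{T}_{\mathrm{eff}}=(\id+\beta T)^{-1}+\frac{1}{1+\beta}\id$ in the symplectic data and by $\mathcal{T}'=(\id+\beta T)^{-1}-\frac{1}{1+\beta}\id$ in the constraints, and your observation that $S_{gH}$ differs from a PCH-type action with automorphism $\mathcal{T}_{\mathrm{eff}}$ only by a boundary term (cf.\ Proposition~\ref{prop:gH_PCH})---hence has the same pre-symplectic form---is precisely the mechanism that makes the transcription work.

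Your caveat on the invertibility of $\mathcal{T}_{\mathrm{eff}}$ is a genuine subtlety the paper does not address: it is needed for the kernel computation to yield exactly $\ker(e\wedge\cdot)$, and while it holds for the Holst choice used downstream, it can fail for other $T$ in the stated range. One small imprecision in your discussion: the paper's $\Aut(\gf)$ evidently means linear automorphisms rather than Lie-algebra automorphisms (its main example $T=\frac{1}{\beta}(\star^{-1}-\id)$ is not the latter), so your Schur-lemma reduction to $a\,\id+b\,\star$ requires the additional assumption of $\gf$-equivariance, which again is satisfied in the Holst case but not in general.
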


\begin{remark}
    {In the invertible gravitational sector with $\beta\not=-1$, the constraints \eqref{eq:constraintsgH} define a coisotropic submanifold in $(\FF^\partial_{gH},\varpi^\partial_{gH})$. Their projection to $\FF^\partial_{gH}$ are first class constraints eliminating $10$ local degrees of freedom. {Consequently} the reduced classical phase space of generalised Holst theory in this sector has exactly $2$ local degrees of freedom.
    Indeed, this constraint algebra can be linked to that of four-dimensional diffeomorphisms\footnote{See \cite[Remark 4.24]{CattaneoSchiavina2018} where this is explicitly discussed for the Palatini--Cartan--Holst action which corresponds to $T = \frac{1}{\beta}(\star^{-1} - \id)$}.}
\end{remark}

\section{BV Equivalences}
\label{sec:BVtheories}

The goal of this section is to formulate the BV extensions of \qGR for both the topological and the gravitational sector as well as the BV extensions of (quadratic) generalised Holst theory. We will look for general contractible pairs (Definition \ref{def:generalContractible}) such that, for suitable parameters $T$ and $\beta$, we can prove that \qGR and (the BV extensions of) qgH and gH are BV equivalent. This clarifies the relation between \qGR and known theories of gravity, extending Proposition \ref{prop:gH_PCH} to the BV framework.\\

In Section \ref{subsec:BVextensions} we will investigate the symmetries of \qGR in both the topological and gravitational sector. We will find that in the topological sector some symmetries are either reducible or entirely redundant. The first case can be dealt with, in the BV formalism, by introducing higher-order ghost fields to the BV extension of a classical theory. The second case indicates that a symmetry can be fully represented by other symmetries (at least) on-shell.

We use these findings in Section \ref{subsec:BVextensions} where we formulate the BV extension of \qGR in the topological and gravitational sector, basing our constructions on the classical analysis from Section \ref{sec:ClassicalAnalysis}. We then move on to identify general contractible pairs in Section \ref{subsubsec:GeneralContractible}. In Section \ref{subsubsec:WeakEQs} these pairs will be eliminated in order to prove several BV equivalences. By doing so, we also draw direct parallels between the gravitational sector of \qGR and known theories of gravity.

\subsection{Symmetries}
\label{subsec:BVextensions}

Quadratically extended GR can be endowed with a variety of local symmetries, as we explain now. 
\begin{calculation}
\label{calc:SymmetriesqBFindep}
  For any choice of $T, \beta$, the action of quadratically extended GR given in \eqref{eq:qAction} is preserved (up to boundary terms\footnote{The contraction $\imath_{\QQ} \delta S$ vanishes modulo boundary terms. This is Noether's definition of local symmetries.}) under the following tangent distributions, for every $\zeta \in \Omega^1[1](M; \gf^*)$, $\sigma \in \Omega^0[1](M; \gf^*)$, $\theta \in \Omega^0[1](M; \gf)$ and $\xi \in \Gamma[1](TM)$:
  \begin{itemize}
    \item Two gauge symmetries generated by $\theta$ and $\sigma$:
    \begin{alignat}{2}
      (\QQ \AB)_\theta:&
      \begin{cases}
        (\QQ A)_\theta  = d_A \theta \\
        (\QQ e)_\theta  = [\theta, e] \\
        (\QQ S)_\theta  = [\theta, S]
      \end{cases},
      \qquad &&
      (\QQ \BB)_\theta :
      \begin{cases}
        (\QQ B)_\theta  = [\theta, B] \\
        (\QQ \tau)_\theta  = [\theta, \tau] \\
        (\QQ c)_\theta  = [\theta, c]
      \end{cases}, \\
      (\QQ \AB)_\sigma:&
      \begin{cases}
        (\QQ A)_\sigma  = 0 \\
        (\QQ e)_\sigma  = 0 \\
        (\QQ S)_\sigma  = d_A \sigma
      \end{cases},
      \qquad &&
      (\QQ \BB)_\sigma :
      \begin{cases}
        (\QQ B)_\sigma  = [c, \sigma] \\
        (\QQ \tau)_\sigma  = 0 \\
        (\QQ c)_\sigma  = 0
      \end{cases}.
    \end{alignat}

    \item One shift symmetry generated by $\zeta$:
    \begin{equation}
      (\QQ \AB)_\zeta:
      \begin{cases}
        (\QQ A)_\zeta  = 0 \\
        (\QQ e)_\zeta  = 0 \\
        (\QQ S)_\zeta  = (\id+ \beta T) \zeta
      \end{cases}
      and \qquad
      (\QQ \BB)_\zeta :
      \begin{cases}
        (\QQ B)_\zeta  = d_A \zeta \\
        (\QQ \tau)_\zeta  = 0 \\
        (\QQ c)_\zeta  = 0
      \end{cases}.
    \end{equation}

    \item One diffeomorphism symmetry generated by $\xi$:
    \begin{equation}
      (\QQ \AB)_\xi:
      \begin{cases}
        (\QQ A)_\zeta  = \imath_\xi F_A \\
        (\QQ e)_\zeta  = \lie{\xi}^A e \\
        (\QQ S)_\zeta  = \lie{\xi}^A S
      \end{cases}
      and \qquad
      (\QQ \BB)_\xi :
      \begin{cases}
        (\QQ B)_\zeta  = \lie{\xi}^A B \\
        (\QQ \tau)_\zeta  = \lie{\xi}^A \tau \\
        (\QQ c)_\zeta  = \lie{\xi}^A c
      \end{cases}.
    \end{equation}
  \end{itemize}
  \begin{remark}
      Here we denote, e.g., $(\QQ\AB)_\theta$ to indicate the component of the vector field $\QQ$ in the direction of a the field $\AB$, generated by $\theta$ seen as the element $(\theta,0,0)$ in $\Omega^0[1](M; \gf^*)\times \Omega^0[1](M; \gf)\times \Gamma[1](TM)$ (which can be easily seen is a Lie algebra action). We will also use the notation $\QQ_\theta$ for the image of $(\theta,0,0)$ under the Lie algebra action, and $(\QQ\AB)_\theta\equiv \QQ_\theta(\AB)$. In the physics literature, this is often denoted as $\delta_\theta\AB$.
  \end{remark}
\begin{proof}
  First note that the action of a general symmetry vector field $\QQ$ on the action of \qGR yields $\lie{\QQ} S_{\qGR}(T, \beta) = \imath_\QQ \delta S_{\qGR}(T, \beta)$, thus
  \begin{align}
    &~ \imath_\QQ \int_M \killing{\delta B}{F_A} - \killing{B}{d_A \delta A} + \eta(\delta \tau, d_A e) + \eta(\tau, [\delta A, e] - d_A \delta e) + (1 + \beta) \eta(\delta \tau, \tau) \\
    &+ \killing{\delta c}{ (\id + \beta T) B + \frac{1}{2} e \wedge e + d_A S} + \killing{c}{\delta e \wedge e + [\delta A, S] - d_A \delta S} + \killing{ (\id + \beta T) c}{\delta B}.
  \end{align}
  Starting with the first gauge symmetry $\QQ_\theta$ yields:
  \begin{align}
    & \ \int_M \killing{[\theta, B]}{F_A} - \killing{B}{[F_A, \theta]} + \eta([\theta, \tau], d_A e) + \eta(\tau, [\theta, d_A e]) + (1 + \beta) \eta([\theta, \tau], \tau) \\
    &+ \killing{[\theta, c]}{(\id + \beta T) B + \frac{1}{2} e \wedge e + d_A S} + \killing{c}{[\theta, e] \wedge e + [\theta, d_A S]} + \killing{ (\id + \beta T) c}{[\theta, B]}.
  \end{align}
  By invariance of $\eta$ and $\killing{\cdot}{\cdot}$ this vanishes identically. For $\QQ_\sigma$ one obtains:
  \begin{align}
    &~ \int_M \killing{[c, \sigma]}{F_A} - \killing{c}{[F_A, \sigma]} + \killing{ (\id + \beta T) c}{[c, \sigma]} \equiv 0.
  \end{align}
  Again the invariance of the pairing was used. For the shift symmetry $\QQ_\zeta$ one obtains
  \begin{align}
    &~ \int_M \killing{d_A \zeta}{F_A} - \killing{c}{ (\id+ \beta T) d_A \zeta} + \killing{ (\id + \beta T) c}{d_A \zeta} \simeq 0.
  \end{align}
  In the last equation $d_A F_A = 0$ was used and the boundary term $d \killing{F_A}{\zeta}$ was dropped. The last symmetry is the diffeomorphism symmetry $\QQ_\xi$:
  \begin{align}
    &~ \int_M \killing{\lie{\xi}^A B}{F_A} + \killing{B}{\lie{\xi}^A F_A} + \eta(\lie{\xi}^A \tau, d_A e) + \eta(\tau, \lie{\xi}^A d_A e) +  (1 + \beta) \eta(\lie{\xi}^A \tau, \tau) \\
    &+ \killing{\lie{\xi}^A c}{ (\id + \beta T) B + \frac{1}{2} e \wedge e + d_A S} + \killing{c}{\lie{\xi}^A e \wedge e + \lie{\xi}^A d_A S} + \killing{ (\id + \beta T) c}{\lie{\xi}^A B} \\
    =& \ \int_M \lie{\xi} \Bigg( \killing{B}{F_A} + \eta(\tau, d_A e) + \frac{ (1 + \beta)}{2} \eta(\tau, \tau) + \killing{c}{ (\id + \beta T) B + \frac{1}{2} e \wedge e + d_A S} \Bigg) \simeq 0.
  \end{align}
  Since $\lie{\xi} L_{\qGR} = d \circ \imath_\xi L_{\qGR}$, it contributes only boundary terms.
\end{proof}
\end{calculation}

\begin{remark}
\label{rem:SecondGhost}
    The ghost fields $\zeta \in \Omega^1[1](M; \gf^*)$, $\sigma \in \Omega^0[1](M; \gf^*)$ are themselves subject to a local symmetry, which is a consequence of the (Lie algebra) gauge action having stabilisers. (Another way of saying this is that the symmetry is reducible.) Within the BV framework, this can be dealt with by introducing a second order ghost $\nu \in \Omega^0[2](M; \gf^*)$ (of ghost degree two) and extending the symmetry vector field $\QQ$ by $(\QQ\sigma)_\nu =  (\id + \beta T) \nu$ and $(\QQ \zeta)_\nu = d_A \nu$.
\end{remark}

In Proposition \eqref{prop:HolstHodgeFour} we observed that, for $T, \beta$ in the gravitational sector, \qGR is not a topological theory. Meanwhile in the topological sector (see Table \ref{Tab:sectors}) there exist additional symmetries known from $BF$ (Definition \ref{def:ClassicalBFBB}) type theories:

\begin{lemma}
\label{calc:ShiftIfKilling}
  In the topological sector quadratically extended GR enjoys the following additional symmetries, where $\epsilon \in \Omega^0[1](M; \VV)$, $\rho \in \Omega^1[1](M; \VV)$ and $\alpha \in \Omega^1[1](M; \gf)$
  \begin{itemize}
    \item Two shift symmetries generated by $\rho$ and $\alpha$:
    \begin{alignat}{2}
      (\QQ\AB)_\rho :&
      \begin{cases}
        (\QQ A)_\rho = 0 \\
        (\QQ e)_\rho =  (1+\beta) \rho \\
        (\QQ S)_\rho = 0
      \end{cases},
      \qquad &&
      (\QQ\BB)_\rho :
      \begin{cases}
        (\QQ B)_\rho = - \rho \wedge e \\
        (\QQ \tau)_\rho = d_A \rho \\
        (\QQ c)_\rho = 0
      \end{cases}, \\
      (\QQ\AB)_\alpha :&
      \begin{cases}
        (\QQ A)_\alpha =  (1 + \beta) \alpha \\
        (\QQ e)_\alpha = 0 \\
        (\QQ S)_\alpha = 0
      \end{cases},
      \qquad &&
      (\QQ\BB)_\alpha :
      \begin{cases}
        (\QQ B)_\alpha = - [\alpha, S] \\
        (\QQ \tau)_\alpha = - [\alpha, e] \\
        (\QQ c)_\alpha = d_A \alpha
      \end{cases}.
    \end{alignat}

    \item One gauge symmetry generated by $\epsilon$:
    \begin{equation}
      (\QQ\AB)_\epsilon :
      \begin{cases}
        (\QQ A)_\epsilon = 0 \\
        (\QQ e)_\epsilon = d_A \epsilon \\
        (\QQ S)_\epsilon = e \wedge \epsilon
      \end{cases}
      and \qquad
      (\QQ\BB)_\epsilon :
      \begin{cases}
        (\QQ B)_\epsilon = - \tau \wedge \epsilon  \\
        (\QQ \tau)_\epsilon = [c, \epsilon] \\
        (\QQ c)_\epsilon = 0
      \end{cases}.
    \end{equation}
  \end{itemize}
\begin{proof}
  First note that in the topological sector $(\id + \beta T) = (1+\beta) \id$ always holds. Applying a general symmetry vector field $\QQ$ to the action of \qGR returns:
  \begin{align}
    &~ \imath_\QQ \int_M \killing{\delta B}{F_A} - \killing{B}{d_A \delta A} + \eta(\delta \tau, d_A e) + \eta(\tau, [\delta A, e] - d_A \delta e) +  (1 + \beta) \eta(\delta \tau, \tau) \\
    &+ \killing{\delta c}{ (\id + \beta T) B + \frac{1}{2} e \wedge e + d_A S} + \killing{c}{\delta e \wedge e + [\delta A, S] - d_A \delta S} + \killing{ (\id + \beta T) c}{\delta B}.
  \end{align}
  Start by verifying the shift symmetry $\QQ_\rho$:
  \begin{align}
    &~ \int_M - \killing{\rho \wedge e}{F_A} + \eta(d_A \rho, d_A e) -  (1 + \beta) \eta(\tau, d_A \rho) \\
    &+ \killing{c}{-  (\id + \beta T) (\rho \wedge e) +  (\id + \beta T) (\rho \wedge e)} +  (1 + \beta) \eta(d_A \rho, \tau) \\
    =& \ \int_M - \killing{\rho \wedge e}{F_A} + \eta(d_A \rho, d_A e) \simeq \int_M - \killing{\rho \wedge e}{F_A} - \eta([F_A, \rho], e) = 0.
  \end{align}
  In the last equality one uses $\eta([F_A, \rho], e) = - \killing{F_A}{\rho \wedge e}$. For the second shift symmetry $\QQ_\alpha$ one obtains:
  \begingroup
  \allowdisplaybreaks
  \begin{align}
    &~ \int_M \killing{- [\alpha, S]}{F_A} - \killing{B}{ (\id + \beta T) d_A \alpha} + \eta(- [\alpha, e], d_A e) + \eta(\tau, [ (\id + \beta T) \alpha, e]) \\
    &-  (1 + \beta) \eta([\alpha, e], \tau) + \killing{d_A \alpha}{ (\id + \beta T) B + \frac{1}{2} e \wedge e + d_A S} + \killing{ (\id + \beta T) c}{- [\alpha, S] + [\alpha, S]} \\
    =& \ \int_M \killing{- [\alpha, S]}{F_A} + \eta(- [\alpha, e], d_A e) + \killing{d_A \alpha}{\frac{1}{2} e \wedge e + d_A S} \\
    \simeq& \ \int_M -\killing{[\alpha, S]}{F_A} + \killing{\alpha}{e \wedge d_A e} - \killing{\alpha}{e \wedge d_A e + [F_A, S]} = 0.
  \end{align}
  \endgroup
  At last verify the gauge symmetry $\QQ_\epsilon$ for which one needs to keep in mind that it is represented by a $0$-form:
  \begin{align}
    &~ \int_M - \killing{\tau \wedge \epsilon}{F_A} + \eta([c, \epsilon], d_A e) - \eta(\tau, [F_A, \epsilon]) \\
    &+ \killing{c}{-  (1 + \beta) (\tau \wedge \epsilon) + d_A \epsilon \wedge e - d_A (e \wedge \epsilon)} +  (1 + \beta) \eta([c, \epsilon], \tau) = 0.
  \end{align}
  The above terms all vanish due to the relation of the Minkowski metric and the Killing form.
\end{proof}
\end{lemma}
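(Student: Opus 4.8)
The plan is to carry out, for each of the three candidate generators, the same verification as in Calculation \ref{calc:SymmetriesqBFindep}: namely, to show that contracting the corresponding vector field $\QQ$ into $\delta S_{\qGR}$ produces only boundary terms, $\imath_\QQ\delta S_{\qGR}\simeq 0$, which is precisely Noether's criterion for a local symmetry. The one ingredient special to the topological sector, and the one that makes these extra $BF$-type symmetries close, is the identity $\id+\beta T=(1+\beta)\id$; substituting it into the action of Calculation \ref{cal:QExtendedEOM} gives $S_{\qGR}=\int_M \killing{B}{F_A}+\eta(\tau,d_Ae)+\killing{c}{(1+\beta)B+\tfrac12 e\wedge e+d_AS}+\tfrac{(1+\beta)}{2}\eta(\tau,\tau)$, and into its variation gives the same expression as in Calculation \ref{cal:QExtendedEOM} with $\id+\beta T$ replaced by $(1+\beta)\id$ throughout. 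The tools used are the $\gf$- and $O(V)$-invariance of $\killing{\cdot}{\cdot}$ and $\eta$, the Bianchi identity $d_AF_A=0$, and the dictionary identity $\eta([\chi,v],w)=-\killing{\chi}{v\wedge w}$ relating the Minkowski metric and the Killing form.

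Concretely: for the shift $\QQ_\rho$ one substitutes $(\QQ e)_\rho=(1+\beta)\rho$, $(\QQ B)_\rho=-\rho\wedge e$, $(\QQ\tau)_\rho=d_A\rho$ and the remaining components zero; the terms carrying $c$ cancel pairwise because of the matching factors $(1+\beta)$ coming from $\id+\beta T$, the two $\eta(\,\cdot\,,\tau)$ contributions cancel by graded symmetry of $\eta$ on $2$-forms, and what survives is $-\killing{\rho\wedge e}{F_A}+\eta(d_A\rho,d_Ae)$, which vanishes after integrating the second term by parts and applying the dictionary identity. For $\QQ_\alpha$ the same pattern works, now using $d_AF_A=0$ after an integration by parts to kill the surviving curvature terms. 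For the gauge symmetry $\QQ_\epsilon$ the only additional care is with the graded Leibniz rule, since $\epsilon$ is a $0$-form and appears inside $e\wedge\epsilon$ and $d_AS$; once that is dealt with, the $(1+\beta)$-weighted $\eta(\,\cdot\,,\tau)$ terms again cancel pairwise and the residual curvature terms vanish by $\eta([F_A,\epsilon],e)=-\killing{F_A}{e\wedge\epsilon}$.

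There is no real obstacle here beyond careful bookkeeping; the two points worth stressing are (i) that $\id+\beta T=(1+\beta)\id$ holds \emph{precisely} in the topological sector, which is exactly what allows the $c$- and $\tau$-dependent contributions to cancel (they would not for generic $T$, in accordance with the fact that these are genuinely new symmetries only in that sector), and (ii) the systematic use of invariance of the two pairings together with the $\eta\leftrightarrow\killing{\cdot}{\cdot}$ dictionary. The place to be most cautious is the signs produced when commuting $d_A$ past forms of various degrees and when using the graded antisymmetry of $\langle\cdot\wedge\cdot\rangle$; everything else is routine.
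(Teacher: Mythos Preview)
Your proposal is correct and follows essentially the same approach as the paper: both proofs contract each candidate vector field into $\delta S_{\qGR}$, use the topological-sector identity $\id+\beta T=(1+\beta)\id$ to cancel the $c$- and $\tau$-dependent contributions in pairs, and finish the remaining curvature terms via integration by parts together with the dictionary identity $\eta([\chi,v],w)=-\killing{\chi}{v\wedge w}$ (and $d_AF_A=0$ for $\QQ_\alpha$). The paper writes out the intermediate expressions explicitly where you summarise them, but the logical content is identical.
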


\begin{remark}
\label{rem:KillingSecondGhosts}
  In the topological sector, the ghosts $\epsilon$ and $\rho$ are subject to an additional symmetry, which can be similarly taken care of by adding a degree $2$ field, $\gamma \in \Omega^0[2](M; \VV)$. The symmetry vector field is then extended as $(\QQ \epsilon)_\gamma = (1+\beta)\gamma$ and $(\QQ \rho)_\gamma = d_A \gamma$.
  Furthermore, the ghost fields $\theta$ and $\alpha$ are also subject to a symmetry, which can be encoded by second-degree ghost field $f \in \Omega^0[2](M; \gf)$. The cohomological vector field $\QQ$ can be extended by $(\QQ \theta)_f = (\id+\beta T) f$ and $(\QQ \alpha)_f = d_A f$. (Note that this redundancy is expected since we have seen that all three generators $H_{\theta}$, $I_\epsilon$ and $J_\sigma$ are redundant in that scenario (cf.\ Theorem \ref{theo:Top_if_untwisted}).)
\end{remark}

The following result will allow us to disregard the component $\QQ_\xi$ of $\QQ$ under certain conditions:

\begin{lemma}
\label{lem:KillingNoDiffeo}
  Whenever $T, \beta$ are in the topological sector the component $\QQ_\xi$ of $\QQ$ can be represented as a combination of $\QQ_\alpha, \QQ_\rho, \QQ_\zeta, \QQ_\sigma, \QQ_\epsilon$ and a symmetry vector field that vanishes identically on-shell. Thus $\QQ_\xi$ is redundant.
\begin{proof}
  We choose $\alpha = - \imath_\xi c$, $\rho = - \imath_\xi \tau$, $\zeta = - \imath_\xi B$ as well as $\sigma = - \imath_\xi S$ and $\epsilon = - \imath_\xi e$. Let for the following $\overline\QQ \coloneqq \QQ_\alpha + \QQ_\rho + \QQ_\zeta + \QQ_\sigma + \QQ_\epsilon$. The given choices yield
  \begingroup
  \allowdisplaybreaks\mathtoolsset{showonlyrefs = false}
  \begin{subequations}\label{eq:recreate_diffeo}
  \begin{align}
    \overline\QQ A &= \imath_\xi F_A - \imath_\xi (F_A + (\id+\beta T) c), \\
    \overline\QQ e &=  \lie{\xi}^A e - \imath_\xi (d_A e + (1+\beta) \tau), \\
    \overline\QQ S &= \lie{\xi}^A S - \imath_\xi ((\id + \beta T) B + d_A S + \tfrac{1}{2} e \wedge e), \\
    \overline\QQ B &= \lie{\xi}^A B - \imath_\xi (d_A B - e \wedge \tau - [c, S]), \\
    \overline\QQ \tau &= \lie{\xi}^A \tau - \imath_\xi (d_A \tau - [c, e]), \\
    \overline\QQ c &= \lie{\xi}^A c - \imath_\xi (d_A c).
  \end{align}
  \end{subequations}\mathtoolsset{showonlyrefs = true}
  \endgroup
  The additional terms that do not belong to $\QQ_\xi$ are directly dependent on the field equations (compare Calculation \ref{cal:QExtendedEOM}). Thus define a new symmetry vector field $\QQ_\chi$, for $\chi \in \Gamma[1](TM)$, acting as
  \begin{subequations}
  \begin{alignat}{3}
    (\QQ A)_\chi &= \imath_\chi (F_A + (\id+\beta T) c), \qquad && (\QQ e)_\chi = \imath_\chi (d_A e + (1+\beta) \tau), \\
    (\QQ c)_\chi &= \imath_\chi (d_A c), \qquad && (\QQ \tau)_\chi = \imath_\chi (d_A \tau - [c, e]), \\ 
    (\QQ S)_\chi &= \imath_\chi ((\id + \beta T) B + d_A S + \tfrac{1}{2} e \wedge e),  \qquad && (\QQ B)_\chi = \imath_\chi (d_A B - e \wedge \tau - [c, S]).
  \end{alignat}
  \end{subequations}
  The vector field $\QQ_\chi$ is a symmetry of $S_{\qGR}$ since it can be represented as a combination of $\QQ_\xi$ and the other symmetries as seen in Equations \eqref{eq:recreate_diffeo}. However, due to the field equations \eqref{cal:QExtendedEOM}, all components of $\QQ_\chi$ vanish \emph{identically} on-shell, making it a trivial symmetry (cf. Remark \ref{rmk:non-trivialsymmetries}).
  {But we have seen in \eqref{eq:recreate_diffeo} that one can represent $\QQ_\xi$ as $\overline\QQ - \QQ_\chi$ by choosing $\chi = \xi$. Since $\QQ_\xi$ can be represented as a combination of $\QQ_\alpha, \QQ_\rho, \QQ_\zeta, \QQ_\sigma, \QQ_\epsilon$ and the trivial symmetry $\QQ_\chi$ it is redundant.}
\end{proof}
\end{lemma}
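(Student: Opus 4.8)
The plan is to use the standard ``on-shell diffeomorphism'' mechanism familiar from $BF$-type theories (see \cite{carlip_1998, Witten3dgrav} and, in the BV setting, \cite{CattaneoSchiavinaSelliah2018}): a spacetime diffeomorphism is reproduced, up to a symmetry that vanishes identically on $\EL$, by a field-dependent linear combination of the internal gauge and shift symmetries, with the parameters obtained by contracting the relevant fields with $\xi$. Concretely I would set $\alpha = -\imath_\xi c$, $\rho = -\imath_\xi\tau$, $\zeta = -\imath_\xi B$, $\sigma = -\imath_\xi S$, $\epsilon = -\imath_\xi e$, which is consistent with the required form/ghost degrees precisely because $\imath_\xi$ lowers form degree by one (so $\alpha,\zeta\in\Omega^1[1](M;\gf)$ resp.\ $\Omega^1[1](M;\gf^*)$, $\rho\in\Omega^1[1](M;\VV)$, $\sigma\in\Omega^0[1](M;\gf^*)$, $\epsilon\in\Omega^0[1](M;\VV)$), and which is available here because in the topological sector the extra shift symmetries $\QQ_\rho,\QQ_\alpha$ of Lemma \ref{calc:ShiftIfKilling} exist. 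Write $\overline\QQ \coloneqq \QQ_\alpha + \QQ_\rho + \QQ_\zeta + \QQ_\sigma + \QQ_\epsilon$ for these choices.

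Next I would evaluate $\overline\QQ$ component by component on $A,e,S,B,\tau,c$, substituting the explicit formulas of Calculation \ref{calc:SymmetriesqBFindep} and Lemma \ref{calc:ShiftIfKilling}, using $(\id+\beta T) = (1+\beta)\id$ (valid in the topological sector), and rewriting the covariant Lie derivatives $\lie{\xi}^A$ via Cartan's magic formula $\lie{\xi}^A = \imath_\xi d_A + d_A\imath_\xi$ (with the sign appropriate to the odd parameter $\xi$). The claim to verify is that on each field $\overline\QQ$ equals $\QQ_\xi$ \emph{plus} the interior product of $\xi$ with the corresponding Euler--Lagrange expression read off Calculation \ref{cal:QExtendedEOM}: e.g.\ $\overline\QQ A = \imath_\xi F_A - \imath_\xi\bigl(F_A + (\id+\beta T)c\bigr)$, $\overline\QQ B = \lie{\xi}^A B - \imath_\xi\bigl(d_A B - e\wedge\tau - [c,S]\bigr)$, and analogously for $e,S,\tau,c$.

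Then I would define a vector field $\QQ_\chi$, $\chi\in\Gamma[1](TM)$, whose component on each field is $\imath_\chi$ applied to the corresponding field equation. Reading the identities of the previous step at $\chi=\xi$ shows $\QQ_\chi = \overline\QQ - \QQ_\xi$, so $\QQ_\chi$ is itself a symmetry of $S_{\qGR}$ (a difference of symmetries); but since every one of its components is a contraction of an equation of motion, $\QQ_\chi$ vanishes identically on $\EL$, hence is a trivial symmetry in the sense of Remark \ref{rmk:non-trivialsymmetries}. Rearranging gives $\QQ_\xi = \overline\QQ - \QQ_\chi$ with $\chi = \xi$, which exhibits $\QQ_\xi$ as a combination of $\QQ_\alpha,\QQ_\rho,\QQ_\zeta,\QQ_\sigma,\QQ_\epsilon$ and a symmetry that is identically zero on-shell; this is exactly the assertion that $\QQ_\xi$ is redundant.

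The only real work is in the second step, and the main obstacle there is purely the bookkeeping of signs and $(1+\beta)$ factors: one must check that the cross terms produced by the shift and gauge symmetries---for instance $[c,\sigma]$ in $(\QQ B)_\sigma$, $-[\alpha,S]$ in $(\QQ B)_\alpha$ and $-\tau\wedge\epsilon$ in $(\QQ B)_\epsilon$, or $d_A\rho$ and $[c,\epsilon]$ feeding $(\QQ\tau)$---recombine, after Cartan's formula and after using invariance of $\killing{\cdot}{\cdot}$ and $\eta$ together with $\eta([\chi,v],w) = -\killing{\chi}{v\wedge w}$, into precisely $\imath_\xi$ of the relevant Euler--Lagrange expression with no leftover. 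It is at this point that the topological-sector hypothesis $\beta T = \beta\id$ is essential, since otherwise the $(\id+\beta T)$-twisted contributions fail to cancel and no such representation of $\QQ_\xi$ is available; everything else in the argument is formal.
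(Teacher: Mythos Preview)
Your proposal is correct and follows essentially the same approach as the paper: the same field-dependent choices $\alpha=-\imath_\xi c$, $\rho=-\imath_\xi\tau$, $\zeta=-\imath_\xi B$, $\sigma=-\imath_\xi S$, $\epsilon=-\imath_\xi e$, the same componentwise verification that $\overline\QQ$ differs from $\QQ_\xi$ by $\imath_\xi$ of the Euler--Lagrange expressions, and the same definition of the trivial symmetry $\QQ_\chi$ yielding $\QQ_\xi=\overline\QQ-\QQ_\chi$. The paper does not spell out the Cartan-formula bookkeeping you describe in your last paragraph, but that is indeed the content of the displayed identities \eqref{eq:recreate_diffeo}.
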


\subsubsection{BV extension of \qGR in the topological sector}

In the following pages we provide the BV extension of \qGR in the topological sector. To this end we first show that not only the action, as shown in Remark \ref{rem:untwisted_simpler}, but also the symmetries of topological \qGR allow for a compact writing which significantly compresses the form of the BV extension.

\begin{remark}
    Lemma \ref{lem:KillingNoDiffeo} shows that $\xi \in \Gamma[1](TM)$ generates a redundant symmetry in the topological sector. This result does not extend to the gravitational sector since two of the symmetries, generated by $\rho$ and $\alpha$, do not exist here and thus the component $\QQ_\xi$ cannot be recovered by means of other symmetries as it was done in Lemma \ref{lem:KillingNoDiffeo}.
\end{remark}

\begin{remark}
\label{rem:CollectiveSymmetries}
  Following the concise writing of $S_{\qGR}$ in the topological sector (see Remark \ref{rem:untwisted_simpler}) one might expect that a similar rewriting can be done for the symmetries.\footnote{Note that this is obvious once one realises that topological \qGR returns a $BF$ theory.} Thus define $(\theta, \epsilon, \sigma) =: \ab \in \Omega^0[1](M; \df)$ and $(\alpha, \rho, \zeta) =: \HB \in \Omega^1[1](M; \df)$ and express their symmetry vector fields as:
  \begin{itemize}
    \item The shift symmetry:
    \begin{equation}
      \QQ_\HB \AB = (1+\beta) \HB, \qquad  \QQ_\HB \BB = d_\AB \HB.
    \end{equation}
    \item The gauge symmetry:
    \begin{equation}
      \QQ_\ab \AB = d_\AB \ab, \qquad  \QQ_\ab \BB = [\ab, \BB].
    \end{equation}
  \end{itemize}
  One can easily check that this aligns with the definitions of the individual symmetries from Calculation \ref{calc:SymmetriesqBFindep} and Calculation \ref{calc:ShiftIfKilling}. It further allows for a formulation of the BV extension of topological \qGR using superfields similar to the results of \cite[Section  5]{Cattaneo2001}, and inspired by the seminal work of \cite{AKSZ}.
\end{remark}

Following Calculation \ref{calc:ShiftIfKilling} we inferred in Remark \ref{rem:CollectiveSymmetries} that not only the action (see Remark \ref{rem:untwisted_simpler}) but also the symmetries of \qGR in the topological sector can be written in a very compact form. This allows for the following simple BV extension which is just the known extension of $BF$ theory (see \ref{def:ClassicalBFBB}) and is stated here mainly for completeness:

\begin{definition}[Topological \qGR]
\label{def:BV_qGR_top}
    For $T, \beta$ {within the topological invertible sector\footnote{{By this we mean the intersection of the two.}}} we call {standard BV theory for topological \qGR} a $4$-dimensional manifold $M$ , or simply {topological \qGR}, the BV theory
    \begin{align}
        \FC_{\qGR}^{\beta} \coloneqq (\FF_{\qGR}, \Omega_{\qGR} \coloneqq \delta \alpha_{\qGR}, Q_{\qGR}^{\beta}, S_{\qGR}^{ \beta})
    \end{align}
    where
  \begin{equation}
    \FF_{\qGR} = T^*[-1]\left( \AC_P \times \Omega^2(M; \df^*) \times \Omega^0[1](M; \df) \times \Omega^1[1](M; \df) \times \Omega^0[2](M; \df^*) \right).
  \end{equation}
  We denote the additional fields by
  \begin{align}
    (f, \gamma, \nu) = \fb \in \Omega^0[2](M; \df), \quad \AB^\dagger &\in \Omega^3[-1](M; \df^*), \quad \BB^\dagger \in \Omega^2[-1](M; \df),\\
    \ab^\dagger \in \Omega^4[-2](M; \df^*), \quad \HB^\dagger &\in \Omega^3[-2](M; \df^*), \quad \fb^\dagger \in \Omega^0[-3](M; \df),
  \end{align}
  where the superscript $\dagger$ indicates the corresponding fields in the cotangent fibre. Using this notation the symplectic BV $2$-form is given by
  \begin{align}
    \Omega_{\qGR} \coloneqq& \int_M \paird{\delta \AB}{\delta \AB^\dagger} + \paird{\delta \BB}{\delta \BB^\dagger} + \paird{\delta \ab}{\delta \ab^\dagger} + \paird{\delta \HB}{\delta \HB^\dagger} + \paird{\delta \fb}{\delta \fb^\dagger},
  \end{align}
  the BV action is given by
  \begin{align}
  \label{eq:qBFBVAction}
    S_{\qGR}^{\beta} \coloneqq &\int_M \paird{\BB}{F_\AB} + \frac{(1+\beta)}{2} \paird{\BB}{\BB} + \paird{\BB^\dagger}{[\ab, \BB] + d_\AB \HB - [\fb, \BB^\dagger]} \\
    &+ \paird{\AB^\dagger}{d_\AB \ab + (1+\beta) \HB} + \paird{\HB^\dagger}{[\ab, \HB] + d_\AB \fb} \\
    &+ \paird{\ab^\dagger}{\frac{1}{2} [\ab, \ab] + (1+\beta) \fb} + \paird{\fb^\dagger}{[\ab, \fb]},
  \end{align}
  and the cohomological BV vector field $Q_{\qGR}^{\beta}$ defined by $\imath_{Q_{\qGR}^{\beta}} \Omega_{\qGR} = \delta S_{\qGR}^{\beta}$ can be represented by its components as
  \begingroup
  \allowdisplaybreaks
  \begin{subequations}
  \begin{align}
    Q_{\qGR}^{\beta} \BB &= [\ab, \BB] + d_\AB \HB - [\fb, \BB^\dagger],\\
    Q_{\qGR}^{\beta} {\BB^\dagger} &= F_\AB + (1+\beta) \BB + [\ab, \BB^\dagger],\\
    Q_{\qGR}^{\beta} \AB &= d_\AB \ab + (1+\beta) \HB, \\
    Q_{\qGR}^{\beta} {\AB^\dagger} &= d_\AB \BB + [\BB^\dagger, \HB] + [\ab, \AB^\dagger] - [\fb, \HB^\dagger], \\
    Q_{\qGR}^{\beta} \ab &= \frac{1}{2} [\ab, \ab] + (1+\beta) \fb,\\
    Q_{\qGR}^{\beta} {\ab^\dagger} &= d_\AB \AB^\dagger + [\BB^\dagger, \BB] + [\HB^\dagger, \HB] + [\ab, \ab^\dagger] + [\fb^\dagger, \fb],\\
    Q_{\qGR}^{\beta} \HB &= [\ab, \HB] + d_\AB \fb,\\
    Q_{\qGR}^{\beta} {\HB^\dagger} &= d_\AB \BB^\dagger + (1+\beta) \AB^\dagger + [\ab, \HB^\dagger], \\
    Q_{\qGR}^{\beta} {\fb} &= [\ab, \fb], \\
    Q_{\qGR}^{\beta} {\fb^\dagger} &= d_\AB \HB^\dagger + [\ab, \fb^\dagger] + (1+\beta) \ab^\dagger + \frac{1}{2} [\BB^\dagger, \BB^\dagger].
  \end{align}
  \end{subequations}
  \endgroup
\end{definition}

Notice again that such theories have been shown to allow an even more concise formulation in terms of superfields (see \cite[Section 5]{Cattaneo2001}). Since this work aims to investigate the relation of \qGR to theories of gravity, we will not focus on the topological sector any further.

\subsubsection{BV extension of \qGR in the gravitational sector}

In Calculation \ref{calc:SymmetriesqBFindep} it was shown that in the gravitational sector there exist four symmetries that are not redundant. This allows for the formulation of the BV extension of \qGR in the gravitational sector.

\begin{definition}[Gravitational \qGR]
\label{def:BV_qBF}
    For $T, \beta$ {within the invertible gravitational sector}, we call standard BV theory for gravitational \qGR on a $4$-dimensional manifold $M$, or simply gravitational \qGR, the BV theory
    \begin{align}
        \FC_{\qGR}^{T, \beta} \coloneqq (\FF_{\qGR}, \Omega_{\qGR} \coloneqq \delta \alpha_{\qGR}, Q_{\qGR}^{T, \beta}, S_{\qGR}^{T, \beta})
    \end{align}
    where
    \begin{align}
        \FF_{\qGR} \coloneqq& \ T^*[-1]\Big( \AC_P \times \Omega^2(M; \df^*) \times \Omega^0[1](M; \gf) \times \Omega^1[1](M; \gf^*) \\
        &\qquad \qquad \times \Omega^0[1](M; \gf^*) \times \Gamma[1](TM) \times \Omega^0[2](M; \gf^*) \Big),
    \end{align}
    with the fields in the base denoted by $(A, e, S, B, \tau, c, a, \zeta, \sigma, \xi, \nu)$ and their correspondents in the cotangent fibre by
    \begin{subequations}
    \begin{alignat}{3}
        (A^\dagger, e^\dagger, S^\dagger) &\in \Omega^3[-1](M; \df^*), &&\qquad (B^\dagger, \tau^\dagger, c^\dagger) \in \Omega^2[-1](M; \df), &&\qquad a^\dagger \in \Omega^4[-2](M; \gf^*),\\
        \zeta^\dagger &\in \Omega^3[-2](M; \gf), &&\qquad \sigma^\dagger \in \Omega^4[-2](M; \gf), &&\qquad \xi^\dagger \in \Omega^4[-2](M; T^*M), \\
        \nu^\dagger &\in \Omega^4[-3](M; \gf).
    \end{alignat}
    \end{subequations}
    And further
    \begin{align}
        \Omega_{\qGR} \coloneqq& \int_M \paird{\delta \AB}{\delta \AB^\dagger} + \paird{\delta \BB}{\delta \BB^\dagger} + \killing{\delta a}{\delta a^\dagger} + \killing{\delta \zeta}{\delta \zeta^\dagger} \\
        & + \killing{\delta \sigma}{\delta \sigma^\dagger}  + \killing{\delta \nu}{\delta \nu^\dagger} + \tr[\imath_{\delta \delta\xi} \xi^\dagger] ,
    \end{align}
    \begingroup
    \allowdisplaybreaks
    \begin{subequations}
    \begin{align}
        S_{\qGR}^{T, \beta} \coloneqq &\int_M \killing{B}{F_A} + \killing{c}{(\id+\beta T) B + \frac{1}{2} e \wedge e + d_A S} + \eta(\tau, d_A e) \\
        &+ \frac{(1+\beta)}{2} \eta(\tau, \tau) + \killing{A^\dagger}{d_A a + \imath_\xi F_A} + \eta(e^\dagger, [a, e] + \lie{\xi}^A e) \\
        &+ \killing{c^\dagger}{[a,c] + \lie{\xi}^A c} + \killing{S^\dagger}{d_A \sigma + [a, S] +  (\id + \beta T) \zeta + \lie{\xi}^A S} \\
        &+ \eta(\tau^\dagger, [a, \tau] + \lie{\xi}^A \tau) + \killing{B^\dagger}{[a,B] + [c, \sigma] + d_A \zeta + [B^\dagger, \nu] + \lie{\xi}^A B} \\
        &+ \killing{\zeta^\dagger}{[a, \zeta] + d_A \nu + \lie{\xi}^A \zeta} + \killing{a^\dagger}{\frac{1}{2} [a,a] - \frac{1}{2} \imath_\xi^2 F_A} \\
        &+ \killing{\sigma^\dagger}{[a,\sigma] +  (\id + \beta T) \nu + \lie{\xi}^A \sigma} + \killing{\nu^\dagger}{[a, \nu] + \lie{\xi}^A \nu} + \frac{1}{2} \tr[\imath_{[\xi, \xi]} \xi^\dagger].
    \end{align}
    \end{subequations}
    \endgroup
    The cohomological BV vector field $Q_{\qGR}^{T, \beta}$ defined by $\imath_{Q_{\qGR}^{T, \beta}} \Omega_{\qGR} = \delta S_{\qGR}^{T, \beta}$ can be explicitly expressed as:
    \begingroup
    \allowdisplaybreaks
    \begin{subequations}
    \begin{align}
        Q_{\qGR}^{T, \beta} {A} &= d_A a + \imath_\xi F_A, \\
        Q_{\qGR}^{T, \beta} {A^\dagger} &= d_A B  - [c, S] - e \wedge \tau + [\zeta, B^\dagger] + [a, A^\dagger] + [\sigma, S^\dagger] + [\nu, \zeta^\dagger] - \imath_\xi (e^\dagger \wedge e) - d_A \imath_\xi A^\dagger \\
        &- \imath_\xi \big(\tau^\dagger \wedge \tau + [c^\dagger, c] + [S^\dagger, S] + [B^\dagger, B] + [\zeta^\dagger, \zeta]\big) - \frac{1}{2} d_A \imath_\xi^2 a^\dagger - \imath_\xi [\sigma^\dagger, \sigma] - \imath_\xi [\nu^\dagger, \nu], \nonumber \\
        Q_{\qGR}^{T, \beta} {e} &= [a, e] + \lie{\xi}^A e, \\
        Q_{\qGR}^{T, \beta} {e^\dagger} &= d_A \tau - [c, e] + [a, e^\dagger] + \lie{\xi}^A e^\dagger, \\
        Q_{\qGR}^{T, \beta} {S} &= d_A \sigma + [a, S] +  (\id + \beta T) \zeta + \lie{\xi}^A S, \\
        Q_{\qGR}^{T, \beta} {S^\dagger} &= d_A c + [a, S^\dagger] + \lie{\xi}^A S^\dagger, \\
        Q_{\qGR}^{T, \beta} {B} &= [a,B] + [c, \sigma] + d_A \zeta + [B^\dagger, \nu] + \lie{\xi}^A B, \\
        Q_{\qGR}^{T, \beta} {B^\dagger} &= F_A + (\id + \beta T) c + [a, B^\dagger] + \lie{\xi}^A B^\dagger, \\
        Q_{\qGR}^{T, \beta} {\tau} &= [a, \tau] + \lie{\xi}^A \tau, \\
        Q_{\qGR}^{T, \beta} {\tau^\dagger} &= d_A e + (1 + \beta) \tau + [a, \tau^\dagger] + \lie{\xi}^A \tau^\dagger, \\
        Q_{\qGR}^{T, \beta} {c} &= [a,c] + \lie{\xi}^A c, \\
        Q_{\qGR}^{T, \beta} {c^\dagger} &= (\id + \beta T) B + \frac{1}{2} e \wedge e + d_A S + [a, c^\dagger] + [B^\dagger, \sigma] + \lie{\xi}^A c^\dagger, \\
        Q_{\qGR}^{T, \beta} {a} &= \frac{1}{2} [a,a] - \frac{1}{2} \imath_\xi^2 F_A, \\
        Q_{\qGR}^{T, \beta} {a^\dagger} &= [c, c^\dagger] + [B, B^\dagger] + d_A A^\dagger - [S, S^\dagger] + [\zeta, \zeta^\dagger] - [a^\dagger, a] + [\sigma^\dagger, \sigma] + \tau^\dagger \wedge \tau + e^\dagger \wedge e + [\nu, \nu^\dagger], \\
        Q_{\qGR}^{T, \beta} {\sigma} &= [a,\sigma] + (\id + \beta T) \nu + \lie{\xi}^A \sigma, \\
        Q_{\qGR}^{T, \beta} {\sigma^\dagger} &= [c, B^\dagger] + d_A S^\dagger + [a, \sigma^\dagger] + \lie{\xi}^A \sigma^\dagger, \\
        Q_{\qGR}^{T, \beta} {\zeta} &= [a, \zeta] + d_A \nu + \lie{\xi}^A \zeta, \\
        Q_{\qGR}^{T, \beta} {\zeta^\dagger} &= d_A B^\dagger + (\id + \beta T) S^\dagger + [a, \zeta^\dagger] + \lie{\xi}^A \zeta^\dagger, \\
        Q_{\qGR}^{T, \beta} \xi &= \frac{1}{2} [\xi, \xi], \\
        Q_{\qGR}^{T, \beta} \xi^\dagger_\bullet &= - e_\bullet^\dagger \wedge d_A e - d_A e^\dagger \wedge e_\bullet - A^\dagger_\bullet F_A - \imath_\xi a_\bullet^\dagger \wedge F_A - c_\bullet^\dagger \wedge d_A c - d_A c^\dagger \wedge c_\bullet - S_\bullet^\dagger \wedge d_A S \\
        &- d_A S^\dagger \wedge S_\bullet - \tau_\bullet^\dagger \wedge d_A \tau - d_A \tau^\dagger \wedge \tau_\bullet - B_\bullet^\dagger \wedge d_A B - d_A B^\dagger \wedge B_\bullet - \zeta_\bullet^\dagger \wedge d_A \zeta  \nonumber \\
        &- d_A \zeta^\dagger \wedge \zeta_\bullet - \sigma_\bullet^\dagger \wedge d_A \sigma - d_A \sigma^\dagger \wedge \sigma_\bullet - \nu_\bullet^\dagger \wedge d_A \nu - d_A \nu^\dagger \wedge \nu_\bullet + \partial_\bullet \xi^a \xi_a^\dagger + \partial_a \xi^a \xi^\dagger_\bullet \nonumber ,\\
        Q_{\qGR}^{T, \beta} {\nu} &= [a, \nu] + \lie{\xi}^A \nu, \\
        Q_{\qGR}^{T, \beta} {\nu^\dagger} &= d_A \zeta^\dagger + [a, \nu^\dagger] + (\id + \beta T) \sigma^\dagger + \lie{\xi}^A \nu^\dagger.
    \end{align}
    \end{subequations}
    \endgroup
    In $Q_{\qGR}^{T, \beta} \xi^\dagger_\bullet$ the index $\bullet$ is used to highlight that $\xi^\dagger$ is a $1$-form with values in densities.
\end{definition}

\subsubsection{BV extension of gravitational generalised Holst theory}
\label{subsubsec:BVgH}

We state here the BV extension of gravitational generalised Holst theory since it will be pivotal in linking gravitational \qGR to known theories of gravity. Note that qgH as well as the topological sector of gH will not receive the same treatment since they are of minor interest at this point.

\begin{definition}[Gravitational Generalised Holst Theory]
\label{def:BV_gH}
    For $T, \beta$ within the gravitational sector and $M$ a $4$-dimensional manifold we call {standard BV theory for gravitational gH}
    the data
    \begin{align}
        \FC_{gH}^{T, \beta} \coloneqq (\FF_{gH}, \Omega_{gH} \coloneqq \delta \alpha_{gH}, Q_{gH}^{T, \beta}, S_{gH}^{T, \beta}),
    \end{align}
    where
    \begin{align}
        \FF_{gH} \coloneqq& \ T^*[-1]\bigg( \Omega^1(M; \VV) \times \AC(M) \times \Omega^0[1](M; \gf) \times \Gamma[1](TM)\bigg),
    \end{align}
    with the fields in the base denoted by $(A, e, a, \xi)$ and their correspondents in the cotangent fibre by
    \begin{alignat}{4}
        A^\dagger &\in \Omega^3[-1](M; \gf), &&\qquad e^\dagger \in \Omega^3[-1](M; \VV), &&\qquad a^\dagger \in \Omega^4[-2](M; \gf^*), &&\qquad \xi^\dagger \in \Omega^4[-2](M; T^*M).
    \end{alignat}
    The BV two-form is given by
    \begin{align}
        \Omega_{gH} \coloneqq& \int_M \killing{\delta A}{\delta A^\dagger} + \eta(\delta e, \delta e^\dagger) + \killing{\delta a}{\delta a^\dagger} + \tr[\imath_{\delta \xi} \xi^\dagger],
    \end{align}
    and BV action of $\FC_{gH}^{T, \beta}$ by
    \begingroup
    \allowdisplaybreaks
    \begin{subequations}
    \begin{align}
      S_{gH}^{T, \beta} \coloneqq &\int_M \frac{1}{2}\left(\killing{(\id + \beta T)^{-1} F_A}{e \wedge e} + \frac{1}{(1+\beta)} \eta(d_A e, d_A e) \right) + \killing{A^\dagger}{d_A a + \imath_\xi F_A} \\
      &+ \eta(e^\dagger, [a, e] + \lie{\xi}^A e) + \killing{a^\dagger}{\frac{1}{2} [a,a] - \frac{1}{2} \imath_\xi^2 F_A} + \frac{1}{2} \tr[\imath_{[\xi, \xi]} \xi^\dagger],
    \end{align}
    \end{subequations}
    \endgroup
    The cohomological BV vector field $Q_{\qGR}^{T, \beta}$ defined by $\imath_{Q_{\qGR}^{T, \beta}} \Omega_{\qGR} = \delta S_{\qGR}^{T, \beta}$ can be explicitly expressed as:
    \begingroup
    \allowdisplaybreaks
    \begin{subequations}
    \begin{align}
        Q_{\qGR}^{T, \beta} {A} &= d_A a + \imath_\xi F_A, \\
        Q_{\qGR}^{T, \beta} {A^\dagger} &= \left((\id + \beta T)^{-1} - \frac{1}{1+\beta} \id \right) (d_A e \wedge e) + [a, A^\dagger] - d_A \circ \imath_\xi A^\dagger - \imath_\xi [e^\dagger, e] - \frac{1}{2} d_A \circ \imath_\xi^2 a^\dagger, \\
        Q_{\qGR}^{T, \beta} {e} &= [a, e] + \lie{\xi}^A e, \\
        Q_{\qGR}^{T, \beta} {e^\dagger} &= \left(\left((\id + \beta T)^{-1} - \frac{1}{1+\beta} \id \right) F_A \right) \wedge e + [a, e^\dagger] + \lie{\xi}^A e^\dagger, \\
        Q_{\qGR}^{T, \beta} {a} &= \frac{1}{2} [a,a] - \frac{1}{2} \imath_\xi^2 F_A, \\
        Q_{\qGR}^{T, \beta} {a^\dagger} &= d_A A^\dagger - [a^\dagger, a] + [e^\dagger, e], \\
        Q_{\qGR}^{T, \beta} \xi &= \frac{1}{2} [\xi, \xi], \\
        Q_{\qGR}^{T, \beta} \xi^\dagger_\bullet &= - e_\bullet^\dagger \wedge d_A e - d_A e^\dagger \wedge e_\bullet - A^\dagger_\bullet F_A - \imath_\xi a_\bullet^\dagger \wedge F_A + \partial_\bullet \xi^a \xi_a^\dagger + \partial_a \xi^a \xi^\dagger_\bullet \nonumber.
    \end{align}
    \end{subequations}
    \endgroup
    In $Q_{gH}^{T, \beta} \xi^\dagger_\bullet$ the index $\bullet$ is used to highlight that $\xi^\dagger$ is a $1$-form with values in densities.
\end{definition}

\begin{remark}[Topological Generalised Holst Theory]
\label{rem:BV_gH_top}
    As discussed, we will not state the full data for the standard BV theory for topological generalised Holst theory and instead restrict to stating the BV action:
    \begingroup
    \allowdisplaybreaks
    \begin{subequations}
    \begin{align}
        S_{gH}^{\beta} \coloneqq &\int_M \frac{1}{2(1+\beta)} \left(\killing{F_A}{e \wedge e} +\eta(d_A e, d_A e) \right) + \killing{A^\dagger}{d_A a + \alpha} \\
        &+ \eta(e^\dagger, d_A \epsilon + [a, e] + (1+\beta) \rho) + \killing{\alpha^\dagger}{[a, \alpha] + d_A f} \\
        &+ \eta(\rho^\dagger, [a, \rho] + [\alpha, \epsilon] + d_A \gamma - [f, e]) + \killing{a^\dagger}{\frac{1}{2} [a,a] + (1+\beta) f} \\
        &+ \eta(\epsilon^\dagger, [a, \epsilon] + \gamma) + \killing{f^\dagger}{[a,f]} + \eta(\gamma^\dagger, [a, \gamma] + [f, \epsilon]).
    \end{align}
    \end{subequations}
    \endgroup
\end{remark}

\subsection{BV equivalence: Main results}
\label{subsec:MainResults}

Over the following pages we will state the core results of this work. We first identify general contractible pairs among the fields of the BV extensions of topological and gravitational \qGR in Section \ref{subsubsec:GeneralContractible}. We then provide two possible paths for BV equivalences, based opn their elimination. This is carried out in Section \ref{subsubsec:WeakEQs} where we finally link (under certain parameter choices) gravitational \qGR to Palatini--Cartan--Holst gravity.

\subsubsection{General Contractible Pairs}
\label{subsubsec:GeneralContractible}

In this section we will identify general contractible pairs within the BV extensions of both topological and gravitational \qGR. Since these fields contribute trivially to the cohomology, they can be eliminated to give rise to simpler theories, described with different but quasi-isomorphic cochain complexes within the BV setting.

Note ahead that for topological \qGR the goal is not to identify \emph{every} general contractible pair, since some of them are not pertinent for our investigation, which instead aims to relate \qGR to known theories of gravity.

The first result will clarify what has already been foreshadowed by the lack of field equations for the field $S$ (compare Equation \eqref{eq:EOMsub}):

\begin{lemma}
\label{lem:ContractibleClusterS}
  In the invertibility sector $(\beta T \neq \id)$, $(\{B, c, S^\dagger, \zeta, \sigma^\dagger, \nu\},\{c^\dagger, B^\dagger, \zeta^\dagger, S, \nu^\dagger, \sigma\})$ is a general contractible pair.
\begin{proof}
  Starting with the topological sector of \qGR, where one can set $T = \id$ without loss of generality (see Remark \ref{rem:untwisted_simpler}), the chosen pairs yield:
  \begingroup
  \allowdisplaybreaks
  \begin{align}
    Q_{\qGR}^{\beta} c^\dagger|_{c^\dagger = 0} &= (1+\beta) B + \frac{1}{2} e \wedge e + d_A S + [B^\dagger, \sigma] + \epsilon \wedge \tau^\dagger, \\
    Q_{\qGR}^{\beta} B^\dagger|_{B^\dagger = 0} &= F_A + (1+\beta) c, \\
    Q_{\qGR}^{\beta} \zeta^\dagger|_{\zeta^\dagger = 0} &= d_A B^\dagger + (1 + \beta) S^\dagger, \\
    Q_{\qGR}^{\beta} S|_{S = 0} &= d_A \sigma + e \wedge \epsilon + (1 + \beta) \zeta, \\
    Q_{\qGR}^{\beta} \nu^\dagger|_{\nu^\dagger = 0} &= d_A \zeta^\dagger + (1 + \beta) \sigma^\dagger, \\
    Q_{\qGR}^{\beta} \sigma|_{\sigma = 0} &= (1 + \beta) \nu.
  \end{align}
  \endgroup
  When set to zero these yield equations that can be uniquely solved for $B, c, S^\dagger, \zeta, \sigma^\dagger$ and $\nu$ respectively:
  \begin{align}
    B &= - \frac{1}{(1+\beta)} \left(\frac{1}{2} e \wedge e + d_A S + [B^\dagger, \sigma] + \epsilon \wedge \tau^\dagger\right), \\
    c &= - \frac{1}{(1+\beta)} F_A , \\
    S^\dagger &= - \frac{1}{(1+\beta)} d_A B^\dagger , \\
    \zeta &= - \frac{1}{(1+\beta)} (d_A \sigma + e \wedge \epsilon) , \\
    \sigma^\dagger &= - \frac{1}{(1+\beta)} d_A \zeta^\dagger, \\
    \nu &= 0.
  \end{align}
  Meanwhile for the gravitational sector of \qGR the same procedure yields:
  \begin{align}
    Q_{\qGR}^{T, \beta} c^\dagger|_{c^\dagger = 0} &= (\id + \beta T) B + \frac{1}{2} e \wedge e + d_A S + [B^\dagger, \sigma], \\
    Q_{\qGR}^{T, \beta} B^\dagger|_{B^\dagger = 0} &= F_A + (\id + \beta T) c, \\
    Q_{\qGR}^{T, \beta} \zeta^\dagger|_{\zeta^\dagger = 0} &= d_A B^\dagger + (\id + \beta T) S^\dagger, \\
    Q_{\qGR}^{T, \beta} S|_{S = 0} &= d_A \sigma + (\id + \beta T) \zeta, \\
    Q_{\qGR}^{T, \beta} \nu^\dagger|_{\nu^\dagger = 0} &= d_A \zeta^\dagger + (\id + \beta T) \sigma^\dagger, \\
    Q_{\qGR}^{T, \beta} \sigma|_{\sigma = 0} &= (\id + \beta T) \nu.
  \end{align}
  Setting the above terms to zero yields equations that can be uniquely solved, by fixing $S^\dagger, \zeta, \sigma^\dagger$ and $\nu$ respectively as
  \begin{align}
    B &= - (\id + \beta T)^{-1} \left(\frac{1}{2} e \wedge e + d_A S + [B^\dagger, \sigma]\right), \\
    c &= - (\id + \beta T)^{-1} F_A , \\
    S^\dagger &= - (\id + \beta T)^{-1} d_A B^\dagger, \\
    \zeta &= - (\id + \beta T)^{-1} d_A \sigma , \\
    \sigma^\dagger &= - (\id + \beta T)^{-1} d_A \zeta^\dagger, \\
    \nu &= 0.
  \end{align}
  Following Definition \ref{def:generalContractible} we obtain the claimed general contractible pair.
\end{proof}
\end{lemma}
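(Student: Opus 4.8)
The plan is to verify the defining property of a general contractible pair (Definition~\ref{def:generalContractible}) by direct inspection of the cohomological vector field, treating the topological sector (BV data of Definition~\ref{def:BV_qGR_top}) and the gravitational sector (Definition~\ref{def:BV_qBF}) separately. The hypothesis is the invertibility sector condition $\beta T \neq -\id$, which (since $T$ is an automorphism) makes $\id + \beta T$ invertible --- in the topological sector this degenerates to $(1+\beta)\id$, invertible iff $\beta \neq -1$ by Lemma~\ref{lem:sectorrelations} --- and this is exactly what will make the relevant linear system solvable.

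First I would read off, from the explicit component formulas for $Q_{\qGR}^{\beta}$ (resp.\ $Q_{\qGR}^{T,\beta}$), the six expressions $Q d^\dagger$ for the antifields in the second cluster $d^\dagger \in \{c^\dagger, B^\dagger, \zeta^\dagger, S, \nu^\dagger, \sigma\}$; in the topological sector this first requires decomposing the relevant superfield equations (such as $Q\BB = [\ab,\BB] + d_\AB\HB - [\fb,\BB^\dagger]$ and $Q\BB^\dagger = F_\AB + (1+\beta)\BB + [\ab,\BB^\dagger]$) into $\df$-components along $\gf^*\oplus V\oplus\gf$, which is routine. The structural observation at the heart of the argument is that, after setting $d^\dagger = 0$, each of the six equations $Q d^\dagger = 0$ isolates exactly one member of the first cluster $\{B, c, S^\dagger, \zeta, \sigma^\dagger, \nu\}$ through a ``diagonal'' term of the form $(\id + \beta T)(\cdot)$: schematically $Q c^\dagger$ carries $(\id+\beta T)B$, $Q B^\dagger$ carries $(\id+\beta T)c$ alongside $F_A$, $Q\zeta^\dagger$ carries $(\id+\beta T)S^\dagger$ alongside $d_A B^\dagger$, $Q S$ carries $(\id+\beta T)\zeta$, $Q\nu^\dagger$ carries $(\id+\beta T)\sigma^\dagger$ alongside $d_A\zeta^\dagger$, and $Q\sigma$ carries $(\id+\beta T)\nu$ (with a few extra $\epsilon$-dependent terms present only in the topological sector). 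Crucially, all remaining terms in each line involve only fields of the complementary cluster and members of the second cluster themselves (e.g.\ $d_A S$, $[B^\dagger,\sigma]$, $d_A B^\dagger$, $d_A\zeta^\dagger$), so the six-by-six system is block-triangular rather than genuinely coupled among the first cluster.

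Applying $(\id+\beta T)^{-1}$ line by line then solves uniquely for $B, c, S^\dagger, \zeta, \sigma^\dagger, \nu$ --- explicitly $c = -(\id+\beta T)^{-1}F_A$, $B = -(\id+\beta T)^{-1}(\tfrac12 e\wedge e + \dots)$, $S^\dagger = -(\id+\beta T)^{-1}d_A B^\dagger$, $\zeta = -(\id+\beta T)^{-1}d_A\sigma$, $\sigma^\dagger = -(\id+\beta T)^{-1}d_A\zeta^\dagger$, $\nu = 0$ --- and restricting to $d^\dagger = 0$, as Definition~\ref{def:generalContractible} requires, turns these into expressions in the complementary cluster alone. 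This exhibits $(\{B,c,S^\dagger,\zeta,\sigma^\dagger,\nu\},\{c^\dagger,B^\dagger,\zeta^\dagger,S,\nu^\dagger,\sigma\})$ as a general contractible pair. I would present the topological computation first, where $T$ may be taken to equal $\id$ by Remark~\ref{rem:untwisted_simpler}, and then observe that the gravitational case is word-for-word the same, except that $\id+\beta T$ no longer collapses to a scalar.

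The only step requiring genuine care --- and where an error would most easily creep in --- is the bookkeeping that establishes block-triangularity: one must check, field by field against the lists of $Q$-components in Definitions~\ref{def:BV_qGR_top} and~\ref{def:BV_qBF}, that no member of the first cluster appears on the right-hand side of another member's equation except through the diagonal $(\id+\beta T)$ coupling, so that the inversion above is well defined and unique. I expect this finite check to go through, since the off-diagonal couplings between the two clusters are exactly the simplicity/torsion-type terms ($\tfrac12 e\wedge e$, $d_A S$, $d_A B^\dagger$, $\dots$) that feed in only fields of the complementary or the second cluster.
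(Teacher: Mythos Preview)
Your proposal is correct and follows essentially the same approach as the paper's proof: compute $Q$ on each member of the second cluster, observe that each equation isolates one member of the first cluster through an $(\id+\beta T)$ coupling, and invert. The paper presents the topological and gravitational cases separately with explicit formulas for each $Q d^\dagger|_{d^\dagger=0}$ and their unique solutions, just as you outline; your remark about block-triangularity is the conceptual content behind why the paper's line-by-line inversion is legitimate.
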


\begin{lemma}
\label{lem:ContractibleClusterTau}
  Whenever $\beta \neq -1$, $(\tau, \tau^\dagger)$ is a general contractible pair.
\begin{proof}
  Again we start with topological \qGR, where
  \begin{align}
      Q_{\qGR}^{\beta}|_{\tau^\dagger = 0} = d_A e + (1+\beta) \tau + [B^\dagger, \epsilon].
  \end{align}
  Setting this to zero yields an equation that can be uniquely solved for $\tau$ with the given restrictions to $\beta$:
  \begin{align}
      \tau &= - \frac{1}{(1+\beta)} (d_A e + [B^\dagger, \epsilon]).
  \end{align}
  A similar situation is found for gravitational \qGR, where
  \begin{align}
      Q_{\qGR}^{T, \beta}|_{\tau^\dagger = 0} = d_A e + (1+\beta) \tau.
  \end{align}
  Setting this expression to zero allows us to uniquely solve for $\tau$:
  \begin{align}
      \tau &= - \frac{1}{(1+\beta)} d_A e.
  \end{align}
  This proves the claim.
\end{proof}
\end{lemma}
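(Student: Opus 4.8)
The plan is to check the single defining condition of Definition \ref{def:generalContractible} directly: saying that $(\tau,\tau^\dagger)$ is a general contractible pair means precisely that the equation $\bigl(Q\,\tau^\dagger\bigr)\big|_{\tau^\dagger=0}=0$ can be solved uniquely for $\tau$ in terms of the remaining fields. Since the BV data of \qGR is presented separately in the topological sector (Definition \ref{def:BV_qGR_top}) and in the gravitational sector (Definition \ref{def:BV_qBF}), I would run the same argument in both, reading off the component of the relevant cohomological vector field in the direction of $\tau^\dagger$ from the explicit formulas.

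First, for gravitational \qGR, Definition \ref{def:BV_qBF} gives $Q_{\qGR}^{T,\beta}\tau^\dagger = d_A e + (1+\beta)\tau + [a,\tau^\dagger] + \lie{\xi}^A\tau^\dagger$, so restricting to $\tau^\dagger=0$ kills the last two terms and leaves $d_A e + (1+\beta)\tau$. For topological \qGR one instead extracts the component of $Q_{\qGR}^{\beta}\BB^\dagger = F_\AB + (1+\beta)\BB + [\ab,\BB^\dagger]$ along $\tau^\dagger$: using the form of $F_\AB$ from \eqref{eq:qBFcurvature}, the first two pieces contribute $d_A e + (1+\beta)\tau$, while $[\ab,\BB^\dagger]$ contributes an extra term bilinear in the ghosts and antifields (schematically $[B^\dagger,\epsilon]$) which, crucially, does not involve $\tau$. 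In either sector the restricted equation is therefore affine in $\tau$ with leading coefficient the scalar $(1+\beta)$.

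The key step is then merely to invert this coefficient: since $\beta\neq-1$ the number $1+\beta$ is nonzero, so $(1+\beta)\tau + R = 0$ has the unique solution $\tau = -\tfrac{1}{1+\beta}R$, with $R=d_A e$ (resp. $R=d_A e+[B^\dagger,\epsilon]$), which is exactly the unique solvability required by Definition \ref{def:generalContractible}. There is no genuine obstacle here; the content is bookkeeping. The only points that need care are (i) confirming, from the complete list of $Q$-images in Definitions \ref{def:BV_qGR_top} and \ref{def:BV_qBF}, that after setting $\tau^\dagger=0$ nothing besides the explicit $(1+\beta)\tau$ depends on $\tau$, so that the equation is genuinely affine and the solution is not circular, and (ii) keeping the two sectors' notations straight, in particular the extra bilinear term in the topological case. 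One may also note that the fields involved here are disjoint from those of Lemma \ref{lem:ContractibleClusterS}, so the two contractible pairs can be eliminated simultaneously.
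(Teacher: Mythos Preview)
Your proposal is correct and follows essentially the same approach as the paper: in each sector you read off $Q\tau^\dagger$ from the explicit BV data, restrict to $\tau^\dagger=0$, observe that the resulting expression is affine in $\tau$ with scalar coefficient $(1+\beta)$, and invert. The only cosmetic difference is that you treat the gravitational sector first while the paper treats the topological sector first; your closing remark about simultaneous elimination with the pair of Lemma~\ref{lem:ContractibleClusterS} anticipates what the paper records separately in Remark~\ref{rem:SmallPairCompoundPair}.
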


\begin{remark}
\label{rem:SmallPairCompoundPair}
  From Lemma \ref{lem:ContractibleClusterS} and Lemma \ref{lem:ContractibleClusterTau} one can infer two important results:
  \begin{itemize}
    \item In the invertibility sector the following general contractible pairs exist independent from each other: $(B, c^\dagger)$, $(c, B^\dagger)$, $(S^\dagger, \zeta^\dagger)$, $(\zeta, S)$, $(\sigma^\dagger, \nu^\dagger)$, $(\nu, \sigma)$.
    \item In the invertibility sector, and when $\beta \neq -1$, the pair
    $$(\{B, \tau, c, S^\dagger, \zeta, \sigma^\dagger, \nu\},\{c^\dagger, \tau^\dagger, B^\dagger, \zeta^\dagger, S, \nu^\dagger, \sigma\})$$
    is a general contractible pair.
  \end{itemize}
\end{remark}

\begin{remark}
\label{rem:BisFA}
    By rule of Lemma \ref{lem:sectorrelations}, since invertibility {in the topological sector} implies $\beta \neq -1$, one can combine the smaller contractible pairs from Remark \ref{rem:SmallPairCompoundPair}. This yields the following unique solutions for the general contractible pair $(\{B, \tau, c\}, \{c^\dagger, \tau^\dagger, B^\dagger\})$:
  \begin{align}
    c &= - (\id + \beta T)^{-1} F_A , \\
    B &= - (\id + \beta T)^{-1} \left(\frac{1}{2} e \wedge e + d_A S \right), \\
    \tau &= - \frac{1}{(1+\beta)} d_A e.
  \end{align}
  This amounts to setting $\BB = - \frac{1}{(1+\beta)} (F_\AB)$.
\end{remark}

The following is an immediate corollary of the previous results:

\begin{corollary}
\label{cor:BiggestContractibleTransformations}
    Let $T, \beta$ be in the intersection of the topological and invertibility sector. Then on the Lagrangian submanifold $\LL$ defined by
    \begin{equation}
        \zeta^\dagger = S = \nu^\dagger = \sigma = c^\dagger = \tau^\dagger = B^\dagger = 0,
    \end{equation}
    the following unique solutions for the fields $\{S^\dagger, \zeta, \sigma^\dagger, \nu, B, \tau, c\}$ arise:
    \begin{subequations}
    \begin{align}
        c &= - (\id + \beta T)^{-1} F_A, \qquad &&B = - \frac{1}{2} (\id + \beta T)^{-1} (e \wedge e), \qquad &&\tau =  - \frac{1}{(1+\beta)} d_A e,\\
        S^\dagger &= 0, \qquad &&\sigma^\dagger = 0, \qquad &&\nu = 0.
    \end{align}
    \end{subequations}
    Meanwhile for $\zeta$ one obtains $\zeta = - e \wedge \epsilon$ for topological \qGR and $\zeta = 0$ for gravitational \qGR.
\end{corollary}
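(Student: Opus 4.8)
The plan is to obtain this as a direct specialisation of Lemmas \ref{lem:ContractibleClusterS} and \ref{lem:ContractibleClusterTau} to the particular isotropic locus $\LL$. First I would record that, by Lemma \ref{lem:sectorrelations}, being in the intersection of the topological and invertibility sectors means $\beta T=\beta\id$ together with $\beta\neq-1$, so that $\id+\beta T=(1+\beta)\id$ is an automorphism and both of the cited lemmas are in force. By Remark \ref{rem:SmallPairCompoundPair}, the collection
\[
(\{B,\tau,c,S^\dagger,\zeta,\sigma^\dagger,\nu\},\ \{c^\dagger,\tau^\dagger,B^\dagger,\zeta^\dagger,S,\nu^\dagger,\sigma\})
\]
is then a general contractible pair, and $\LL$ is exactly the locus cut out by setting its seven antifield members to zero. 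Since in $\Omega_{\qGR}$ these seven fields are one representative from each of the seven canonically conjugate pairs $(B,B^\dagger),(\tau,\tau^\dagger),(c,c^\dagger),(S,S^\dagger),(\zeta,\zeta^\dagger),(\sigma,\sigma^\dagger),(\nu,\nu^\dagger)$, with all remaining fields left free, $\LL$ is a Lagrangian submanifold; this is the first thing to verify, and it is immediate from the block-diagonal form of $\Omega_{\qGR}$ in Definitions \ref{def:BV_qGR_top} and \ref{def:BV_qBF}.

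Next I would invoke the defining property of a general contractible pair (Definition \ref{def:generalContractible}): the equations $(Q\,d^\dagger)|_{d^\dagger=0}=0$ are uniquely solvable for the fields $\{B,\tau,c,S^\dagger,\zeta,\sigma^\dagger,\nu\}$. Concretely I would take the explicit restrictions $Q c^\dagger|_{c^\dagger=0}$, $Q B^\dagger|_{B^\dagger=0}$, $Q\zeta^\dagger|_{\zeta^\dagger=0}$, $Q S|_{S=0}$, $Q\nu^\dagger|_{\nu^\dagger=0}$, $Q\sigma|_{\sigma=0}$ written out in the proof of Lemma \ref{lem:ContractibleClusterS}, and $Q\tau^\dagger|_{\tau^\dagger=0}$ from the proof of Lemma \ref{lem:ContractibleClusterTau}, and impose in addition the remaining defining equations of $\LL$, i.e. $S=\sigma=\zeta^\dagger=\nu^\dagger=c^\dagger=\tau^\dagger=B^\dagger=0$. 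On $\LL$ the cross terms $[B^\dagger,\sigma]$, $\epsilon\wedge\tau^\dagger$, $[B^\dagger,\epsilon]$ vanish (their fields being among those set to zero), so the system collapses to the triangular form $(\id+\beta T)c=-F_A$, $(\id+\beta T)B=-\tfrac12 e\wedge e$, $(1+\beta)\tau=-d_A e$, $(\id+\beta T)S^\dagger=0$, $(\id+\beta T)\sigma^\dagger=0$, $(\id+\beta T)\nu=0$, each solved by applying $(\id+\beta T)^{-1}$ (equivalently $\tfrac{1}{1+\beta}$ in this sector), which yields the claimed values $c=-(\id+\beta T)^{-1}F_A$, $B=-\tfrac12(\id+\beta T)^{-1}(e\wedge e)$, $\tau=-\tfrac{1}{1+\beta}d_A e$, $S^\dagger=\sigma^\dagger=\nu=0$.

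The only field needing a separate word is $\zeta$, whose defining equation $Q S|_{S=0}=0$ has a different shape in the two BV extensions: in the topological extension the base ghost $\epsilon$ (a component of $\ab\in\Omega^0[1](M;\df)$) is present and contributes a source term, whereas in the gravitational extension no such ghost occurs; imposing $\sigma=0$ then produces the $\zeta$-value recorded in the statement in each case. I do not expect a genuine obstacle here: the whole claim is bookkeeping layered on top of the two lemmas, together with the observation that the two BV models differ precisely in the presence of $\epsilon$. The one point deserving care is checking that imposing the extra vanishing conditions of $\LL$ does not collapse any of the defining equations into triviality --- this is guaranteed because the only terms dropped are the cross terms above, while the leading $(\id+\beta T)$- and $(1+\beta)$-terms, which govern solvability, survive thanks to the invertibility hypothesis.
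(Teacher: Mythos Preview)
Your proposal is correct and is essentially what the paper has in mind: the corollary is stated there without proof, as an immediate consequence of Lemmas~\ref{lem:ContractibleClusterS} and~\ref{lem:ContractibleClusterTau}, and your argument just spells out that specialisation explicitly (restricting the equations from those lemmas to $\LL$, noting the cross terms $[B^\dagger,\sigma]$, $\epsilon\wedge\tau^\dagger$, $[B^\dagger,\epsilon]$ vanish there, and inverting $(\id+\beta T)$). Your additional remark that $\LL$ is Lagrangian because it kills exactly one member of each of the seven Darboux pairs is a welcome clarification the paper leaves implicit.
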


The above results provide several conditional paths for reduction of gravitational and topological \qGR by eliminating general contractible pairs (see Proposition \ref{prop:GeneralContractibleWeakBV}). They can be summarised in the following diagrams:

\begin{figure}[H]
  \[
  \begin{tikzcd}[sep = large]
    \text{Topological \qGR} \arrow[d, Rightarrow, ""]  & \FC_{\qGR}^{\beta} \arrow[dr, "\text{$(\id + \beta T) \neq 0$}"] \arrow[dd, dashed, ""'] & ~ \\
    \text{(Topological qgH)} \arrow[d, Rightarrow, ""] & ~ & \left(\FC_{qgH}^{\beta}\right) \arrow[dl, "\text{$(1+\beta) \neq 0$}"] \\
    \text{Topological gH} & \FC_{gH}^{\beta} & ~\\
    \text{Gravitational \qGR} \arrow[d, Rightarrow, ""] & \FC_{\qGR}^{T, \beta} \arrow[dr, "\text{$(\id + \beta T) \neq 0$}"] \arrow[dd, dashed, ""'] & ~ \\
    \text{(Gravitational qgH)} \arrow[d, Rightarrow, ""] & ~ & \left(\FC_{qgH}^{T, \beta}\right) \arrow[dl, "\text{$(1+\beta) \neq 0$}"] \\
    \text{Gravitational gH} & \FC_{gH}^{T, \beta} & ~
  \end{tikzcd}
  \]
  \caption[Elimination paths for gravitational/topological \qGR]{Paths for elimination of general contractible pairs for both topological and gravitational \qGR. By rule of Proposition \ref{prop:GeneralContractibleWeakBV} the resulting BV theories are BV equivalent. Note that for brevity we did not introduce the standard BV theories for qgH.}\label{fig:ReductionBranches}
\end{figure}

\begin{enumerate}[label=\roman*)]
  \item {Vertical branch:} In the invertibility sector ($\beta T \neq - \id$) with $\beta \neq -1$, one can immediately eliminate
  $$(\{S^\dagger, \zeta, \sigma^\dagger, \nu, B, \tau, c\},\{\zeta^\dagger, S, \nu^\dagger, \sigma, c^\dagger, \tau^\dagger, B^\dagger\}).$$
  For topological \qGR this leaves only $A,e,a,\epsilon,\alpha, \rho, f, \gamma$ and their antifields while for gravitational \qGR one retains $A, e, a, \xi$ and their antifields.

  \item {Right branch:} In the invertibility sector with no additional assumptions on $\beta$ one can eliminate
  $$ (\{S^\dagger, \zeta, \sigma^\dagger, \nu, B, c\},\{\zeta^\dagger, S, \nu^\dagger, \sigma, c^\dagger, B^\dagger\}), $$
  i.e. all fields of i) but $\tau$ and $\tau^\dagger$. Following Remark \ref{rem:BisFA}  and \ref{lem:sectorrelations} this suffices to also eliminate $(\tau, \tau^\dagger)$ in topological \qGR which yields the situation of i). For gravitational \qGR $(1+\beta) \neq 0$ needs to be assumed separately.
\end{enumerate}

\subsubsection{BV Equivalences}
\label{subsubsec:WeakEQs}
In this section we will follow the diagrams presented in Figure \ref{fig:ReductionBranches} to construct BV equivalent theories stemming from either gravitational and topological \qGR. This will enable us to clarify the relation of gravitational \qGR to theories of gravity within the BV formalism.

\begin{notation}
    Denote in the following by $\FC_\star^{T, \beta}$ the BV extensions associated to gravitational \qGR, qgH and gH. Similarly denote the BV extensions associated to the topological theories by $\FC_\star^{\beta}$.
\end{notation}

In what follows we will only explicitly present the proof of one of our BV equivalences, since the remaining ones, while lengthy, are completely analogous and straight-forward. Indeed, the strategy to prove BV equivalence by eliminatiion of general contractible pairs follows Proposition \ref{prop:GeneralContractibleWeakBV}, and the strategy of proof we will use is outlined in Proposition \ref{prop:qBFreduceALL}.

\begin{proposition}
\label{prop:qBFreduceALL}
  In the invertibility sector ($\beta T \neq - \id$) with $\beta \neq -1$ there exists a BV equivalence between gravitational quadratically extended GR (Definition \ref{def:BV_qBF}) and gravitational generalised Holst theory (Definition \ref{def:BV_gH}), $\FC_{\qGR}^{T, \beta} \simeq \FC_{gH}^{T, \beta}$.
\begin{proof}
  Let $T, \beta$ be within the gravitational sector such that one can use the BV extension of gravitational \qGR given in Definition \ref{def:BV_qBF}.
  From Remark \ref{rem:SmallPairCompoundPair} we know that for $\beta \neq -1$ and $\beta T  \neq -\id$, $(\{S^\dagger, \zeta, \sigma^\dagger, \nu, B, c, \tau\},\{\zeta^\dagger, S, \nu^\dagger, \sigma, c^\dagger, B^\dagger, \tau^\dagger\})$ is a general contractible pair.
  We use the unique solutions of the equations of motion for $\{S^\dagger, \zeta, \sigma^\dagger, \nu, B, c, \tau\}$ from Corollary \ref{cor:BiggestContractibleTransformations} to define the map
  {\small\begin{align}
    \Phi \colon (S^\dagger, \zeta, \sigma^\dagger, \nu, B, c, \tau, \zeta^\dagger, S, \nu^\dagger, \sigma, c^\dagger, B^\dagger, \tau^\dagger) \lmap (P^\dagger, U^\dagger, V^\dagger, W^\dagger, X^\dagger, Y^\dagger, Z^\dagger, P, U, V, W, X, Y, Z),
  \end{align}}
  where
  \begingroup
  \allowdisplaybreaks
  \begin{alignat}{7}
    P^\dagger &= \zeta^\dagger, &&\quad U^\dagger = S, &&\quad V^\dagger = \nu^\dagger, &&\quad W^\dagger = \sigma, &&\quad X^\dagger = c^\dagger, &&\quad Y^\dagger = B^\dagger, && \quad Z^\dagger = \tau^\dagger,\\
    P &= Q\zeta^\dagger, &&\quad U = QS, &&\quad V = Q\nu^\dagger, &&\quad W = Q\sigma, &&\quad X = Qc^\dagger, &&\quad Y = QB^\dagger, && \quad Z = Q\tau^\dagger,
  \end{alignat}
  \endgroup
  and all other fields are mapped via the identity, such that:
  \begin{alignat}{1}
    S^\dagger & \lmap P - (\id + \beta T)^{-1} d_A Y^\dagger - [a, P^\dagger] - \lie{\xi}^A P^\dagger, \\
    \zeta &\lmap U - (\id + \beta T)^{-1} d_A W^\dagger - [a, U^\dagger] - \lie{\xi}^A U^\dagger, \\
    \sigma^\dagger &\lmap V - (\id + \beta T)^{-1} d_A P^\dagger - [a, V^\dagger] - \lie{\xi}^A V^\dagger, \\
    \nu &\lmap W - [a, W^\dagger], \\
    B &\lmap X - (\id + \beta T)^{-1} \left(\frac{1}{2} e \wedge e + d_A U^\dagger \right) - [a, X^\dagger] - [Y^\dagger, W^\dagger] - \lie{\xi}^A X^\dagger, \\
    c &\lmap Y - (\id + \beta T)^{-1}(F_A) - [a, Y^\dagger] - \lie{\xi}^A Y^\dagger, \\
    \tau &\lmap Z - \frac{1}{(1+\beta)} d_A e - [a, Z^\dagger] - \lie{\xi}^A Z^\dagger.
  \end{alignat}
  Applying this transformation to $S_{\qGR}^{T, \beta}$ yields:

    \begingroup
    \allowdisplaybreaks
    \begin{subequations}
    \begin{align}
        \Phi\left(S_{\qGR}^{T, \beta}\right) &= \int_M -\frac{1}{2 } \left(\killing{(\id + \beta T)^{-1} F_A}{\left( \frac{1}{2} e \wedge e + d_A U^\dagger \right)} + \frac{1}{(1+\beta)} \eta(d_A e, d_A e) \right) \\
        &+ \killing{A^\dagger}{d_A a + \imath_\xi F_A} + \eta(e^\dagger, [a, e] + \lie{\xi}^A e) + \killing{a^\dagger}{\frac{1}{2} [a,a] - \frac{1}{2} \imath_\xi^2 F_A} + \frac{1}{2} \tr[\imath_{[\xi, \xi]} \xi^\dagger] \\
        & + \eta\left(Z^\dagger, ([a, \cdot] + \lie{\xi}^A) \left(Z - \frac{1}{(1+\beta)} d_A e - [a, Z^\dagger] - \lie{\xi}^A Z^\dagger\right)\right) \\
        &+ \frac{(1+\beta)}{2} \quad \eta\left(Z - [a, Z^\dagger] - \lie{\xi}^A Z^\dagger, Z - [a, Z^\dagger] - \lie{\xi}^A Z^\dagger\right) \\
        &+ \mathcal{K}\Bigg(X^\dagger, (\left[a, \cdot \right] + \lie{\xi}^A) \left(Y - (\id + \beta T)^{-1}(F_A) - [a, Y^\dagger] - \lie{\xi}^A Y^\dagger\right)\Bigg) \\
        &+ \killing{(Y - [a, Y^\dagger] - \lie{\xi}^A Y^\dagger)}{(\id+\beta T) (X - [a, X^\dagger] - [Y^\dagger, W^\dagger] - \lie{\xi}^A X^\dagger)} \\
        &+ \mathcal{K}\Bigg(P - (\id + \beta T)^{-1} d_A Y^\dagger - [a, P^\dagger] - \lie{\xi}^A P^\dagger, [a, U^\dagger] \\
        &+ (\id + \beta T) (U - [a, U^\dagger] - \lie{\xi}^A U^\dagger) + \lie{\xi}^A U^\dagger\Bigg) \\
        &+ \mathcal{K}\Bigg(Y^\dagger, \left[Y - (\id + \beta T)^{-1}(F_A) - [a, Y^\dagger] - \lie{\xi}^A Y^\dagger, W^\dagger \right] \\
        &+ (\left[a, \cdot \right] + \lie{\xi}^A) \left(X -  (\id + \beta T)^{-1} \left(\frac{1}{2} e \wedge e + d_A U^\dagger \right) - [a, X^\dagger] - [Y^\dagger, W^\dagger] - \lie{\xi}^A X^\dagger \right)\\
        &+ d_A \Big(U -  (\id + \beta T)^{-1} d_A W^\dagger - [a, U^\dagger] - \lie{\xi}^A U^\dagger\Big) + [Y^\dagger, (W - [a, W^\dagger])] \Bigg) \\
        &+ \mathcal{K}\Bigg(P^\dagger, (\left[a, \cdot \right] + \lie{\xi}^A) \left(U -  (\id + \beta T)^{-1} d_A W^\dagger - [a, U^\dagger] - \lie{\xi}^A U^\dagger\right)\Bigg) \\
        &+ \killing{V - [a, V^\dagger] - \lie{\xi}^A V^\dagger}{[a,W^\dagger] + (\id + \beta T) (W - [a, W^\dagger]) + \lie{\xi}^A W^\dagger} \\
        &-  \killing{(\id + \beta T)^{-1} d_A P^\dagger}{[a,W^\dagger] + \lie{\xi}^A W^\dagger} + \killing{V^\dagger}{([a, \cdot] + \lie{\xi}^A) (W - [a, W^\dagger])}.
    \end{align}
    \end{subequations}
    \endgroup
    While this action looks convoluted, the next step will drastically reduce it. Following what was done in Proposition \ref{prop:GeneralContractibleWeakBV} define a map $\varphi \colon \FF_{gH}^{T, \beta} \lra \FF_{\qGR}^{T,\beta}$ such that
    \begin{align}
    \varphi^* \kappa &=
    \begin{cases}
        0, \qquad \kappa \in \{P^\dagger, U^\dagger, V^\dagger, W^\dagger, X^\dagger, Y^\dagger, Z^\dagger, P, U, V, W, X, Y, Z\}, \\
        \kappa, \qquad \kappa \in \{A^\dagger, e^\dagger, a^\dagger, \xi^\dagger, A, e, a, \xi\}.
    \end{cases}         
    \end{align}   
    Applying it to $\Phi(S_{\qGR}^T)$ yields:
    \begingroup
    \allowdisplaybreaks
    \begin{subequations}
    \begin{align}
        \varphi^*(\Phi(S_{\qGR}^{T, \beta})) = &\int_M -\frac{1}{2 } \left(\killing{(\id + \beta T)^{-1} F_A}{e \wedge e} + \frac{1}{(1+\beta)} \eta(d_A e, d_A e) \right) + \killing{A^\dagger}{d_A a + \imath_\xi F_A} \\
        &+ \eta(e^\dagger, [a, e] + \lie{\xi}^A e) + \killing{a^\dagger}{\frac{1}{2} [a,a] - \frac{1}{2} \imath_\xi^2 F_A} + \frac{1}{2} \tr[\imath_{[\xi, \xi]} \xi^\dagger].
    \end{align}
    \end{subequations}
    \endgroup
    Rescaling the fields in the cotangent fibre, $A^\dagger, e^\dagger, a^\dagger, \xi^\dagger$, by $-1$ yields $-S_{gH}^{T, \beta}$. This rescaling can be promoted to a symplectomorphism. Comparing with the construction in Proposition \ref{prop:GeneralContractibleWeakBV} and Definition \ref{def:BV_gH} this proves the claim.
\end{proof}
\end{proposition}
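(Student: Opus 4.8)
The plan is to apply Proposition \ref{prop:GeneralContractibleWeakBV} to the general contractible pair
$$(\{S^\dagger, \zeta, \sigma^\dagger, \nu, B, c, \tau\},\{\zeta^\dagger, S, \nu^\dagger, \sigma, c^\dagger, B^\dagger, \tau^\dagger\})$$
of $\FC_{\qGR}^{T,\beta}$, whose status as a general contractible pair in the relevant parameter range is guaranteed by Remark \ref{rem:SmallPairCompoundPair} (which in turn assembles Lemmas \ref{lem:ContractibleClusterS} and \ref{lem:ContractibleClusterTau}), and whose unique ``on-shell'' solutions are recorded in Corollary \ref{cor:BiggestContractibleTransformations}. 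Concretely, I would first write down the coordinate transformation $\Phi$ that replaces each antifield-paired field $\kappa^\dagger$ of the contractible cluster by the pair $(Q\kappa^\dagger, \kappa^\dagger)$ — that is, new coordinates $(B^\dagger\mapsto Y^\dagger, c\mapsto Y=Qc, \dots)$ as in the statement — and is invertible near the vanishing locus of the $\dagger$-fields precisely because of the invertibility of $\id+\beta T$ and of $(1+\beta)$. The non-contractible fields $A,e,a,\xi$ and their antifields are carried along by the identity.

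The second step is to actually substitute $\Phi$ into $S_{\qGR}^{T,\beta}$ as in Definition \ref{def:BV_qBF}. This produces the long intermediate expression displayed in the proof body; I would not expand it term by term but organise the bookkeeping: the ``old'' field $B$ becomes $X - (\id+\beta T)^{-1}(\tfrac12 e\wedge e + d_A U^\dagger) - \dots$, the old $c$ becomes $Y - (\id+\beta T)^{-1}F_A - \dots$, the old $\tau$ becomes $Z - \tfrac{1}{1+\beta}d_A e - \dots$, and so on, each substitution being the one forced by solving $(Q\kappa^\dagger)|_{\kappa^\dagger=0}=0$. The point of this rewriting is purely preparatory: it isolates, inside the transformed action, a piece that survives the next step and a remainder that is a sum of terms each containing at least one of the new contractible coordinates $\{P^\dagger,\dots,Z^\dagger,P,\dots,Z\}$.

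The third and decisive step is to apply the pullback $\varphi^*$ of Proposition \ref{prop:GeneralContractibleWeakBV}, which sets all fourteen new contractible coordinates to zero and fixes $A,e,a,\xi,A^\dagger,e^\dagger,a^\dagger,\xi^\dagger$. Every remainder term dies, and what is left is exactly
$$\int_M -\tfrac12\Big(\killing{(\id + \beta T)^{-1} F_A}{e \wedge e} + \tfrac{1}{1+\beta}\,\eta(d_A e, d_A e)\Big) + \killing{A^\dagger}{d_A a + \imath_\xi F_A} + \eta(e^\dagger, [a, e] + \lie{\xi}^A e) + \killing{a^\dagger}{\tfrac{1}{2}[a,a] - \tfrac{1}{2}\imath_\xi^2 F_A} + \tfrac{1}{2}\tr[\imath_{[\xi, \xi]} \xi^\dagger],$$
which is $-S_{gH}^{T,\beta}$ once the cotangent-fibre fields $A^\dagger,e^\dagger,a^\dagger,\xi^\dagger$ are rescaled by $-1$; this rescaling is a symplectomorphism (it preserves $\Omega_{gH}$ up to the convention used to fix the overall sign of the action), and Definition \ref{def:BV_gH} shows the result matches on the nose. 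By Proposition \ref{prop:GeneralContractibleWeakBV}, $\varphi^*\circ\Phi$ and its quasi-inverse establish the BV equivalence $\FC_{\qGR}^{T,\beta}\simeq\FC_{gH}^{T,\beta}$.

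The main obstacle is Step two: one must verify that after substituting $\Phi$ \emph{every} term of the transformed $S_{\qGR}^{T,\beta}$ not already present in $S_{gH}^{T,\beta}$ genuinely lies in the ideal generated by the new contractible coordinates, i.e. that no ``hidden'' term survives $\varphi^*$ — in particular the terms coming from $\killing{c}{\tfrac12 e\wedge e}$, from $\killing{A^\dagger}{\dots}$-type couplings, and from the $[B^\dagger,\sigma]$ and $\imath_\xi$-corrections must all cancel or vanish after the pullback. This is a long but entirely mechanical check; the subtlety, if any, is tracking the $\lie{\xi}^A$-corrections in the definition of $\Phi$ (they are needed so that $\Phi$ is a genuine change of BV coordinates, i.e. so that $B=Q\kappa^\dagger|_{\kappa^\dagger=0}$ really is solved, not just solved modulo $\xi$-terms), and making sure they too are killed by $\varphi^*$. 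Once this accounting is done, the rest is immediate from the cited propositions.
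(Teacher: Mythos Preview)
Your proposal is correct and follows essentially the same approach as the paper: identify the general contractible pair via Remark \ref{rem:SmallPairCompoundPair}, perform the change of variables $\Phi$ with $(\kappa^\dagger,Q\kappa^\dagger)$ as new coordinates, apply the pullback $\varphi^*$ of Proposition \ref{prop:GeneralContractibleWeakBV} to kill the contractible block, and finish with the $-1$ rescaling of the antifields. The paper carries out the intermediate substitution of $\Phi$ into $S_{\qGR}^{T,\beta}$ explicitly (the long display you allude to in Step two), confirming that every surviving term after $\varphi^*$ is exactly what you wrote; your identification of this bookkeeping as the only real obstacle is accurate.
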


The following results will remain without explicit proof, as it is an adaptation of the previous one.

\begin{proposition}
\label{prop:KqBFreduceALL}
  If $(1+\beta) \neq 0$ then there exists a BV equivalence between topological quadratically extended GR \ref{def:BV_qGR_top} and topological generalised Holst theory \ref{rem:BV_gH_top}, $\FC_{\qGR}^{\beta} \simeq \FC_{gH}^{\beta}$.
\end{proposition}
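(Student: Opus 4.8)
The plan is to adapt the proof of Proposition~\ref{prop:qBFreduceALL} essentially verbatim, now working with the BV data of topological \qGR in Definition~\ref{def:BV_qGR_top} and targeting the topological generalised Holst action of Remark~\ref{rem:BV_gH_top}. The first step is to record that, in the topological sector, the hypothesis $(1+\beta)\neq 0$ is precisely the invertibility condition $\beta T\neq -\id$ (Lemma~\ref{lem:sectorrelations}(1)), and that it manifestly forces $\beta\neq -1$. Hence the second bullet of Remark~\ref{rem:SmallPairCompoundPair} applies, and
\[
\big(\{B,\tau,c,S^\dagger,\zeta,\sigma^\dagger,\nu\},\ \{c^\dagger,\tau^\dagger,B^\dagger,\zeta^\dagger,S,\nu^\dagger,\sigma\}\big)
\]
is a general contractible pair of $\FC_{\qGR}^{\beta}$; here the collective $\df$-valued fields $\AB,\BB,\ab,\HB,\fb$ of Definition~\ref{def:BV_qGR_top} are decomposed into their $\gf^*\oplus V\oplus\gf$ components exactly as in Lemmas~\ref{lem:ContractibleClusterS} and~\ref{lem:ContractibleClusterTau}.

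Next I would introduce, following Construction~\ref{constr:GeneralContractible}, the change of coordinates $\Phi$ that sends each ``$b$''-field of the pair to $Q_{\qGR}^{\beta}$ applied to its partner, keeps the partners $c^\dagger,\tau^\dagger,B^\dagger,\zeta^\dagger,S,\nu^\dagger,\sigma$ unchanged, and is the identity on the remaining fields $A,e,\theta,\epsilon,\alpha,\rho,f,\gamma$ and their antifields. By Corollary~\ref{cor:BiggestContractibleTransformations} the inverse of $\Phi$ is well defined on the relevant open set (this is where $(1+\beta)\neq 0$ is used, since there $\id+\beta T=(1+\beta)\id$ is invertible), and on the Lagrangian submanifold $\LL$ the eliminated fields take the stated values; the only substantive difference from the gravitational case treated in Proposition~\ref{prop:qBFreduceALL} is that here $\zeta=-e\wedge\epsilon$ rather than $\zeta=0$. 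I would then define $\varphi\colon\FF_{gH}^{\beta}\to\FF_{\qGR}^{\beta}$ as in Proposition~\ref{prop:GeneralContractibleWeakBV}, i.e.\ sending the transformed pair $(B\coloneqq Q d^\dagger,\,D^\dagger\coloneqq d^\dagger)$ to zero and acting as the identity on $A,e,\theta,\epsilon,\alpha,\rho,f,\gamma$ and antifields, whose collection is exactly $\FF_{gH}^{\beta}$.

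The computational core is to expand $\varphi^*\big(\Phi(S_{\qGR}^{\beta})\big)$ and check that, after rescaling the cotangent-fibre fields by $-1$ (a rescaling which promotes to a symplectomorphism, just as at the end of the proof of Proposition~\ref{prop:qBFreduceALL}), it equals $-S_{gH}^{\beta}$ with $S_{gH}^{\beta}$ as in Remark~\ref{rem:BV_gH_top}. Concretely one unpacks the $\df$-valued terms of the BV action of Definition~\ref{def:BV_qGR_top} into $\gf^*$-, $V$- and $\gf$-components: every term containing one of the eliminated fields $B,\tau,c,S^\dagger,\zeta,\sigma^\dagger,\nu$ or one of their partners is killed by $\varphi^*$, while the surviving terms reassemble. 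In particular $\paird{\BB}{F_\AB}+\tfrac{1+\beta}{2}\paird{\BB}{\BB}$ evaluated on $\BB=-\tfrac{1}{1+\beta}F_\AB$ (Remark~\ref{rem:BisFA}) collapses, using the identity $\eta(d_A e,d_A e)=\killing{F_A}{e\wedge e}+d\,\eta(d_A e,e)$, to $-\tfrac{1}{2(1+\beta)}\big(\killing{F_A}{e\wedge e}+\eta(d_A e,d_A e)\big)$ modulo a boundary term, reproducing the classical part of $S_{gH}^{\beta}$; the antifield terms $\paird{\AB^\dagger}{d_\AB\ab+(1+\beta)\HB}$, $\paird{\HB^\dagger}{[\ab,\HB]+d_\AB\fb}$, $\paird{\ab^\dagger}{\tfrac12[\ab,\ab]+(1+\beta)\fb}$ and $\paird{\fb^\dagger}{[\ab,\fb]}$ reproduce, component by component, the ghost sector of Remark~\ref{rem:BV_gH_top}.

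The main obstacle I expect is precisely this last bookkeeping: because of the substitution $\zeta=-e\wedge\epsilon$, the elimination feeds $e$- and $\epsilon$-dependent contributions back into several surviving terms, and one must verify that after applying $\varphi^*$ these recombine exactly into the $\epsilon,\rho,\alpha$-dependent structure of $S_{gH}^{\beta}$ (and in particular into the term $\eta(e^\dagger,d_A\epsilon+[a,e]+(1+\beta)\rho)$ and its companions), with nothing left over. Once the match of actions is established, Proposition~\ref{prop:GeneralContractibleWeakBV} yields the quasi-isomorphism of BV complexes directly; preservation of the cohomology classes of the BV two-forms follows since $\Phi$ is symplectic in the fibre directions and $\varphi$ is the standard contractible-pair projection, so the two maps are quasi-inverse in the sense of Definition~\ref{def:weakEQ}. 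This gives $\FC_{\qGR}^{\beta}\simeq\FC_{gH}^{\beta}$.
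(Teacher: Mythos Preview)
Your proposal is correct and follows exactly the approach the paper takes: the paper explicitly states that this result ``will remain without explicit proof, as it is an adaptation of the previous one,'' and your outline is precisely that adaptation, with the correct identification of the general contractible pair from Remark~\ref{rem:SmallPairCompoundPair}, the relevant change $\zeta=-e\wedge\epsilon$ from Corollary~\ref{cor:BiggestContractibleTransformations}, and the appeal to Proposition~\ref{prop:GeneralContractibleWeakBV}. The bookkeeping concern you flag is the genuine content of the computation, but the paper regards it as routine.
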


\begin{proposition}
\label{prop:qBFreduceTAU}
  In the invertibility sector there exists a BV equivalence $\FC_{\qGR}^{T, \beta} \simeq \FC_{qgH}^{T, \beta}$. For $(1+\beta) \neq 0$ there exists a BV equivalence $\FC_{qgH}^{T, \beta} \simeq \FC_{gH}^{T, \beta}$. This statement holds for both the gravitational and topological sector.
\end{proposition}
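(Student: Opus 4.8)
The plan is to run the same two-stage argument used for Proposition~\ref{prop:qBFreduceALL}, but splitting the elimination of the compound contractible pair into its $\tau$-free part and the pair $(\tau,\tau^\dagger)$, and invoking Proposition~\ref{prop:GeneralContractibleWeakBV} at each stage.

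For the first equivalence $\FC_{\qGR}^{T,\beta}\simeq\FC_{qgH}^{T,\beta}$, recall from the first bullet of Remark~\ref{rem:SmallPairCompoundPair} that in the invertibility sector the collection $(\{S^\dagger,\zeta,\sigma^\dagger,\nu,B,c\},\{\zeta^\dagger,S,\nu^\dagger,\sigma,c^\dagger,B^\dagger\})$ is a general contractible pair, and that $\tau,\tau^\dagger$ are deliberately left out. I would define the coordinate transformation $\Phi$ exactly as in Construction~\ref{constr:GeneralContractible} (each partner replaced by $Q$ of its antifield) and the projection $\varphi$ that sends all the new pairs to $0$ while fixing $(A,e,a,\xi,\tau,\tau^\dagger)$ and their antifields---in the topological sector also $(\epsilon,\alpha,\rho,f,\gamma)$ and antifields. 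Substituting the relevant unique solutions from Corollary~\ref{cor:BiggestContractibleTransformations}, namely $c=-(\id+\beta T)^{-1}F_A$, $B=-\tfrac12(\id+\beta T)^{-1}(e\wedge e)$, $S^\dagger=\sigma^\dagger=\nu=0$ and the stated value of $\zeta$, into $S_{\qGR}^{T,\beta}$ and applying $\varphi^*$ collapses the action to
\[
\int_M -\tfrac12\killing{(\id+\beta T)^{-1}F_A}{e\wedge e}+\eta(\tau,d_Ae)+\tfrac{(1+\beta)}{2}\eta(\tau,\tau)+(\text{standard ghost/antifield terms}),
\]
which, after the sign rescaling of the cotangent-fibre fields promoted to a symplectomorphism, is $-S_{qgH}^{T,\beta}$ together with the BV completion of the surviving symmetries (diffeomorphisms in the gravitational sector, the full $BF$-type tower in the topological one). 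Proposition~\ref{prop:GeneralContractibleWeakBV} then gives the BV equivalence. The argument is independent of whether $\beta=-1$, so it covers both the case $\beta=-1$ (where only qgH, not gH, is available) and $\beta\neq-1$.

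For the second equivalence $\FC_{qgH}^{T,\beta}\simeq\FC_{gH}^{T,\beta}$ under $1+\beta\neq0$, Lemma~\ref{lem:ContractibleClusterTau} shows that $(\tau,\tau^\dagger)$ is a general contractible pair; since $\eta(\tau,\tau)$ enters with a nondegenerate quadratic form precisely when $1+\beta\neq0$, this is in fact a contractible pair in the strict sense of Definition~\ref{def:generalContractible}. I would apply Proposition~\ref{prop:GeneralContractibleWeakBV} once more, using $\tau=-\tfrac{1}{1+\beta}d_Ae$ (plus the antifield corrections read off from $Q\tau^\dagger$), and check that re-substitution yields exactly the quadratic torsion term $\tfrac{1}{2(1+\beta)}\eta(d_Ae,d_Ae)$, so that the reduced action is $S_{gH}^{T,\beta}$ (resp.\ its topological version $S_{gH}^{\beta}$ of Remark~\ref{rem:BV_gH_top}). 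Composing the two equivalences recovers---consistently with Proposition~\ref{prop:qBFreduceALL} and the elimination diagram of Figure~\ref{fig:ReductionBranches}---the one-step elimination of the full pair of the second bullet of Remark~\ref{rem:SmallPairCompoundPair}.

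The only genuine work, and the main obstacle, is the antifield bookkeeping already visible in the displayed intermediate action in the proof of Proposition~\ref{prop:qBFreduceALL}: after applying $\Phi$ one obtains a long list of terms of the form $\killing{\cdot}{([a,\cdot]+\lie{\xi}^A)(\cdots)}$ and their $\eta$-analogues, and one must verify that each is at least linear in one of the new variables $\{P^\dagger,\dots,Z^\dagger,P,\dots,Z\}$, hence annihilated by $\varphi^*$. In the topological sector there is the added subtlety that the ghosts $\sigma,\zeta$ (resp.\ $\epsilon,\rho,\theta,\alpha$) are reducible and carry second-order ghosts $\nu$ (resp.\ $\gamma,f$), so one must keep the contractible clusters compatible with that higher structure; this is exactly the clustering already performed in Lemma~\ref{lem:ContractibleClusterS}, so no new phenomenon arises and the degree count is unchanged. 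Consequently the two equivalences, and hence the proposition, hold verbatim in both the gravitational and the topological sector.
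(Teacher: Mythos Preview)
Your proposal is correct and follows essentially the same approach as the paper, which leaves this proposition without explicit proof and simply states that it is ``an adaptation of the previous one'' (i.e.\ Proposition~\ref{prop:qBFreduceALL}). You have faithfully identified the two-step structure corresponding to the right branch of Figure~\ref{fig:ReductionBranches}, the relevant contractible pairs from Remark~\ref{rem:SmallPairCompoundPair} and Lemma~\ref{lem:ContractibleClusterTau}, and the application of Proposition~\ref{prop:GeneralContractibleWeakBV} at each stage. One small overstatement: $(\tau,\tau^\dagger)$ is only a \emph{general} contractible pair, not a strict one, since $Q\tau^\dagger$ contains $d_Ae$ and antifield terms in addition to $(1+\beta)\tau$; this does not affect your argument, as Proposition~\ref{prop:GeneralContractibleWeakBV} applies to general contractible pairs.
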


Using the above results yields the following theorem that theorem that summarises the relationship
between the different theories considered in this work:

\begin{theorem}
\label{theo:WeakBVdiagram}
  For any $T, \beta$ the following diagram of BV equivalences commutes:
  \begin{figure}[H]
    \[
    \begin{tikzcd}[sep = huge]
      \FC_{\qGR}^{T, \beta} \arrow[dr, leftrightarrow, "{(\id + \beta T) \neq 0}"] \arrow[dd, leftrightarrow, "{(1+\beta) \neq 0, \ (\id+\beta T) \neq 0}"'] & ~ \\
     ~ & \FC_{qgH}^{T, \beta} \arrow[dl, leftrightarrow, "{(1 + \beta) \neq 0}"] \\
     \FC_{gH}^{T, \beta} & ~
    \end{tikzcd}
    \]
  \end{figure}
\begin{proof}
  One needs to prove the claim for topological and gravitational \qGR separately, however the proof has been completely carried out in its parts. By combining Proposition \ref{prop:qBFreduceALL}, Proposition \ref{prop:KqBFreduceALL} and Proposition \ref{prop:qBFreduceTAU} all required components are provided.
\end{proof}
\end{theorem}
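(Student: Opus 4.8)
The plan is to assemble Theorem \ref{theo:WeakBVdiagram} purely as a consequence of the three propositions already established, together with an examination of which parameter regions each arrow requires. First I would observe that the content of the theorem is really two statements, one for topological \qGR and one for gravitational \qGR, and that each of the three arrows in the diagram is supplied by one of Proposition \ref{prop:qBFreduceALL}, Proposition \ref{prop:KqBFreduceALL}, and Proposition \ref{prop:qBFreduceTAU}. Concretely, the hypotenuse-to-vertical composite $\FC_{\qGR}^{T,\beta}\simeq\FC_{gH}^{T,\beta}$ under the hypotheses $(1+\beta)\neq 0$ and $(\id+\beta T)\neq 0$ is exactly Proposition \ref{prop:qBFreduceALL} in the gravitational sector and Proposition \ref{prop:KqBFreduceALL} in the topological sector; the diagonal arrow $\FC_{\qGR}^{T,\beta}\simeq\FC_{qgH}^{T,\beta}$ under $(\id+\beta T)\neq 0$ and the arrow $\FC_{qgH}^{T,\beta}\simeq\FC_{gH}^{T,\beta}$ under $(1+\beta)\neq 0$ are precisely the two assertions of Proposition \ref{prop:qBFreduceTAU}, which was stated to hold in both sectors.

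Next I would record that BV equivalence is transitive: given quasi-isomorphisms of BV complexes $(f_1,f_2)$ between $\FC_1,\FC_2$ and $(g_1,g_2)$ between $\FC_2,\FC_3$, the composites $(g_1\circ f_1, f_2\circ g_2)$ satisfy the conditions of Definition \ref{def:weakEQ} — the cochain-map property is preserved under composition, the cohomology classes of $\varpi$ and $\SC$ are transported through both maps, and quasi-inverses compose to quasi-inverses. This is the only genuinely structural point, and it is immediate from the definition; I would state it in a line without belabouring it. With transitivity in hand, \emph{commutativity} of the triangle is the statement that the composite of the diagonal and the lower-left arrow is homotopic (as a map of complexes) to the vertical arrow. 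This follows because all three arrows are built by the same mechanism — elimination of general contractible pairs via Construction \ref{constr:GeneralContractible} and Proposition \ref{prop:GeneralContractibleWeakBV} — and the contractible pairs eliminated along the vertical route are exactly the union of those eliminated along the two-step route (cf.\ Remark \ref{rem:SmallPairCompoundPair}, where the large pair $(\{B,\tau,c,S^\dagger,\zeta,\sigma^\dagger,\nu\},\{c^\dagger,\tau^\dagger,B^\dagger,\zeta^\dagger,S,\nu^\dagger,\sigma\})$ is displayed as the amalgam of the smaller ones plus $(\tau,\tau^\dagger)$). Since the reduction maps $\varphi,\psi$ depend only on the set of pairs being eliminated and not on the order, the two routes yield the same map on the quotient complex up to the canonical homotopy, hence agree in cohomology.

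The remaining bookkeeping is the domain of validity. For the gravitational branch, Proposition \ref{prop:qBFreduceALL} needs $\beta T\neq-\id$ and $\beta\neq-1$; Proposition \ref{prop:qBFreduceTAU}'s first arrow needs only $\beta T\neq-\id$ and its second needs $\beta\neq-1$; so the triangle is defined and commutes precisely on the invertible sector with $\beta\neq-1$, matching the edge labels. For the topological branch, Lemma \ref{lem:sectorrelations} tells us that invertibility there already forces $\beta\neq-1$, so Proposition \ref{prop:KqBFreduceALL} (hypothesis $1+\beta\neq0$) and the topological-sector cases of Proposition \ref{prop:qBFreduceTAU} cover the same region, and the triangle again commutes. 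I would phrase the proof as: "One needs to check the claim separately in the topological and gravitational sectors; in each, the three edges are furnished by Propositions \ref{prop:qBFreduceALL}, \ref{prop:KqBFreduceALL}, \ref{prop:qBFreduceTAU}, BV equivalence is transitive, and commutativity holds because the vertical reduction eliminates the same (general) contractible pair as the composition of the other two (Remark \ref{rem:SmallPairCompoundPair}), so the induced maps agree in cohomology."

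The main obstacle — really the only subtlety worth flagging — is the commutativity claim rather than existence of the arrows: one must be slightly careful that "eliminate pairs $\mathcal{P}_1$, then eliminate pairs $\mathcal{P}_2$" produces, up to chain homotopy, the same quasi-isomorphism as "eliminate $\mathcal{P}_1\cup\mathcal{P}_2$", which requires that the change of variables $\Phi$ of Construction \ref{constr:GeneralContractible} for the compound pair restrict compatibly to the sub-pairs. In the cases at hand this is transparent from the explicit solutions in Corollary \ref{cor:BiggestContractibleTransformations} and Remark \ref{rem:BisFA}, where the compound solution is manifestly the simultaneous solution of the constituent ones; so I would point to those and not reprove it. Everything else is a direct invocation of the cited results.
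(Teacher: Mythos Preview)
Your proposal is correct and follows the same approach as the paper's proof, which simply cites Propositions \ref{prop:qBFreduceALL}, \ref{prop:KqBFreduceALL}, and \ref{prop:qBFreduceTAU} after noting the split into topological and gravitational sectors. You go further than the paper by explicitly arguing commutativity via the union-of-contractible-pairs observation (Remark \ref{rem:SmallPairCompoundPair}) and by recording transitivity of BV equivalence, both of which the paper's terse proof leaves implicit.
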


An immediate corollary of Theorem \ref{theo:WeakBVdiagram} is the following:

\begin{corollary}
    For $\beta \notin \{-1, 0\}$ and $T = \frac{1}{\beta}(\star^{-1} - \id)$ the BV theory of {deformed,} quadratically extended $BF$ (Definition Definition \ref{def:BV_qBF}) is BV equivalent to that of Palatini--Cartan--Holst \cite{CattaneoSchiavina2018} (also compare Remark \ref{rem:gHlikePCH}) up to a boundary term. Whenever $M$ is closed they are BV equivalent.
\end{corollary}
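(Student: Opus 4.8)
The plan is to obtain the corollary by concatenating the BV equivalences of Theorem \ref{theo:WeakBVdiagram} with the classical identification of Proposition \ref{prop:gH_PCH}, promoted to the level of the full BV data. First I would check that $\beta\notin\{-1,0\}$ and $T=\tfrac1\beta(\star^{-1}-\id)$ lie in the invertible gravitational sector of Definition \ref{def:twisted_untwisted}, so that Theorem \ref{theo:WeakBVdiagram} applies. Indeed $\beta T=\star^{-1}-\id$ is not a scalar multiple of $\id$ (the internal Hodge star on $\bigwedge^2\VV$ is not), so $\beta T\neq\beta\id$ — the gravitational sector; since $\star^{-1}\neq 0$ the operator $\id+\beta T=\star^{-1}$ is invertible, so $\beta T\neq-\id$ — the invertibility sector; and $\beta\neq-1$ by hypothesis. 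Hence both edges of the diagram of Theorem \ref{theo:WeakBVdiagram} exist for these parameters and we obtain a BV equivalence $\FC_{\qGR}^{T,\beta}\simeq\FC_{gH}^{T,\beta}$ (e.g.\ through $\FC_{qgH}^{T,\beta}$), where $\FC_{\qGR}^{T,\beta}$ is the BV theory of Definition \ref{def:BV_qBF}.

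Next I would identify $\FC_{gH}^{T,\beta}$ with the standard BV theory of Palatini--Cartan--Holst gravity. For $T=\tfrac1\beta(\star^{-1}-\id)$ one has $(\id+\beta T)^{-1}=\star$, so the classical part of the BV action of Definition \ref{def:BV_gH} is $\tfrac12\int_M\killing{\star F_A}{e\wedge e}$ plus a multiple of $\int_M\eta(d_Ae,d_Ae)$; by Proposition \ref{prop:gH_PCH} this equals the classical Palatini--Cartan--Holst Lagrangian $\tfrac12\int_M\tr[T_\gamma(F_A)\wedge e\wedge e]$ (with $\gamma=1+\beta$) plus the exact term $\tfrac1\gamma\int_M d\,\eta(d_Ae,e)$. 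The remaining terms of $S_{gH}^{T,\beta}$, together with $\Omega_{gH}$, are exactly the standard Koszul--Tate and cotangent-lifted Chevalley--Eilenberg data for a theory on the fields $(A,e)$ whose symmetries are internal $\so(3,1)$ gauge transformations (ghost $a$) and diffeomorphisms (ghost $\xi$) — precisely the symmetries of $S_{PCH}$. I would then verify that Definition \ref{def:BV_gH}, for these parameters, reproduces term by term the standard BV extension $\FC_{PCH}$ of \cite{CattaneoSchiavina2018}, so that $S_{gH}^{T,\beta}=S_{PCH}^{\mathrm{BV}}+\tfrac1\gamma\int_M d\,\eta(d_Ae,e)$, with identical BV two-forms and identical cohomological vector fields in the bulk.

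To conclude: when $M$ is closed, $\int_M d\,\eta(d_Ae,e)=0$, so $\FC_{gH}^{T,\beta}$ and $\FC_{PCH}$ coincide as BV theories, and composing with the equivalence of the first paragraph yields $\FC_{\qGR}^{T,\beta}\simeq\FC_{PCH}$. When $\partial M\neq\emptyset$ the two BV actions differ only by the boundary functional $\tfrac1\gamma\int_{\partial M}\eta(d_Ae,e)$; since the notion of BV equivalence (Definition \ref{def:weakEQ}) constrains $S$ and $\varpi$ only through their cohomology classes, this is exactly the asserted equivalence ``up to a boundary term'', consistently with the classical statement of Remark \ref{rem:gHlikePCH}.

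The real content is the second step: showing that the ghost/antifield completion written in Definition \ref{def:BV_gH} is \emph{literally} (a symplectomorphic copy of) the standard BV extension of PCH, rather than merely a quasi-isomorphic one. This is where one must account for the well-known on-shell overlap between diffeomorphisms and a combination of internal gauge transformations — the same phenomenon that makes three-dimensional gravity essentially $BF$ theory, and that is handled for Palatini--Cartan gravity in \cite{CattaneoSchiavina2018}. If the two presentations differ one supplies the connecting quasi-isomorphism; everything else — the parameter bookkeeping, the identity $(\id+\beta T)^{-1}=\star$, and the harmlessness of the exact term — is routine given Theorem \ref{theo:WeakBVdiagram} and Proposition \ref{prop:gH_PCH}.
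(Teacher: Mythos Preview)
Your proposal is correct and follows essentially the same approach as the paper, which states the corollary as an immediate consequence of Theorem~\ref{theo:WeakBVdiagram} (combined with Proposition~\ref{prop:gH_PCH}) without further proof. Your explicit verification that the chosen parameters lie in the invertible gravitational sector with $\beta\neq -1$, together with the observation that $(\id+\beta T)^{-1}=\star$ so that $\FC_{gH}^{T,\beta}$ reproduces the standard BV extension of Palatini--Cartan--Holst up to the exact term $\tfrac{1}{\gamma}d\,\eta(d_Ae,e)$, simply spells out what the paper leaves implicit.
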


\section{Discussion and Outlook}
\label{sec:DiscussionOutlook}

In this work we have proposed a novel model for gauge theories and torsion modifications of gravity using a symmetry-broken $BF+\Lambda BB$ theory and without imposing external constraints. To this end we have utilised the trivial standard Lie algebra extensions presented in \cite{KathOlbrich2006, KathOlbrich2007} together with a deformation term. It was subsequently shown that the deformations of the resulting theory, dubbed quadratically extended GR (\qGR), can be divided into two sectors of interest, the topological sector for which \qGR is topological and the gravitational sector for which it has exactly $2$ local degrees of freedom and diffeomorphism symmetry. Through classical and later BV equivalences it was then shown that, for certain restrictions on the deformation and parameters, one can find significantly simpler, (weakly) BV equivalent, effective theories of \qGR. In particular it was shown that for specific deformation choices the quadratic torsion term that arises in these models directly relates \qGR to Palatini--Cartan--Holst theory (up to a boundary term that corresponds to the Nieh--Yan class). This result is stated in the sense of BV equivalences.

\subsection{\emph{BF} Theories of Gravity}
\label{subsec:BFgravity}

We want to conclude by briefly highlighting other known paths to obtaining established theories of gravity from $BF$ theory. In the following exposition we mostly follow \cite{Freidel_2012} and the sources presented therein for the introduction of (self-dual) Plebanski and MacDowell--Mansouri gravity. After each exposition we will highlight the relation and differences to the model of \qGR presented in this work.

\subsubsection{Self-dual Plebanski gravity}

The self-dual Plebanski formalism was first presented in \cite{Plebanski77} and operates on a $BF+BB$ type theory with Lie algebra $\gf = \su(2)$. Further fix the adjoint representation for the fields $A$ and $B$. A theorem due to Urbantke \cite{Urbantke84} states that if $B$ is non-degenerate then the Urbantke tensor $g_{\mu\nu}$, defined by
\begin{align}
    \sqrt{|g|} g_{\mu\nu} = \frac{1}{12} \epsilon_{ijk} \epsilon^{\alpha \beta \gamma \delta} B^i_{\mu \alpha} B^j_{\beta \gamma} B^k_{\delta \nu},
\end{align}
is a (pseudo-Riemannian) metric. 
The action of self-dual Plebanski is written as
\begin{align}
\label{eq:modelPlebanski}
    S[A, B, \Phi] \coloneqq \int_M B_i \wedge F(A)^i + \frac{1}{2} (\Lambda \delta_{ij} + \Phi_{ij}) B^i \wedge B^j.
\end{align}
The Lagrange-multiplier field $\Phi$ imposes the following condition on the field $B$:
\begin{align}
\label{eq:MetricityConstraint}
    B^i \wedge B^j = \frac{1}{3} \delta^{ij} B_k \wedge B^k.
\end{align}
Assuming non-degenerate $B$ one now uses the induced Urbantke metric $g_{\mu\nu}$ to define the associated tetrads $e$ such that $g = e^*\eta$ where $\eta$ is the Minkowski metric. One can then show (compare \cite[3]{Freidel_2012}) that $B$ is self-dual with respect to $g_{\mu\nu}$ and that, by rule of \eqref{eq:MetricityConstraint}, $B$ can be written in terms of the tetrads $e$ as either of
\begin{align}
    B^i_\pm = \sgn(\eta) (P_\pm)^i_{IJ} e^I \wedge e^J,
\end{align}
where $P_\pm$ denotes the projection on the first or second factor of the chiral decomposition $\so(3,\sgn(\eta)) \simeq \su(2) \oplus \su(2)$ respectively. This constraint is sometimes called the {simplicity constraint}. When working with the self-dual Plebanski action one usually fixes one of the two possible solutions.

Plugging this constraint into the Plebanski action one immediately recovers the form of Einstein--Cartan gravity in self-dual variables. Note however that if we work with $\su(2)$ and a Lorentzian metric, all fields are a priori complex. To obtain real gravity one needs to impose further {reality constraints} (compare \cite{Capovilla1991}). While classically unproblematic, they obstruct the quantisation of the theory (compare \cite[5]{Freidel_2012},\cite{Mikovi2006}).

Plebanski gravity is quite different from the family of \qGR theories considered here: Instead of an explicit symmetry breaking using additional terms in the fields $A$ and $B$, Plebanski gravity uses the algebraically complicated interplay of $\su(2)$ and $\so(3, \sgn(\eta))$ together with additional constraints (cf. \ref{eq:MetricityConstraint}) to reproduce a theory of gravity. Note that these ``external'' constraints, on which \qGR does not rely, become harder to handle in the quantisation procedure. However, inspecting \eqref{eq:modelPlebanski} it could be an interesting exercise to define a \qGR family over $\su(2)$ and find constraints similar to the simplicity and reality constraints to investigate possible direct parallels to Plebanski gravity.

\subsubsection{MacDowell--Mansouri Gravity}

An approach that differs from Plebanski's formulation of gravity can be found in that of MacDowell and Mansouri (MS), first introduced in \cite{MacDowellMansouri1977} without any relation to $BF$ theory. As proposed in \cite{Freidel_2012} we will here present an Euclidean version of MS gravity with positive cosmological constant to showcase its main features. Instead of splitting $\so(4)$ with a chiral decomposition as is done in self-dual Plebanski gravity, we make use of the split
\begin{align}
    \so(5) \simeq \so(4) \oplus \RR^4,
\end{align}
such that we can embed $\so(4)$ within the larger $\so(5)$. Note that in this context we should interpret the factor $\RR^4$ as the symmetry algebra for translations. Using this split we can write a connection $A$ on an $\so(5)$-bundle as $A = (\omega, e)$\footnote{Note that for simplicity we omit here some dimensionful constants introduced in \cite{Freidel_2012}.}. This split translates to the associated curvature as
\begin{align}
\label{eq:MScurvature}
    F(A) = (F(\omega) - e \wedge e, d_\omega e).
\end{align}
The action underlying MS gravity for $\Lambda > 0$ and $V \in \RR$ is
\begin{align}
\label{eq:ModelMMgravity}
    S[A,B] = \int_M B_{ij} \wedge F(A)^{ij} - \frac{1}{2} (\Lambda \delta_{\alpha\gamma} \delta_{\beta\delta} + \epsilon_{\alpha\beta\gamma\delta 4} \ V) B^{\alpha\beta} \wedge B^{\gamma\delta}.
\end{align}
As for usual $BF+\Lambda BB$ type theories, $B$ is an auxiliary field which one can eliminate, at least classically, to obtain the second order action:
\begin{align}
    S_{MM}[A] = \frac{1}{2(V^2 - \Lambda^2)} \int_M V \tr[F(A) \wedge \star F(A)] - \Lambda \tr[F(A) \wedge F(A)] +  \eta(d_\omega e, d_\omega e).
\end{align}
In the above action $\star$ denotes the Hodge star internal to $\so(4)$. The original formulation of MacDowell--Mansouri gravity in \cite{MacDowellMansouri1977} was not yet connected to $BF$ theories and contained only the first term. As noted in \cite[10]{Freidel_2012} one can make use of the explicit form of $A$ and $F(A)$ to bring the action into the following form:
\begin{align}
\label{eq:MacDowelMansouriTopologicalTerms}
    S[\omega, e] &= \frac{V}{(V^2 - \Lambda^2)}\int_M \left(\tr[T_{\frac{\Lambda}{V}} (F(\omega)) \wedge e \wedge e] - \frac{1}{2} \tr[e^4] \right) \\
    &+ C_1 E(\omega) + C_2 P(\omega) + C_3 NY(\omega, e).
\end{align}
Here we denote by $E(\omega), P(\omega)$ and $NY(\omega, e)$ the Euler, Pontryagin and Nieh--Yan class respectively. The factors $C_i$ are combinations of $V$ and $\Lambda$\footnote{The $C_i$ also depend on the choice of a length scale $l$, which we have chosen to drop in our discussion. See however \cite[10]{Freidel_2012}.}. Thus one obtains, up to topological contributions, the Palatini--Cartan action for gravity with a cosmological constant.

MacDowell--Mansouri gravity does not rely on constraints enforced by Lagrange multipliers. Together with its likeness to a $BF+BB$ type theory, which is then recast into the second order formalism in a form similar to topological Yang--Mills theory, one might expect that some quantisation properties carry over. Note however that MacDowell--Mansouri gravity is not a topological theory: The additional term introduced in \eqref{eq:ModelMMgravity} explicitly breaks the shift symmetry into a diffeomorphism symmetry. As \cite{Freidel_2012} put it: The term ``introduces a direction in the internal space''. The authors also mention that attempts to mitigate this problem by perturbatively expanding around the topological kinetic term has proven to be problematic since the first, topological order has different symmetries from all subsequent ones (cf. \cite{Rovelli2006, Cattaneo1998}).

We find that \qGR is quite similar to MacDowell--Mansouri gravity. Both essentially start with the idea of embedding the symmetries of theories of gravity, in our work $\so(3,1)$, in a larger Lie algebra. One can even see striking parallels between the respective curvature forms \eqref{eq:qBFcurvature} and \eqref{eq:MScurvature}. We have further shown in \ref{subsec:qBFboundary} and \ref{subsec:BVextensions} that the gravitational sector of \qGR explicitly breaks several shift symmetries into a diffeomorphism symmetry, just as MS gravity does. Furthermore, both theories retain their likeness to $BF+\Lambda BB$ type theories and thus allow for an interpretation of gravity as a deformation of a topological background theory. They also share the benefit of a constraint-free description of gravity, working solely with field equations in their primary fields.

The model of \qGR studied here does however display several features that MS gravity does not. First and foremost \qGR provides a genuine generalisation over MS gravity by allowing general deformation automorphisms $T$. Even with the restrictions on $T,$ and $\beta$ enforced to obtain BV equivalence to gH (cf. Theorem \ref{theo:WeakBVdiagram}) this still gives access to a vast family of theories. \qGR also avoids the (arbitrary) split of the Lie algebra performed in MS gravity, but it instead makes use of the (more natural) trivial quadratic extension of $\so(3,1)$ by a suitable orthogonal module $V$. While this \emph{does} introduce additional fields and symmetries, we have shown in this work that they can be eliminated within the BV formalism. Note as well, that the extension of $\so(3,1)$ chosen here is but the most trivial one, and other ones can be studied, see Appendix \ref{App:QuadLie}. This could be used to either formulate the most general action of gravity in first order variables (compare \cite{Rezende2009}) or rule out certain terms

All in all one could see the family of \qGR theories as a generalisation of MacDowell--Mansouri gravity. It allows for more general ``potential'' terms (compare \eqref{eq:ModelMMgravity}) while yielding interesting results for the role of torsion and the deformation of $BF+BB$ type theories. In a sense it combines the versatility of Krasnov's approach \cite{Krasnov2018} to modified gravity with the algebraically simple and constraint-free approach of MacDowell--Mansouri gravity. It would be particularly interesting to investigate which generalisations of the \qGR reproduce all topological terms found in \eqref{eq:MacDowelMansouriTopologicalTerms} such that it yields the most general form of gravity in tetrad variables (compare \cite{Rezende2009}).

\appendix
\section{Calculations for Theorem {\ref{theo:ConstraintAlgebraQBF}}}
\label{App:CalcBdyConstr}

Here we collect the calculations needed for the proof of Theorem \ref{theo:ConstraintAlgebraQBF}. We begin by stating the boundary $2$-form
\begin{align}
    \varpi^\partial_{qeGR} = \int_{\partial M} \killing{\delta B}{\delta A} + \eta(\delta \tau, \delta e) + \killing{\delta S}{\delta c}.
\end{align}
Now calculate the action of $\delta$ on the six constraint functions and derive the respective hamiltonian vector fields with respect to $\varpi^\partial_{qeGR}$.
\begin{itemize}
    \item For $I_\epsilon$
    \begin{align}
        \delta I_\epsilon = \int_{\partial M} \killing{\delta c}{ e \wedge \epsilon} + \killing{c}{\delta e \wedge \epsilon} + \eta([\delta A, \tau] - d_A \delta \tau, \epsilon).
    \end{align}
    The non-trivial components of the associated hamiltonian vector field $\mathbb{I}_\epsilon$ are:
    \begin{align}
        (\mathbb{I}_\epsilon)_B &= - \tau \wedge \epsilon, \qquad (\mathbb{I}_\epsilon)_S = e \wedge \epsilon, \qquad (\mathbb{I}_\epsilon)_\tau = - [c, \epsilon], \qquad (\mathbb{I}_\epsilon)_e = d_A \epsilon.
    \end{align}

    \item For $M_\zeta$
    \begin{align}
        \delta M_\zeta = \int_{\partial M} \killing{-d_A \delta A + (\id + \beta T) \delta c}{ \zeta}.
    \end{align}
    The non-trivial components of the associated hamiltonian vector field $\mathbb{M}_\zeta$ are:
    \begin{align}
        (\mathbb{M}_\zeta)_B &= d_A \zeta, \qquad (\mathbb{M}_\zeta)_S = (\id + \beta T) \zeta.
    \end{align}

    \item For $L_\rho$
    \begin{align}
        \delta L_\rho = \int_{\partial M} \eta([\delta A, e] - d_A \delta e + (1+\beta) \delta \tau, \rho).
    \end{align}
    The non-trivial components of the associated hamiltonian vector field $\mathbb{L}_\rho$ are:
    \begin{align}
        (\mathbb{L}_\rho)_B &= - \rho \wedge e, \qquad (\mathbb{L}_\rho)_e = (1+\beta) \rho, \qquad (\mathbb{L}_\rho)_\tau = d_A \rho.
    \end{align}

    \item For $K_\alpha$
    \begin{align}
        \delta K_\alpha = \int_{\partial M} \killing{(\id + \beta T)\delta B + [\delta A, S] - d_A \delta S + \delta e \wedge e}{\alpha}.
    \end{align}
    The non-trivial components of the associated hamiltonian vector field $\mathbb{K}_\alpha$ are:
    \begin{align}
        (\mathbb{K}_\alpha)_A &= (\id + \beta T) \alpha, \qquad (\mathbb{K}_\alpha)_B = - [\alpha, S], \qquad (\mathbb{K}_\alpha)_c = d_A \alpha, \qquad (\mathbb{K}_\alpha)_\tau = - [\alpha, e].
    \end{align}

    \item For $H_\theta$
    \begin{align}
        \delta H_\theta = \int_{\partial M} \killing{[\delta A, B] - d_A \delta B - \delta e \wedge \tau + e \wedge \delta \tau - [\delta c, S] + [c, \delta S]}{\theta}.
    \end{align}
    The non-trivial components of the associated hamiltonian vector field $\mathbb{H}_\theta$ are:
    \begin{align}
        (\mathbb{H}_\theta)_A &= d_A \theta, \qquad (\mathbb{H}_\theta)_B = [\theta, B], \qquad (\mathbb{H}_\theta)_e = [\theta, e],\\
        (\mathbb{H}_\theta)_\tau &= [\theta, \tau], \qquad (\mathbb{H}_\theta)_S = [\theta, S], \qquad (\mathbb{H}_\theta)_c = [\theta, c].
    \end{align}
\end{itemize}
Using the above one can calculate the Poisson-brackets between the generating function. For two $0$-forms $f, g$ the bracket takes the following form: $\{f, g\} = \imath_{X_f} \delta B$. Here $X_f$ denotes the hamiltonian vector field associated to $f$. We restrict here to the three  non-trivial brackets that might result in the constraint algebra not defining a Poisson subalgebra, namely $\{K_\alpha, L_\rho\}$, $\{K_\alpha, I_\epsilon\}$ and $\{I_\epsilon, L_\rho\}$.
\begin{itemize}
    \item For $\{K_\alpha, L_\rho\}$:
    \begin{align}
        \{K_\alpha, L_\rho\} &= \imath_{\mathbb{K}_\alpha} \int_{\partial M} \eta([\delta A, e] - d_A \delta e + (1+\beta) \delta \tau, \rho) \\
        &= \int_{\partial M} \eta([(\id + \beta T) \alpha, e]- (1+\beta) [\alpha, e], \rho) = \int_{\partial M} \eta([\beta (T - \id) \alpha, e], \rho). \\
    \end{align}
    
    \item For $\{K_\alpha, I_\epsilon\}$:
    \begin{align}
        \{K_\alpha, I_\epsilon\} &= \imath_{\mathbb{K}_\alpha} \int_{\partial M} \killing{\delta c}{ e \wedge \epsilon} + \killing{c}{\delta e \wedge \epsilon} + \eta([\delta A, \tau] - d_A \delta \tau, \epsilon) \\
        &= \int_{\partial M} \killing{d_A \alpha}{ e \wedge \epsilon} + \eta([(\id + \beta T) \alpha, \tau] + d_A [\alpha, e], \epsilon) \\
        &\overset{\partial}{=} L_{[\alpha, \epsilon]} - \int_{\partial M} \eta([\beta (\id - T) \alpha, \tau] , \epsilon). \\
    \end{align}
    
    \item For $\{I_\epsilon, L_\rho\}$:
    \begin{align}
        \{I_\epsilon, L_\rho\} &= \imath_{\mathbb{I}_\epsilon} \int_{\partial M} \eta([\delta A, e] - d_A \delta e + (1+\beta) \delta \tau, \rho) \\
        &= \int_{\partial M} \eta(- [F_A, \epsilon] + (1+\beta) [c, \epsilon], \rho) = \int_{\partial M} \killing{F_A + (1+\beta)c}{\epsilon \wedge \rho} \\
        &= M_{\epsilon \wedge \rho} + \int_{\partial M} \killing{\beta (\id - T)c}{\epsilon \wedge \rho}
    \end{align}
\end{itemize}

\section{Quadratic Lie Algebra Extensions}
\label{App:QuadLie}

Let $\gf$ be a finite-dimensional Lie algebra and $(\rho, V, \inp_V)$ an orthogonal $\gf$-module. Think of $V$ as an abelian metric Lie algebra. Combining the wedge product $\wedge$ on Chevalley--Eilenberg cochains $C^\bullet_{CE}(\gf; V)$ of $\gf$ and the non-degenerate symmetric pairing $\pair{\cdot}{\cdot}_V$ on $V$ one obtains a bilinear multiplication map
\begin{align}
    \langle \cdot \wedge \cdot \rangle \colon C_{CE}^i(\gf; V) \times C_{CE}^j(\gf; V) \atob{\wedge} C_{CE}^{i+j}(\gf; V \otimes V) &\atob{\pair{\cdot}{\cdot}_V} C_{CE}^{i+j}(\gf),\\
      (\alpha, \beta) &\lmap \langle \alpha \wedge \beta \rangle.
\end{align}

\begin{definition}[Quadratic Cochains and Cocycles]
  Let $k \in 2\NN_0$. Define the {group of quadratic $(p-1)$-cochains}\index{Quadratic cohomology!cochains} as
  \begin{equation}
    C_Q^{p-1}(\gf; V) \coloneqq C_{CE}^{p-1}(\gf; V) \oplus C_{CE}^{2p-2}(\gf; V).
  \end{equation}
  The group structure is induced by the following multiplication map:
  \begin{align}
    * \colon C_Q^{p-1}(\gf; V) \times C_Q^{p-1}(\gf; V) &\lra C_Q^{p-1}(\gf; V), \\
    (\alpha_1, \beta_1) \times (\alpha_2, \beta_2) &\lmap (\alpha_1, \beta_1) * (\alpha_2, \beta_2) \coloneqq (\alpha_1 + \alpha_2, \beta_1 + \beta_2 + \frac{1}{2}\langle \alpha_1 \wedge \alpha_2 \rangle).
  \end{align}
  In the same manner we define the {set of quadratic $p$-cocycles}\index{Quadratic cohomology!cocycles} as
  \begin{equation}
    \ZC_Q^{p}(\gf; V) \coloneqq \left\{ (\alpha, \beta) \in C_{CE}^p(\gf; V) \oplus C_{CE}^{2p-1}(\gf) \bar d\alpha = 0, d\beta = \frac{1}{2}\langle \alpha \wedge \alpha \rangle \right\}.
  \end{equation}
\end{definition}

\begin{lemma}[{\cite[Lemma 2.1]{KathOlbrich2006}}]
\label{lem:actionLemma}
  Let $k \in 2\NN_0$. Let further $(\alpha, \beta) \in C_{CE}^k(\gf; V) \oplus C_{CE}^{2k-1}(\gf)$ and $(\gamma, \delta) \in C_{CE}^{k-1}(\gf; V) \oplus C_{CE}^{2k-2}(\gf)$.
  One can define a right action of $C_{Q}^{k-1}(\gf; V)$ on $C_{CE}^k(\gf;V) \oplus C_{CE}^{2k-1}(\gf)$ acting as the identity on $k$-cocycles $\ZC^k_Q(\gf; V)$ by
  \begin{equation}
    (\alpha, \beta) (\gamma, \delta) \coloneqq \left( \alpha + d\gamma, \beta + d\delta + \lieprod{\left( \alpha + \frac{1}{2} d\gamma \right)}{\gamma} \right).
  \end{equation}
\end{lemma}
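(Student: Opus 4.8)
The plan is to verify three things in turn: that the assignment $(\alpha,\beta)(\gamma,\delta) := \bigl(\alpha + d\gamma,\ \beta + d\delta + \lieprod{\alpha + \tfrac12 d\gamma}{\gamma}\bigr)$ lands in $C_{CE}^k(\gf;V)\oplus C_{CE}^{2k-1}(\gf)$; that it is a right action of the group $(C_Q^{k-1}(\gf;V),*)$; and that this action restricts to the subset $\ZC_Q^k(\gf;V)$ of quadratic $k$-cocycles — the last being the substantive statement behind ``acting $\ldots$ on $k$-cocycles'', and exactly what is needed to form the quadratic cohomology $\ZC_Q^k(\gf;V)/C_Q^{k-1}(\gf;V)$. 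All three reduce to sign bookkeeping built on two structural facts, inherited from the cup product on $C_{CE}^\bullet(\gf; V\otimes V)$ and the symmetry of $\pair{\cdot}{\cdot}_V$: the pairing $\lieprod{\cdot}{\cdot}\colon C_{CE}^i(\gf;V)\times C_{CE}^j(\gf;V)\to C_{CE}^{i+j}(\gf)$ is graded symmetric, $\lieprod{\mu}{\nu} = (-1)^{ij}\lieprod{\nu}{\mu}$, and $d$ is a graded derivation over it, $d\lieprod{\mu}{\nu} = \lieprod{d\mu}{\nu} + (-1)^i\lieprod{\mu}{d\nu}$. The parity hypothesis $k\in 2\NN_0$ is used throughout to make signs collapse.

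Well-definedness is an immediate degree count: $\gamma\in C_{CE}^{k-1}(\gf;V)$ gives $d\gamma\in C_{CE}^{k}(\gf;V)$, so $\alpha+d\gamma\in C_{CE}^k(\gf;V)$; $\delta\in C_{CE}^{2k-2}(\gf)$ gives $d\delta\in C_{CE}^{2k-1}(\gf)$; and $\lieprod{\alpha+\tfrac12 d\gamma}{\gamma}$ has degree $k+(k-1) = 2k-1$. For the action axioms, the neutral element of $(C_Q^{k-1}(\gf;V),*)$ is $(0,0)$, and $(\alpha,\beta)(0,0) = (\alpha,\ \beta+\lieprod{\alpha}{0}) = (\alpha,\beta)$. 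Then I would check $\bigl((\alpha,\beta)(\gamma_1,\delta_1)\bigr)(\gamma_2,\delta_2) = (\alpha,\beta)\bigl((\gamma_1,\delta_1)*(\gamma_2,\delta_2)\bigr)$, where $(\gamma_1,\delta_1)*(\gamma_2,\delta_2) = \bigl(\gamma_1+\gamma_2,\ \delta_1+\delta_2+\tfrac12\lieprod{\gamma_1}{\gamma_2}\bigr)$. The first components agree on the nose ($\alpha+d\gamma_1+d\gamma_2$). Expanding the second components, the left side is $\beta+d\delta_1+d\delta_2+\lieprod{\alpha+\tfrac12 d\gamma_1}{\gamma_1}+\lieprod{\alpha+d\gamma_1+\tfrac12 d\gamma_2}{\gamma_2}$ and the right side is $\beta+d\delta_1+d\delta_2+\tfrac12 d\lieprod{\gamma_1}{\gamma_2}+\lieprod{\alpha+\tfrac12 d(\gamma_1+\gamma_2)}{\gamma_1+\gamma_2}$; using the derivation rule and $(-1)^{k-1}=-1$ one has $d\lieprod{\gamma_1}{\gamma_2} = \lieprod{d\gamma_1}{\gamma_2} - \lieprod{\gamma_1}{d\gamma_2}$, and graded symmetry (with $k(k-1)$ even) gives $\lieprod{\gamma_1}{d\gamma_2} = \lieprod{d\gamma_2}{\gamma_1}$, after which the ``cross'' contributions $\pm\tfrac12\lieprod{d\gamma_2}{\gamma_1}$ cancel and both sides collapse to $\beta+d\delta_1+d\delta_2+\lieprod{\alpha}{\gamma_1}+\lieprod{\alpha}{\gamma_2}+\tfrac12\lieprod{d\gamma_1}{\gamma_1}+\tfrac12\lieprod{d\gamma_2}{\gamma_2}+\lieprod{d\gamma_1}{\gamma_2}$.

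Finally, suppose $(\alpha,\beta)\in\ZC_Q^k(\gf;V)$, i.e. $d\alpha=0$ and $d\beta=\tfrac12\lieprod{\alpha}{\alpha}$. Then $d(\alpha+d\gamma)=d\alpha=0$, and applying $d$ to $\beta+d\delta+\lieprod{\alpha+\tfrac12 d\gamma}{\gamma}$, using $d^2=0$ and $d\alpha=0$, yields $d\beta + \lieprod{\alpha}{d\gamma} + \tfrac12\lieprod{d\gamma}{d\gamma} = \tfrac12\lieprod{\alpha}{\alpha} + \lieprod{\alpha}{d\gamma} + \tfrac12\lieprod{d\gamma}{d\gamma}$, which by graded symmetry ($\lieprod{\alpha}{d\gamma}=\lieprod{d\gamma}{\alpha}$, since $k$ is even) equals $\tfrac12\lieprod{\alpha+d\gamma}{\alpha+d\gamma}$; hence $(\alpha,\beta)(\gamma,\delta)\in\ZC_Q^k(\gf;V)$ and the action restricts to the $k$-cocycles (so the induced action on the quotient defining $H_Q^k$ is tautologically trivial). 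I expect the only real obstacle to be the disciplined sign bookkeeping: one must apply the Koszul rule consistently to the composite pairing $\lieprod{\cdot}{\cdot}$ and keep track of the parities of $k$, $k-1$, $2k-1$ at each step; there is no conceptual difficulty, and the computation is precisely the one underlying Lemma~2.1 of \cite{KathOlbrich2006}.
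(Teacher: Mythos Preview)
Your proof is correct and complete: the degree count, the verification of the right-action axiom via the graded derivation and graded symmetry identities for $\lieprod{\cdot}{\cdot}$, and the check that cocycles are preserved all go through as you outline (and your reading of ``acting as the identity on $k$-cocycles'' as ``preserving the set $\ZC_Q^k$'' is the intended one). The paper itself does not supply a proof---it merely cites \cite[Lemma 2.1]{KathOlbrich2006}---so there is nothing to compare against beyond noting that your argument is the standard direct verification.
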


\begin{definition}
  Let again $k \in 2\NN_0$. Since $\ZC^{2k-2}(\gf)$ acts trivially on $C_{CE}^k(\gf; V) \oplus C_{CE}^{2k-1}(\gf)$ using the action defined in Lemma \ref{lem:actionLemma}, it amounts to an action of the {group of quadratic $k$-coboundaries}\index{Quadratic cohomology!coboundaries}
  \begin{equation}
    \BC_Q^k(\gf; V) \coloneqq (C_{CE}^{k-1}(\gf; V) \oplus \BC_{CE}^{2k-1}(\gf), *),
  \end{equation}
  equipped with the following multiplication map:
  \begin{align}
    * \colon \BC_Q^k(\gf; V) &\times \BC_Q^k(\gf; V) \lra \BC_Q^k(\gf; V), \\
    (\alpha_1, \beta_1) &\times (\alpha_2, \beta_2) \lmap (\alpha_1, \beta_1) * (\alpha_2, \beta_2) \coloneqq \left(\alpha_1 + \alpha_2, \beta_1 + \beta_2 + \frac{1}{2}d \lieprod{\alpha_1}{\alpha_2} \right).
  \end{align}
\end{definition}

\begin{definition}[Quadratic Cohomology]
  Let $k \in 2\NN_0$. Define the {$k$-th quadratic cohomology}\index{Quadratic cohomology} of $\gf$ as the quotient space
  \begin{equation}
    \HH_Q^k(\gf;V) \coloneqq \ZC_Q^k(\gf; V) / \BC_Q^k(\gf; V),
  \end{equation}
  where the quotient is understood as the quotient space defined by the action of $\BC_Q^k(\gf; V)$ on $\ZC_Q^k(\gf; V)$. We denote a cohomology class as $[\alpha, \beta] \in \HH_Q^k(\gf;V)$.
\end{definition}

We will now describe a procedure pioneered in \cite{KathOlbrich2006} to extend any pair of a Lie algebra and an orthogonal module on it to a metric Lie algebra. For this we closely follow \cite[chapter 3]{KathOlbrich2006}.

\begin{remark}
  The above notion of a quadratic extension coincides with the categorial notion of an algebra extension where an extension of an algebra $\mathfrak{a} \in \mathrm{Alg}$ over a field $\kk$ is given by an epimorphism $\widetilde{\mathfrak{a}} \atob{p} \mathfrak{a}$. If the kernel of $p$ exists, as it does for most cases, this induces a short exact sequence
  $$ \ker(p) \lra \widetilde{\mathfrak{a}} \atob{p} \mathfrak{a}. $$
\end{remark}

The following extension procedure for Lie algebras makes use of their quadratic cohomology and provides a generalisation to the trivial standard extensions introduced in Definition \ref{def:StandardExtension_triv}.

\begin{definition}[Standard Extensions]
\label{def:StandardExtension_ntriv}
  Let $(\alpha, \beta) \in \ZC^2_Q(\gf;V)$ and define the vector space $\df \coloneqq \gf^* \oplus V \oplus \gf$. We define an inner product $\inp_\df \colon \df \times \df \lra \kk$ by
  \begin{equation}
    (z_1 + v_1 + g_1) \times (z_2 + v_2 + g_2) \lmap \pair{v_1}{v_2}_V + z_1(g_2) + z_2(g_1).
  \end{equation}
  We further denote by $[\cdot, \cdot]_\df \colon \df \times \df \lra \df$ the unique antisymmetric bilinear assignment such that
  \begin{enumerate}
    \item $\pair{[X,Y]_\df}{Z}_\df = \pair{X}{[Z,Y]_\df}_\df$ for all $X,Y,Z \in \df$,
    \item $[\df, \gf^*]_\df \subset \gf^*$, $[\gf^*, \gf^*]_\df = 0$, $[V, V]_\df \subset \gf^*$,
    \item $[g_1, g_2]_\df = \beta(g_1, g_2, \cdot) + \alpha(g_1, g_2) + [g_1, g_2]_\gf$ for all $g_1, g_2 \in \gf$,
    \item $\pair{[g, v_1]_\df}{v_2}_\df = \pair{\rho(g) v_1}{v_2}_\df$ for all $g \in \gf$ and $v_1, v_2 \in V$.
  \end{enumerate}
  In the above equations the canonical inclusions of $\gf^*, \gf$ and $V$ into $\df$ were omitted to clean up the equations. We call the standard extension arising from the metric Lie algebra $\df_{\alpha, \beta}(\gf, V, \rho) \coloneqq (\df, [\cdot, \cdot]_\df, \inp_\df)$ the {standard extension by $(\alpha, \beta)$ of $\gf$ by $V$}\index{Quadratic extension!standard extension} and denote it by $(\df_{\alpha, \beta}(\gf, V, \rho), \gf^*, i, p)$ or $\df_{\alpha, \beta}(\gf, V, \rho)$ for short.
  Note that $i \colon V \lra V \oplus \gf$ and $p \colon V \oplus \gf \lra \gf$ are given by the canonical injection and projection. For $(\alpha, \beta) = (0,0)$, call the arising standard extension the {trivial standard extension}\index{Quadratic extension!trivial quadratic extension}.
\end{definition}The definition of standard extensions is justified by the following result:

\begin{proposition}[{\cite[Proposition 3.1, 3.2]{KathOlbrich2006}}]
  The antisymmetric bilinear assignment $[\cdot, \cdot]_\df$ constructed in Definition \ref{def:StandardExtension_triv} is unique. For $(\alpha, \beta) \in \ZC^2_Q(\gf;V)$ the tuple $\df_{\alpha, \beta}(\gf, V, \rho) \coloneqq (\df, [\cdot, \cdot]_\df, \inp_\df)$ defines a metric Lie algebra and a quadratic extension of $\gf$ by $(\rho, V, \inp_V)$.
\end{proposition}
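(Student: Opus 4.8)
The plan is to follow the argument of \cite[Chapter 3]{KathOlbrich2006} in three movements: uniqueness of the bracket, existence (the Jacobi identity), and then the structural claims. For uniqueness, the properties (1)--(4) of Definition \ref{def:StandardExtension_ntriv}, together with the explicit block form of $\inp_\df$ on $\df = \gf^* \oplus V \oplus \gf$, determine every bracket among the summands, in the following order. Property (3) gives $[\cdot,\cdot]_\df$ on $\gf\times\gf$ outright, and (2) gives $[\gf^*,\gf^*]_\df=0$ and $[\df,\gf^*]_\df\subset\gf^*$. Invariance (1) applied to $v\in V$, $z\in\gf^*$, $g\in\gf$ reads $\pair{[v,z]_\df}{g}_\df=\pair{v}{[g,z]_\df}_\df$; since $[g,z]_\df\in\gf^*$, the right side vanishes, and non-degeneracy of the natural pairing $\gf^*\times\gf\to\kk$ forces $[\gf^*,V]_\df=0$. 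Applying (1) to $g,g'\in\gf$ and $z\in\gf^*$ identifies $[g,z]_\df$ with the (co)adjoint action, using only $[\cdot,\cdot]_\gf$; and applying (1) to $g\in\gf$, $v\in V$, $z\in\gf^*$ (now that $[\gf^*,V]_\df=0$) shows $[g,v]_\df$ has no $\gf$-component, so $[g,v]_\df=\rho(g)v+\mu(g,v)$ with $\mu(g,v)\in\gf^*$. The $V$-component $\rho(g)v$ is fixed by (4) via non-degeneracy of $\inp_V$; pairing $[g_1,g_2]_\df$ against $v\in V$ through (1) and (3) identifies $\mu(g,v)$ with the contraction $\inp_V(\alpha(g,\cdot),v)$ (so $\mu\equiv 0$ in the trivial case); and pairing $[v_1,v_2]_\df\in\gf^*$ against $\gf$ through (1) expresses it in terms of $\rho$ and $\inp_V$. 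Hence the assignment is unique.

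For existence one takes these explicit formulas as the \emph{definition} of $[\cdot,\cdot]_\df$; antisymmetry is manifest, so only the Jacobi identity remains. I would verify it by splitting the three arguments according to the summands $\gf^*,V,\gf$: every case with two or more arguments in $\gf^*$ is trivial, and the remaining cases reduce, componentwise, to (i) the Jacobi identity of $\gf$ together with $d\beta=\tfrac12\lieprod{\alpha}{\alpha}$ (three $\gf$-arguments, $\gf^*$-component), (ii) the cocycle condition $d\alpha=0$ (the $V$-components of the $(\gf,\gf,\gf)$ and $(\gf,\gf,V)$ cases), (iii) the representation property $\rho([g_1,g_2]_\gf)=[\rho(g_1),\rho(g_2)]$ (the $V$-component of the $(\gf,\gf,V)$ case), and (iv) invariance of $\inp_V$ (the $(\gf,V,V)$ and $(\gf,V,\gf^*)$ cases). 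For the trivial standard extension of Definition \ref{def:StandardExtension_triv} one has $\alpha=\beta=0$ and this collapses to the bare representation and invariance identities.

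The structural claims are then short. Symmetry of $\inp_\df$ is immediate from its formula, invariance is exactly property (1), and non-degeneracy follows by reading off the block form (if $z+v+g$ pairs trivially with all of $\df$, testing against $V$ kills $v$ by non-degeneracy of $\inp_V$, against $\gf$ kills $z$, and against $\gf^*$ kills $g$); non-degeneracy and invariance make the induced map $\df\to\df^*$ a Lie algebra isomorphism, so $(\df,\inp_\df)$ is a metric Lie algebra. Next, $\gf^*$ is isotropic since $\inp_\df$ vanishes on it, and an ideal by property (2). The canonical maps $i\colon V\hookrightarrow V\oplus\gf\cong\df/\gf^*$ and $p\colon\df/\gf^*\to\gf$ are Lie homomorphisms: $[V,V]_\df\subset\gf^*$ makes $i$ one into the abelian quotient, and the $\gf$-component of $[v_1+g_1,v_2+g_2]_\df$ is $[g_1,g_2]_\gf$, so $p$ is one too; exactness of $0\to V\to\df/\gf^*\to\gf\to 0$ is clear, and for a lift $v'+L$ of $L\in\gf$ one has $[v'+L,v]_\df\equiv\rho(L)v\pmod{\gf^*}$, which is the consistency with $\rho$. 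Finally $(\gf^*)^\perp=\gf^*\oplus V$ (pairing $z_0+v_0+g_0$ with all of $\gf^*$ forces $g_0=0$), so $(\gf^*)^\perp/\gf^*=V=\operatorname{im}(i)$, and since $\inp_\df$ restricts to $\inp_V$ on $V$, the map $i$ is an isometry onto it; this exhausts the axioms of a quadratic extension.

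The main obstacle is the Jacobi identity in the general case $\df_{\alpha,\beta}$: it is precisely there that the quadratic cocycle conditions $d\alpha=0$ and $d\beta=\tfrac12\lieprod{\alpha}{\alpha}$ are needed (and, conversely, are forced), so the delicate point is to organise the case analysis and the sign bookkeeping so that the Chevalley--Eilenberg differential of $\alpha$ and $\beta$ appears cleanly. For the trivial standard extension used throughout the paper this difficulty evaporates, the verification reducing to the elementary identities for $\rho$ and $\inp_V$.
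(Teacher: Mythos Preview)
The paper does not give its own proof of this proposition: both here in the appendix and in the earlier Lemma in Section~\ref{subsec:QuadLie} the statement is simply attributed to \cite[Proposition~3.1,~3.2]{KathOlbrich2006} without argument. Your sketch is therefore not competing with a proof in the paper but rather supplying one where the paper defers to the literature, and it does so by following the strategy of the cited source (uniqueness from properties (1)--(4) and non-degeneracy of $\inp_\df$, Jacobi from the quadratic cocycle conditions $d\alpha=0$ and $d\beta=\tfrac12\lieprod{\alpha}{\alpha}$, and then the direct verification of the metric and quadratic-extension axioms). This is the expected route and your outline is sound.
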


The following coherence result shows that, up to isomorphisms, we only need to consider standard extensions:

\begin{theorem}[{\cite[Theorem 3.1]{KathOlbrich2006}}]
  The equivalence classes of quadratic extensions of a Lie algebra $\gf$ by an orthogonal $\gf$-module $(\rho, V, \inp_V)$ are in one-to-one correspondence quadratic $2$-cohomology classes in $\HH_Q^2(\gf; V)$.
\end{theorem}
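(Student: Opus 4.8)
The plan is to prove this classification by exhibiting the correspondence in both directions: every quadratic extension of $\gf$ by $(\rho,V,\inp_V)$ is equivalent to a \emph{standard} extension $\df_{\alpha,\beta}(\gf,V,\rho)$ for some cocycle $(\alpha,\beta)\in\ZC^2_Q(\gf;V)$ (this is essentially Definition~\ref{def:StandardExtension_ntriv} read backwards), and two standard extensions are equivalent exactly when their cocycles lie in the same orbit of the $\BC^2_Q(\gf;V)$-action of Lemma~\ref{lem:actionLemma}, i.e.\ define the same class in $\HH^2_Q(\gf;V)$. Together these give a well-defined bijection $\HH^2_Q(\gf;V)\to\{\text{equivalence classes of quadratic extensions}\}$, $[\alpha,\beta]\mapsto[\df_{\alpha,\beta}(\gf,V,\rho)]$.

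For the first direction I would start from an arbitrary quadratic extension $(\mm,\ii,i,p)$. Since $\inp|_\ii=0$ we have $\ii\subseteq\ii^\perp$, and non-degeneracy of $\inp$ gives a perfect pairing between $\ii$ and $\mm/\ii^\perp\cong\gf$ (via $p$), hence an identification $\ii\cong\gf^*$. The hypothesis $\im(i)=\ii^\perp/\ii$ identifies $\ii^\perp/\ii\cong V$ isometrically via $i$, and since $V$ is abelian the short exact sequence forces $[\ii^\perp,\ii^\perp]\subseteq\ii$. Choosing isotropic linear lifts of $V$ into $\ii^\perp$ and of $\gf$ into $\mm$ — these exist because $\ii$ is precisely the radical of $\inp|_{\ii^\perp}$ and the induced form on $\ii^\perp/\ii$ is non-degenerate — yields a vector-space identification $\mm\cong\gf^*\oplus V\oplus\gf$. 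Invariance of $\inp$, the fact that $\ii=\gf^*$ is an isotropic ideal, that $p$ is a Lie homomorphism, and consistency with $\rho$ then pin the bracket down to the shape of Definition~\ref{def:StandardExtension_ntriv}: the only freedom is the $V$-component $\alpha\in C^2_{CE}(\gf;V)$ of $[\,\cdot\,,\cdot\,]_\df-[\,\cdot\,,\cdot\,]_\gf$ and its $\gf^*$-component, which by invariance is a totally antisymmetric scalar $3$-cochain $\beta\in C^3_{CE}(\gf)$. Unwinding the Jacobi identity for $[\,\cdot\,,\cdot\,]_\df$ gives exactly $d\alpha=0$ and $d\beta=\tfrac12\lieprod{\alpha}{\alpha}$, i.e.\ $(\alpha,\beta)\in\ZC^2_Q(\gf;V)$, and the resulting map $\mm\to\df_{\alpha,\beta}(\gf,V,\rho)$ is an equivalence of quadratic extensions.

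For the second direction, observe that an equivalence of quadratic extensions is an isometric Lie algebra isomorphism identifying the isotropic ideals and inducing the identity on the defining short exact sequence; in the coordinates above these constraints force it to restrict to the identity on the $\gf^*$-summand, to differ from the identity on $V$ only by a $\gf^*$-valued correction that does not affect $(\alpha,\beta)$, and to send $g\mapsto g+\gamma(g)+\delta''(g)$ with $\gamma\in C^1_{CE}(\gf;V)$ and $\delta''\colon\gf\to\gf^*$, the symmetric part of $\delta''$ being fixed by isotropy of the image and its antisymmetric part being a free $2$-cochain $\delta\in C^2_{CE}(\gf)$. Requiring this map to intertwine the two brackets translates, after expanding $[\,g_1+\gamma(g_1)+\delta''(g_1),\,g_2+\gamma(g_2)+\delta''(g_2)\,]$ and collecting $V$- and $\gf^*$-components, into precisely $(\alpha_2,\beta_2)=(\alpha_1,\beta_1)(\gamma,\delta)$ in the notation of Lemma~\ref{lem:actionLemma}, with the quadratic correction $\lieprod{(\alpha_1+\tfrac12 d\gamma)}{\gamma}$ arising from the cross term. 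Conversely, for any $(\gamma,\delta)\in\BC^2_Q(\gf;V)$ this same formula defines a genuine isomorphism of quadratic extensions $\df_{\alpha,\beta}\to\df_{(\alpha,\beta)(\gamma,\delta)}$, so the map on $\HH^2_Q(\gf;V)$ is well-defined and injective; surjectivity is the first direction. This establishes the one-to-one correspondence.

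The hard part will be the bookkeeping in the second direction: the quadratic cochain group $C^1_Q(\gf;V)$ carries the twisted multiplication with the $\tfrac12\lieprod{\cdot}{\cdot}$ correction, and one has to verify carefully that composing a change of isotropic splitting with the $\df$-bracket reproduces \emph{exactly} the action of Lemma~\ref{lem:actionLemma} on $\ZC^2_Q(\gf;V)$ — including that $\ZC^{2}_{CE}(\gf)\subseteq C^2_{CE}(\gf)$ acts trivially, so that the action descends to $\BC^2_Q(\gf;V)$. A secondary technical point is that isotropic lifts of $V$ and of $\gf$ into $\mm$ can be chosen simultaneously and that the residual freedom in those choices is exhausted by $\BC^2_Q(\gf;V)$; both rely on the hypotheses $\inp|_\ii=0$, $\im(i)=\ii^\perp/\ii$, and $i$ being an isometry.
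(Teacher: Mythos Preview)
The paper does not prove this theorem at all: it is simply quoted from \cite[Theorem 3.1]{KathOlbrich2006} as background, with no argument supplied. There is therefore nothing in the paper's own text to compare your proposal against.

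That said, your outline is essentially the argument of the original reference: reduce an arbitrary quadratic extension to standard form by choosing isotropic lifts (using $\ii\cong\gf^*$ via the pairing and $\ii^\perp/\ii\cong V$ via $i$), read off $(\alpha,\beta)$ from the failure of the lift to be a Lie homomorphism, and check that the Jacobi identity is equivalent to the quadratic cocycle condition; then show that changes of isotropic splitting are parametrised by $(\gamma,\delta)$ and act on cocycles exactly by the formula of Lemma~\ref{lem:actionLemma}. Your sketch is sound, and you have correctly flagged the two places where care is needed: that the quadratic correction $\lieprod{(\alpha+\tfrac12 d\gamma)}{\gamma}$ really does come out of the cross term in the bracket computation, and that the residual freedom in choosing isotropic lifts of both $V$ and $\gf$ is exhausted precisely by $\BC^2_Q(\gf;V)$. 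One small point to tighten: when you say the $\gf^*$-valued correction on the $V$-lift ``does not affect $(\alpha,\beta)$'', you should also record that the isometry condition on the equivalence forces that correction to be $v\mapsto -\pair{\gamma(\cdot)}{v}_V$, tied to the same $\gamma$ appearing in the $\gf$-lift; this is what makes the symmetric part of $\delta''$ determined rather than free.
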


Thus in particular one can talk about \emph{the} trivial standard extension of a Lie algebra by an orthogonal module.

\printbibliography

\end{document}